  \theoremstyle{plain}
  \newtheorem{thm}{\protect\theoremname}
  \theoremstyle{plain}
  \newtheorem{prop}{\protect\propositionname}
  \theoremstyle{plain}
  \newtheorem{cor}{\protect\corollaryname}
  \theoremstyle{plain}
  \newtheorem{lem}{\protect\lemmaname}
   \newenvironment{proof}[1][\proofname]{\par
     \normalfont\topsep6\p@\@plus6\p@\relax
     \trivlist
     \itemindent\parindent
     \item[\hskip\labelsep
           \scshape
       #1]\ignorespaces
   }{%
     \endtrivlist\@endpefalse
   }
   \providecommand{\proofname}{Proof}
\providecommand{\corollaryname}{Corollary}
\providecommand{\lemmaname}{Lemma}
\providecommand{\propositionname}{Proposition}
\providecommand{\theoremname}{Theorem}
\begin{document}
\global\long\def\P{\mathcal{P}}
 \global\long\def\R{\mathcal{R}}
 \global\long\def\II{\mathcal{\mathcal{I}}}

\global\long\def\M{\bm{M}}
 \global\long\def\X{\bm{X}}
 \global\long\def\Y{\bm{Y}}
 \global\long\def\Z{\bm{Z}}
 \global\long\def\U{\bm{U}}
 \global\long\def\V{\bm{V}}
 \global\long\def\W{\bm{W}}
 \global\long\def\A{\bm{A}}

\global\long\def\bu{\bar{\bm{U}}}
\global\long\def\bv{\bar{\V}}
 \global\long\def\D{\bm{D}}
 \global\long\def\E{\bm{E}}
 \global\long\def\F{\bm{F}}
 \global\long\def\G{\bm{G}}
 \global\long\def\H{\bm{H}}
 \global\long\def\I{\bm{I}}
 \global\long\def\S{\bm{S}}
 \global\long\def\L{\bm{L}}

\global\long\def\e{\bm{e}}
 \global\long\def\x{\bm{x}}
 \global\long\def\y{\bm{y}}
 \global\long\def\z{\bm{z}}
 \global\long\def\u{\bm{u}}
 \global\long\def\v{\bm{v}}
 \global\long\def\w{\bm{w}}
 \global\long\def\s{\bm{s}}

\global\long\def\ei{\bm{\e}_{i}}
 \global\long\def\ej{\bm{e}_{j}}

\global\long\def\bDelta{\boldsymbol{\Delta}}

\title{Incoherence-Optimal Matrix Completion}

\author{Yudong Chen\\
Department of Electrical Engineering and Computer Sciences\\
The University of California, Berkeley\\
yudong.chen@eecs.berkeley.edu}

\maketitle
 
\begin{abstract}
This paper considers the matrix completion problem. We show that it
is not necessary to assume \emph{joint incoherence}, which is a standard
but unintuitive and restrictive condition that is imposed by previous
studies. This leads to a sample complexity bound that is order-wise
optimal with respect to the incoherence parameter (as well as to the
rank $r$ and the matrix dimension $n$ up to a log factor). As a
consequence, we improve the sample complexity of recovering a semidefinite
matrix from $O(nr^{2}\log^{2}n)$ to $O(nr\log^{2}n)$, and the highest
allowable rank from $\Theta(\sqrt{n}/\log n)$ to $\Theta(n/\log^{2}n)$.
The key step in proof is to obtain new bounds on the $\ell_{\infty,2}$-norm,
defined as the maximum of the row and column norms of a matrix. To
illustrate the applicability of our techniques, we discuss extensions
to SVD projection, structured matrix completion and semi-supervised
clustering, for which we provide order-wise improvements over existing
results. Finally, we turn to the closely-related problem of low-rank-plus-sparse
matrix decomposition. We show that the joint incoherence condition
is unavoidable here for polynomial-time algorithms conditioned on
the Planted Clique conjecture. This means it is intractable in general
to separate a rank-$\omega(\sqrt{n})$ positive semidefinite matrix
and a sparse matrix. Interestingly, our results show that the standard
and joint incoherence conditions are associated respectively with
the \emph{information} (statistical) and \emph{computational} aspects of the matrix
decomposition problem.
\end{abstract}

\section{Introduction}

The matrix completion problem concerns recovering a low-rank matrix
from an observed subset of its entries. Recent research~\cite{candes2010NearOptimalMC,Recht2009SimplerMC,gross2009anybasis,keshavan2009matrixafew,jain2013altMin}
has demonstrated the following remarkable fact: if a rank-$r$ $n\times n$
matrix satisfies certain incoherence properties, then it is possible
to exactly reconstruct the matrix with high probability from $nr\mbox{polylog}(n)\ll n^{2}$
uniformly sampled entries using efficient polynomial-time algorithms.

In previous work, the sample complexity $\Theta(nr\mbox{polylog}(n))$
is achieved only for matrices that satisfy two types of incoherence
conditions with constant parameters. The first condition, known as
\emph{standard incoherence}, is a natural and necessary requirement;
it prevents the information of the row and column spaces of the matrix
from being too concentrated in a few rows or columns. A second condition,
called \emph{joint incoherence} (or strong incoherence), is also needed.
It requires the left and right singular vectors of the matrix to be
unaligned with each other. This condition is quite unintuitive, and
does not seem to have a natural interpretation. As we demonstrate
later, this condition is often restrictive and precludes a large class
of otherwise well-conditioned matrices. For example, positive semidefinite
matrices have a non-constant joint incoherence parameter on the order
of $\Omega(r)$, and previous results thus require the number of observations
to be proportional to $nr^{2}$ instead of $nr$. In several applications
of matrix completion discussed later, the joint incoherence condition
leads to artificial and undesired constraints. In contrast, numerical
experiments suggest that this condition is not needed.

In this paper, we prove that \emph{the joint incoherence condition
is not necessary} and can be completely eliminated. With $\Omega(nr\log^{2}n)$
uniformly sampled entries, one can recover a matrix that satisfies
the standard incoherence condition (with a constant parameter) but
is not jointly incoherent (e.g., a positive semidefinite matrix).
As we show in Section~\ref{sec:main_strong}, our sample complexity
bounds are order-wise optimal with respect to not only the matrix
dimensions $n$ and $r$ but also to its incoherence parameters except
for a $\log n$ factor. As a consequence, we improve the sample complexity
of recovering a positive semidefinite matrix from $O(nr^{2}\log^{2}n)$
to $O(nr\log^{2}n)$, and the highest allowable rank from $\Theta(\sqrt{n}/\log n)$
to $\Theta(n/\log^{2}n)$.

Our results apply to the standard nuclear norm minimization approach
to matrix completion. The improvements are achieved by a new analysis
based on bounds involving the $\ell_{\infty,2}$ matrix norm, defined
as the maximum of the row and column norms of the matrix. This differs
from previous approaches that use $\ell_{\infty}$ bounds. We show
that this technique can be extended to obtain strong theoretical guarantees
for the following two problems:
\begin{enumerate}
\item the analysis of a Singular Value Decomposition (SVD) projection algorithm for matrix completion;
\item structured matrix completion and semi-supervised clustering with side
information.
\end{enumerate}
In both problems we achieve order-wise improvements over existing
results. We believe the $\ell_{\infty,2}$ norm is useful more broadly. For example, in the follow-up work~\cite{chen2013coherent},
a weighted version of the $\ell_{\infty,2}$ norm plays a crucial
role in the analysis of general low-rank matrices that violate the
standard incoherence condition.

Finally, we turn to the closely related problem of matrix decomposition,
where one is asked to recover a low-rank matrix and a sparse matrix
from their sum. We show that the joint incoherence condition is necessary
in this setting based on the computational complexity assumption of
the Planted Clique problem. In particular, any decomposition algorithm
that does not require the joint incoherence condition would solve
Planted Clique with clique size $o(\sqrt{n})$, a problem that has
been extensively studied and is widely believed to be intractable
in polynomial time. This implies that it is computationally hard in
general to separate a rank-$\omega(\sqrt{n})$ positive semidefinite
matrix and a sparse matrix. Interestingly, our results show that the
standard incoherence condition is inherently associated the \emph{information-theoretic}
(or \emph{statistical}) aspect of the problem, whereas the joint incoherence
condition reflects the \emph{computational} aspect.

\paragraph*{Related work}

We briefly survey existing related work; detailed comparisons with
these results are provided after we present our theorems. Matrix completion
is first studied in~\cite{candes2009exact}, which initiates the
use of the nuclear norm minimization approach. The work in \cite{candes2010NearOptimalMC,Recht2009SimplerMC,gross2009anybasis,keshavan2009matrixafew}
provides state-of-the-art theoretical guarantees on exact completion.
Alternative algorithms for matrix completion are considered in~\cite{jain2013altMin,keshavan2009matrixafew,cai2010singular}.
All these works require the joint incoherence condition (or 
a sample complexity that is at least quadratic in $r$). Our extensions to
SVD projection, structured matrix completion and semi-supervised clustering
are inspired by the work in~\cite{keshavan2009matrixafew,yi2013assisted};
we improve upon their results. The low-rank and sparse matrix decomposition
problem is considered first in~\cite{chandrasekaran2011siam,candes2009robustPCA}
and subsequently in~\cite{chen2011LSarxiv,li2013constantCorruption,agarwal2012decomposition,Hsu2010RobustDecomposition}.
The work in~\cite{candes2009robustPCA,chen2011LSarxiv,li2013constantCorruption}
prove the success of specific algorithms assuming the standard and
joint incoherence conditions. Our results show that these two incoherence
conditions are in fact necessary for \emph{all} algorithms, or all
polynomial-time algorithms, due to statistical and computational reasons.
The seminal work in~\cite{berthet2012sparsePCA,berthet2013colt_sparsePCA}
is the first to use the Planted Clique problem to establish statistical
limits under computational constraints; they consider the problem
of sparse Principal Component Analysis (PCA). A similar approach is
taken in~\cite{ma2013submatrix} for the submatrix detection problem.

\paragraph*{Organization}

In Section~\ref{sec:main_strong} we present our main result and
show that the joint incoherence condition is not needed in matrix
completion. We discuss extensions to SVD projection and structured
matrix completion in Section~\ref{sec:extension}. In Section~\ref{sec:LS}
we turn to the matrix decomposition problem and show that the joint
incoherence condition is unavoidable there. We prove our main theorem
in Section~\ref{sec:Proofs}, with some technical aspects of the
proofs deferred to the appendix. The paper is concluded with a discussion
in Section~\ref{sec:discussion}.

\paragraph*{Notation}

Lower case bold letters (e.g., $\z$) denote vectors, while capital
bold face letters (e.g., $\Z$) denote matrices. For a matrix $\Z$,
$Z_{ij}$ and $(\Z)_{ij}$ both denote its $(i,j)$-th entry. By \emph{with
high probability} (\emph{w.h.p.}) we mean with probability at least
$1-c_{1}(n_{1}+n_{2})^{-c_{2}}$ for some universal constants $c_{1},c_{2}>0$,
where $n_{1}$ and $n_{2}$ are dimensions of the low-rank matrix. $ \Vert \cdot \Vert_2 $ denotes the vector $ \ell_2 $ norm. $ \Vert \cdot \Vert_F $, $ \Vert \cdot \Vert_* $ and $ \Vert \cdot \Vert $ denote the Frobenius, nuclear and spectral norms for matrices, respectively.

\section{Main Results\label{sec:main_strong}}

We now define the matrix completion problem. Suppose $\M\in\mathbb{R}^{n_{1}\times n_{2}}$
is an unknown matrix with rank~$r$. For each $(i,j)$, $M_{ij}$
is observed with probability $p$ independent of all others.%
\footnote{\label{fn:sampling_model}This is known as the Bernoulli model~\cite{candes2009robustPCA}.
Other widely used models include the sampling with/without replacement
models~\cite{gross2009anybasis,candes2009exact,gross2010noreplacement,Recht2009SimplerMC}.
Recovery guarantees for one model can be easily translated to others
with only a change in constant factors~\cite{candes2010NearOptimalMC,gross2010noreplacement}.%
} Let $\Omega$ be the set of the indices of the observed entries.
The matrix completion problem asks for recovering $\M$ from the observations
$\left\{ M_{ij,}(i,j)\in\Omega\right\} $. The standard and arguably
the most popular approach to matrix completion is the nuclear norm
minimization method~\cite{candes2009exact}: 
\begin{equation}
\begin{split}\min_{X} & \quad\left\Vert \X\right\Vert _{*}\\
\mbox{s.t.} & \quad X_{ij}=M_{ij}\textrm{ for \ensuremath{(i,j)\in\Omega},}
\end{split}
\label{eq:cvx_opt}
\end{equation}
where $\left\Vert \X\right\Vert _{*}$ is the nuclear norm of the
matrix $\X$, defined as the sum of its singular values. Our goal
is to obtain sufficient conditions under which the optimal solution
to the problem~(\ref{eq:cvx_opt}) is unique and equal to $\M$ with
high probability.

It is observed in~\cite{candes2009exact} that if $\M$ is equal
to zero in nearly all of rows or columns, then it is impossible to
complete $\M$ unless all of its entries of are observed. To avoid
such pathological situations, it has become standard to assume $\M$
to have additional properties known as incoherence. Suppose the rank-$r$
SVD of $\M$ is $\U\mathbf{\Sigma}\V^{\top}$. $\M$ is said to satisfy
the \emph{standard incoherence} condition with parameter $\mu_{0}$
if 
\begin{align}
\begin{split}\max_{1\le i\le n_{1}}\left\Vert \U^{\top}\ei\right\Vert _{2} & \le\sqrt{\frac{\mu_{0}r}{n_{1}}},\\
\max_{1\le j\le n_{2}}\left\Vert \V^{\top}\ej\right\Vert _{2} & \le\sqrt{\frac{\mu_{0}r}{n_{2}}},
\end{split}
\label{eq:incoherence}
\end{align}
where $\ei$ are the $i$-th standard basis with appropriate dimension.
Note that $1\le\mu_{0}\le \frac{\min \{n_{1},n_{2}\}}{r}$. Previous work also
requires $\M$ to satisfy an additional \emph{joint incoherence} (or
\emph{strong incoherence}) condition with parameter $\mu_{1}$, defined
as 
\begin{equation}
\max_{i,j}\left|\left(\U\V^{\top}\right)_{ij}\right|\le\sqrt{\frac{\mu_{1}r}{n_{1}n_{2}}}.\label{eq:strong_incoherence}
\end{equation}
Under these two conditions, existing results require $p\gtrsim\max\{\mu_{0},\mu_{1}\}r\textrm{polylog}(n)/n$
to recover $\M\in\mathbb{R}^{n\times n}$. If we let $\mu_{0}$ and
$\mu_{1}$ to be the smallest numbers that satisfy~(\ref{eq:incoherence})
and~(\ref{eq:strong_incoherence}), then we have $\mu_{1}\ge\mu_{0}$
as can be seen from the relations $\sum_{i}\left(\U\V^{\top}\right)_{ij}^{2}=\left\Vert \V^{\top}\ej\right\Vert _{2}^{2}$
and $\sum_{j}\left(\U\V^{\top}\right)_{ij}^{2}=\left\Vert \U^{\top}\ei\right\Vert _{2}^{2}$.
Therefore, the joint incoherence parameter $\mu_{1}$ is the dominant
factor in these previous bounds. As will be discussed in Section~\ref{sub:StrIncoh},
while the standard incoherence~(\ref{eq:incoherence}) is a natural
condition, the joint incoherence condition~(\ref{eq:strong_incoherence})
is restrictive and unintuitive. In several important settings, $\mu_{1}$
is as large as $\mu_{0}^{2}r$, so previous results require $O(nr^{2}\textrm{polylog}(n))$
observations even if $\mu_{0}=O(1)$.

In the following main theorem of the paper, we show that the joint
incoherence is not necessary. The theorem only requires the standard
incoherence condition.
\begin{thm}
\label{thm:uniform}Suppose $\M$ satisfies the standard incoherence
condition \eqref{eq:incoherence} with parameter $\mu_{0}$. There
exist universal constants $c_{0},c_{1},c_{2}>0$ such that if 
\[
p\ge c_{0}\frac{\mu_{0}r\log^{2}(n_{1}+n_{2})}{\min\{n_{1},n_{2}\}},
\]
then $\M$ is the unique optimal solution to~\eqref{eq:cvx_opt}
with probability at least $1-c_{1}(n_{1}+n_{2})^{-c_{2}}$ . 
\end{thm}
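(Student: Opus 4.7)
The plan is to prove the theorem by constructing an approximate dual certificate for the nuclear norm minimization~\eqref{eq:cvx_opt} via Gross's \emph{golfing scheme}, but to carry out the iterative error analysis in the $\ell_{\infty,2}$ norm rather than the $\ell_{\infty}$ norm used in prior work. Let $T$ denote the tangent space to the rank-$r$ matrix manifold at $\M$, let $\P_{T}$ be the orthogonal projection onto $T$, and write $\R_{\Omega}:=\frac{1}{p}\P_{\Omega}-\II$ for the centered sampling operator. A standard convex-analysis argument reduces the theorem to two conditions: \textbf{(a)} $\|\P_{T}\R_{\Omega}\P_{T}\|_{\mathrm{op}}\le\tfrac{1}{2}$ as a map on $T$, guaranteeing injectivity of $\P_{\Omega}$ on $T$; and \textbf{(b)} the existence of a matrix $\Y$ supported on $\Omega$ with $\|\P_{T}\Y-\U\V^{\top}\|_{F}$ very small and $\|\P_{T^{\perp}}\Y\|<\tfrac{1}{2}$.

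For \textbf{(b)}, I would partition $\Omega$ into $k_{0}=\Theta(\log(n_{1}+n_{2}))$ independent batches $\Omega_{1},\dots,\Omega_{k_{0}}$, each sampled with rate $q\asymp p/k_{0}$, and iteratively define $\W_{0}=\U\V^{\top}$, $\W_{k}=\W_{k-1}-\P_{T}\tfrac{1}{q}\P_{\Omega_{k}}\W_{k-1}$, and dual-certificate candidate $\Y=\sum_{k=1}^{k_{0}}\tfrac{1}{q}\P_{\Omega_{k}}\W_{k-1}$. By construction $\W_{k_{0}}=\U\V^{\top}-\P_{T}\Y$, so condition~\textbf{(b)} reduces to showing that each iteration contracts $\W_{k}$ in a suitable norm and that $\|\P_{T^{\perp}}\Y\|$ remains small. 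The key departure from prior work is to track $\|\W_{k}\|_{\infty,2}$: standard incoherence immediately yields $\|\U\V^{\top}\|_{\infty,2}\le\sqrt{\mu_{0}r/\min\{n_{1},n_{2}\}}$ with no joint incoherence needed, and geometric decay of $\|\W_{k}\|_{\infty,2}$ feeds directly into the spectral-norm bound on $\P_{T^{\perp}}\Y$.

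The two technical workhorses are: \textbf{(A)} an $\ell_{\infty,2}$-contraction lemma of the form $\|(\P_{T}-\P_{T}\tfrac{1}{q}\P_{\Omega}\P_{T})\Z\|_{\infty,2}\le\tfrac{1}{2}\|\Z\|_{\infty,2}$ w.h.p.\ for any fixed $\Z\in T$, proved by a matrix Bernstein inequality applied row-by-row using standard incoherence to control the row and column variances of $\P_{T}$; and \textbf{(B)} a spectral-norm bound of the form
\[
\|\R_{\Omega_{k}}\W_{k-1}\|\;\lesssim\;\sqrt{\frac{\max\{n_{1},n_{2}\}\log(n_{1}+n_{2})}{q}}\,\|\W_{k-1}\|_{\infty,2}+\frac{\log(n_{1}+n_{2})}{q}\,\|\W_{k-1}\|_{\infty},
\]
which replaces the classical Bernstein estimate whose dominant term depends only on $\|\W_{k-1}\|_{\infty}$. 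Summing over the $k_{0}$ batches via the triangle inequality and the geometric decay of $\|\W_{k}\|_{\infty,2}$ then gives $\|\P_{T^{\perp}}\Y\|<\tfrac{1}{2}$. Condition~\textbf{(a)} follows from the usual operator-norm estimate on $\P_{T}\R_{\Omega}\P_{T}$, which uses only standard incoherence.

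I expect the main obstacle to be establishing the spectral-norm bound \textbf{(B)} with $\|\W_{k-1}\|_{\infty,2}$ on the dominant term rather than $\sqrt{\max\{n_{1},n_{2}\}}\,\|\W_{k-1}\|_{\infty}$. The classical matrix Bernstein argument naturally produces an $\ell_{\infty}$ dependence because it sums rank-one terms $(\tfrac{\delta_{ij}}{q}-1)Z_{ij}\,\e_{i}\e_{j}^{\top}$ whose individual spectral norms are $|Z_{ij}|/q$. To obtain an $\ell_{\infty,2}$ dependence instead, I would group the sum row-by-row, apply a vector Bernstein inequality within each row to produce a bound in terms of the row $\ell_{2}$-norm, and then aggregate the independent row contributions via a second matrix concentration step (e.g.\ using dilation and matrix Bernstein). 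A secondary difficulty is to keep $\|\W_{k}\|_{\infty}$ small in parallel, so that the second term in \textbf{(B)} remains negligible; this should follow by tracking $\|\W_{k}\|_{\infty}$ through the same iterative scheme and using that $\P_{T}$ is well-conditioned with respect to both norms under standard incoherence alone.
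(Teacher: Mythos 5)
Your overall plan is the paper's strategy in outline: golfing-scheme dual certificate, approximate isometry on $T$ for injectivity, and an iterative error analysis tracked in the $\ell_{\infty,2}$ norm. But your key spectral-norm bound \textbf{(B)} has a spurious factor of $\sqrt{\max\{n_1,n_2\}}$ in front of the $\|\W_{k-1}\|_{\infty,2}$ term, and this error is fatal. Feeding $\|\W_{k-1}\|_{\infty,2}\lesssim 2^{-(k-1)}\sqrt{\mu_0 r/\min\{n_1,n_2\}}$ (standard incoherence) and $q\asymp p/\log(n_1+n_2)$ into your (B) and summing the geometric series gives
\[
\left\Vert \P_{T^{\bot}}\Y\right\Vert \;\lesssim\;\sqrt{\frac{\mu_{0}r\log(n_{1}+n_{2})}{q}}\;\asymp\;\sqrt{\min\{n_{1},n_{2}\}},
\]
not the required $O(1)$. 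The correct inequality, Lemma~\ref{lem:op_inf} in the paper, carries no dimension factor at all:
\[
\left\Vert \left(\tfrac{1}{q}\P_{\Omega_{k}}-\II\right)\Z\right\Vert \le c\left(\frac{\log n}{q}\left\Vert \Z\right\Vert _{\infty}+\sqrt{\frac{\log n}{q}}\left\Vert \Z\right\Vert _{\infty,2}\right).
\]

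The error traces back to your claim that ``the classical matrix Bernstein argument naturally produces an $\ell_\infty$ dependence,'' which motivates your row-by-row vector Bernstein plus aggregation plan. This is a misconception, and the proposed route would in fact reproduce the spurious $\sqrt{n}$: a uniform bound $b$ on all row $\ell_2$-norms only controls the spectral norm up to $\sqrt{n}\,b$, so aggregating row-wise estimates without exploiting cross-row independence loses exactly the factor you have. The actual crux of the paper's improvement is that a \emph{single direct} application of matrix Bernstein to $\sum_{i,j}(\gamma_{ij}/p-1)Z_{ij}\ei\ej^{\top}$ already yields the $\ell_{\infty,2}$ dependence through the \emph{variance} term, since
\[
\mathbb{E}\sum_{i,j}\S_{(ij)}^{\top}\S_{(ij)}=\frac{1-p}{p}\sum_{i,j}Z_{ij}^{2}\,\ej\ej^{\top}
\]
is diagonal with spectral norm $\le\frac{1}{p}\|\Z\|_{\infty,2}^{2}$, and symmetrically for $\sum\S_{(ij)}\S_{(ij)}^{\top}$. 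The $\ell_\infty$ norm enters only in the (subdominant) uniform-boundedness term $B=\|\Z\|_\infty/p$. No row-by-row reduction, vector Bernstein, or dilation is required; discovering this is essentially the whole game.

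Two secondary points. First, your contraction \textbf{(A)} is stated as a pure $\ell_{\infty,2}$ contraction, but that is not provable by matrix Bernstein for a universal $c_{0}$: the boundedness term in that argument is proportional to $\frac{1}{p}\sqrt{\mu_{0}r/n}\,\|\Z\|_{\infty}$, which under the sampling rate is of order $\sqrt{n/(\mu_{0}r)}\,\|\Z\|_{\infty}/c_{0}$, and bounding $\|\Z\|_{\infty}\le\|\Z\|_{\infty,2}$ does not kill the $\sqrt{n/(\mu_0 r)}$ factor. The paper instead proves a coupled estimate (Lemma~\ref{lem:inf_2}), $\|(\P_{T}\R_{\Omega}-\P_{T})\Z\|_{\infty,2}\le\frac{1}{2}\sqrt{n/(\mu_{0}r)}\,\|\Z\|_{\infty}+\frac{1}{2}\|\Z\|_{\infty,2}$, and tracks $\|\D_{k}\|_{\infty}$ separately through Lemma~\ref{lem:inf}; you gesture at tracking $\|\W_{k}\|_{\infty}$ ``in parallel,'' but the coupling in (A) itself must be made explicit for the recursion to close. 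Second, you assert condition~(a) ``follows from the usual operator-norm estimate''; that is correct (it is Lemma~\ref{lem:op}, which requires only standard incoherence), but note the paper's Proposition~\ref{prop:opt_cond} also needs the condition $p\ge 1/n$ and a small extra argument (Lemma~\ref{lem:mini_lemma}) to convert the Frobenius-norm dual-feasibility slack into strict optimality.
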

We provide comments and discussion in the next two sub-sections.

\subsection{Optimality of Theorem~\ref{thm:uniform}}

Candes and Tao~\cite{candes2010NearOptimalMC} prove the following
\emph{lower-bound} on the sample complexity of matrix completion. 
\begin{prop}
[\cite{candes2010NearOptimalMC}, Theorem 1.7]\label{prop:lower_bound}Suppose
$n_{1}=n_{2}=n$ and $\Omega$ is sampled as above. If we do \emph{not}
have the condition 
\[
p\ge\frac{1}{2}\frac{\mu_{0}r\log(2n)}{n},
\]
and the RHS above is less than $1$, then with probability at least
$\frac{1}{4}$, there exist infinitely many pairs of distinct matrices
$\M^{'}\neq\M^{''}$ of rank at most $r$ and obeying the standard
incoherence condition~\eqref{eq:incoherence} with parameter $\mu_{0}$
such that $M{}_{ij}^{'}=M_{ij}^{''}$ for all $(i,j)\in\Omega$. 
\end{prop}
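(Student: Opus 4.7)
The strategy is a classical coupon-collector-plus-wildcard-row argument. The plan has two stages: first, show that when $p$ lies below the stated threshold, with probability at least $1/4$ some row of $\Omega$ is ``under-sampled'' in the sense that it contains strictly fewer than $r$ observations; second, for that row, construct an explicit one-parameter family of rank-$r$, $\mu_{0}$-incoherent matrices agreeing with one another on $\Omega$.

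\emph{Ambiguity via wildcard row.} Fix a canonical $\M = \U\mathbf{\Sigma}\V^{\top}$ of rank $r$ satisfying \eqref{eq:incoherence} with strict slack. Suppose row $i^{\star}$ has observation set $S \subseteq [n]$ with $|S| \le r - 1$. Because $\V_{S} \in \mathbb{R}^{|S| \times r}$ has rank at most $|S| < r$, there exists $\bm{a} \in \mathbb{R}^{r} \setminus \{\bm{0}\}$ with $\V_{S}\bm{a} = 0$. Set $\w := \V\bm{a}$ and consider
\[
\M_{t} \;:=\; \M + t\,\e_{i^{\star}} \w^{\top}, \qquad t \in (-\varepsilon, \varepsilon).
\]
Then: (i) $\M_{t} - \M$ is supported on row $i^{\star}$ and vanishes on $S$ (since $w_{j} = (\V_{S}\bm{a})_{j} = 0$ for $j \in S$), so $\M_{t}$ matches $\M$ on $\Omega$; (ii) the factorization $\M_{t} = (\U\mathbf{\Sigma} + t\,\e_{i^{\star}}\bm{a}^{\top})\V^{\top}$ shows $\mathrm{rank}(\M_{t}) \le r$ with row space still equal to $\mathrm{span}(\V)$, so the right half of \eqref{eq:incoherence} is preserved exactly; (iii) the left singular factor undergoes a rank-one row perturbation, and by continuity of the SVD together with the slack in $\M$, the left half of \eqref{eq:incoherence} also holds for $|t| < \varepsilon$. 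Distinct values of $t$ yield distinct $\M_{t}$, producing infinitely many ambiguous pairs.

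\emph{Under-sampled row via coupon collecting.} Under the Bernoulli model the row counts $N_{i} := |\{j : (i,j) \in \Omega\}| \sim \mathrm{Binomial}(n,p)$ are mutually independent, so
\[
\Pr\bigl[\exists\, i : N_{i} \le r-1\bigr] \;=\; 1 - (1-q)^{n}, \qquad q := \Pr\bigl[N_{1} \le r - 1\bigr].
\]
It therefore suffices to show $nq \gtrsim 1$ under $p < \tfrac{\mu_{0}r\log(2n)}{2n}$. In the low-density regime $\mu_{0}r = O(1)$, the cruder event $\{N_{i} = 0\}$ already does the job via $(1-p)^{n} \ge e^{-2pn} \ge (2n)^{-\mu_{0}r}$. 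For larger $\mu_{0}r$, one lower-bounds $q$ by $\binom{n}{r-1} p^{r-1}(1-p)^{n-r+1}$ using Stirling, and/or runs the same argument symmetrically over columns and takes a union, delivering $q \ge c/n$ across the parameter range allowed by the auxiliary hypothesis $\tfrac{\mu_{0}r\log(2n)}{2n} < 1$.

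\emph{Main obstacle.} The technical heart is the probability step at the boundary of the allowed range. When $\mu_{0}r\log(2n)$ approaches $2n$, the naive row-wise coupon-collector bound degrades, and a tight analysis must exploit the joint freedom from multiple simultaneously under-sampled rows (or the corresponding event on columns), for instance through a second-moment estimate on $\sum_{i}\mathbf{1}\{N_{i} < r\}$. The incoherence and rank checks in the construction of $\M_{t}$ are otherwise routine rank-one perturbation calculations, made quantitative by taking $\M$ to satisfy \eqref{eq:incoherence} with parameter $\mu_{0}/2$ and scaling $\bm{a}$ to have small norm so that the perturbation interval $(-\varepsilon,\varepsilon)$ is nontrivial.
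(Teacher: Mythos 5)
The paper does not actually prove this statement; it is imported verbatim from Cand\`es--Tao (their Theorem~1.7), so there is no internal proof to compare against. Evaluating your argument on its own terms, the perturbation half is fine (generically $\M_t = \M + t\,\e_{i^\star}\w^\top$ keeps the row space and the rank, and a small enough $t$ keeps the left factor $\mu_0$-incoherent if $\M$ starts with slack), but the probability half has a genuine gap that the proposed second-moment patch cannot repair.

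The claim that \emph{some row receives fewer than $r$ observations with probability at least $1/4$} is simply false across most of the allowed parameter range. If $p$ is just below the threshold $p_0 = \tfrac{\mu_0 r\log(2n)}{2n}$, then each row degree $N_i\sim\mathrm{Bin}(n,p)$ has mean $np\approx \tfrac{\mu_0 r\log(2n)}{2}$, which exceeds $r$ by a factor $\tfrac{\mu_0\log(2n)}{2}\gg 1$. A Chernoff lower-tail bound then gives $q:=\Pr[N_1<r]\lesssim \exp\bigl(-c\,r\log(2n)\bigr)=(2n)^{-cr}$, so $nq\to 0$ as soon as $r$ exceeds a small constant. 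Concretely, take $\mu_0=1$, $r=10$, $n=10^6$, $p$ just under $p_0$: then $np\approx 58$ and $q=\Pr[\mathrm{Bin}(n,p)<10]$ is of order $10^{-13}$, so $nq\approx 10^{-7}$ and there are \emph{no} under-sampled rows with overwhelming probability. This is not a boundary/variance issue: the \emph{expected} number of under-sampled rows is already $\ll 1$, so no second-moment argument on $\sum_i\mathbf{1}\{N_i<r\}$ can produce a constant lower bound on the existence probability.

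The correct event (following Cand\`es--Tao) is built around the \emph{hard instance}, not a generic $\M$. Take singular vectors concentrated on disjoint blocks $I_a$ of size $B=n/(\mu_0 r)$ (flat on each block), which saturates the incoherence constraint. The bad event is that $\Omega$ misses an entire rectangle $I_a\times\{j\}$ of size $B\times 1$, which allows the perturbation $\M_t=\M+t\,\u_a\e_j^\top$. Each such rectangle is missed with probability $(1-p)^{B}\ge e^{-2pn/(\mu_0 r)}\ge (2n)^{-1}$ under the hypothesis $p<p_0$, and there are $\sim rn$ disjoint rectangles, so the expected number of empty rectangles is $\gtrsim r/2\ge 1/2$ and independence gives the $\Omega(1)$ existence probability. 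Your ``under-sampled row'' event replaces the $B\times 1$ rectangle by an $1\times n$ row with $<r$ hits, which has the wrong scaling: the rectangle is tuned to $\mu_0 r$, the row is not.
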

This shows that that $p\gtrsim\mu_{0}r\log(n)/n$ is necessary for
any method to determine $\M$ (even if one knows $r$ and $\mu_{0}$
ahead of time). With an additional $c'\log(n)$ factor, Theorem~\ref{thm:uniform}
matches this lower bound. In particular, it is optimal in terms of
its scaling with the incoherence parameter\emph{ $\mu_{0}$.}

We note that the condition in Proposition~\ref{prop:lower_bound}
is an \emph{information/statistical} lower-bound: when the value of
$p$ is below this bound, there is not enough information in the observed
entries to uniquely determine an rank-$r$, $\mu_{0}$-incoherent
matrix even if one has infinite computational power. In Section~\ref{sec:LS},
we show that in the closely related problem of matrix decomposition,
the incoherence parameters are associated with both \emph{information}
and \emph{computational} lower bounds.

\subsection{Consequences and Comparison with Prior Work\label{sub:StrIncoh}}

The previous best result for exact matrix completion is given in~\cite{Recht2009SimplerMC,gross2009anybasis}.
They show that $\M\in\mathbb{R}^{n\times n}$ can be recovered by
the nuclear minimization approach if the sampling probabilities satisfy
\[
p\gtrsim\frac{\max\left\{ \mu_{0},\mu_{1}\right\} r\log^{2}n}{n}.
\]
Using an alternative algorithm, Keshavan \textit{et al}.~\cite{keshavan2009matrixafew}
show that recovery can be achieved with 
\[
p\gtrsim\frac{\max\left\{ \mu_{0}r\log n,\mu_{1}^{2}r^{2}\right\} }{n}
\]
Similar results are given in~\cite{jain2013altMin}, which also requires
the sample complexity to be proportional to $\mu_{1}$ (or equivalently,
quadratic in $r$). In light of Proposition~\ref{prop:lower_bound},
these results are not optimal with respect to the incoherence parameters
due to the dependence on the joint incoherence $\mu_{1}$. Theorem~\ref{thm:uniform}
eliminates this extra dependence.

The improvement in Theorem~\ref{thm:uniform} is significant both
qualitatively and quantitatively. The standard incoherence condition
(\ref{eq:incoherence}) is natural and necessary. A small standard
incoherence parameter $\mu_{0}$ ensures that the information of the
row and column spaces of $\M$ is not too concentrated on a small
number of rows/columns. In contrast, the joint incoherence assumption
(\ref{eq:strong_incoherence}), which requires the matrices $\U$
and $\V$ containing the left and right singular vectors to be ``unaligned''
with each other, does not have a natural explanation. In applications,
the quantity $\mu_{0}$ often has clear physical meanings while $\mu_{1}$
does not. For example, in the application to recovering the affinity
matrices between clustered objects from partial observations~\cite{wu2013rating,yi2013assisted}
(discussed in Section~\ref{sec:extension}), $\mu_{0}$ is a function
of the minimum cluster size, but a bound on $\mu_{1}$ bears no natural
motivation. As another example, the work in \cite{chen2013robustSpectralMCicml}
uses Hankel matrix completion to recover spectrally sparse
signals obeying two types of conditions. The first condition can be
traced to standard incoherence and is equivalent to (the natural requirement
of) the supporting frequencies being spread out. On the other hand,
the second set of conditions, which resemble joint incoherence, cannot
be reduced to a property of only the frequencies. The manuscript~\cite{chen2013robustSpectralMC}, which appeared after this paper was posted online~\cite{chen2013incoherence_arxiv}, removes these second set of conditions using similar techniques as in Theorem~\ref{thm:uniform}.

Quantitatively, the joint incoherence condition is much more restrictive
than standard incoherence. By Cauchy-Schwarz inequality we always
have $\mu_{1}r\le\mu_{0}^{2}r^{2}$. The equality 
\[
\mu_{1}r=\mu_{0}^{2}r^{2}
\]
holds in the important setting where the matrix $\M\in\mathbb{R}^{n\times n}$
is positive semidefinite (psd) and thus has $\U=\V$. In this case,
applying previous guarantees would require $p\gtrsim\frac{\mu_{0}^{2}r^{2}\log^{2}n}{n}$.
This translates to $\mu_{0}r$ times more observations than guaranteed
by Theorem~\ref{thm:uniform}. In particular, these previous bounds
are quadratic in $r$, and thus they require $r=o(\sqrt{n})$ regardless
of $p$. This is clearly unnecessary, since any matrix can be completed
regardless of its rank given a sufficiently large $p$. We verify
this fact by simulation. We construct $\M$ as a $0-1$ block diagonal
matrix with $r$ diagonal blocks of size $\frac{n}{r}\times\frac{n}{r}$.
It is easy to see that $\M$ is positive semidefinite with $\mu_{0}=1$
and $\mu_{1}=r$. Figure~\ref{fig:linear} shows the minimum values
of $p$ needed to recover $\M$ for different $r$ in the simulation.
It can be seen that $p$ indeed scales linearly in $r$ as predicted
by Theorem~\ref{thm:uniform}. In particular, we recover PSD matrices
with rank well over $\sqrt{n}$, which would not be possible if the
joint incoherence condition were necessary.

\begin{figure}
\begin{centering}
\includegraphics[scale=0.7]{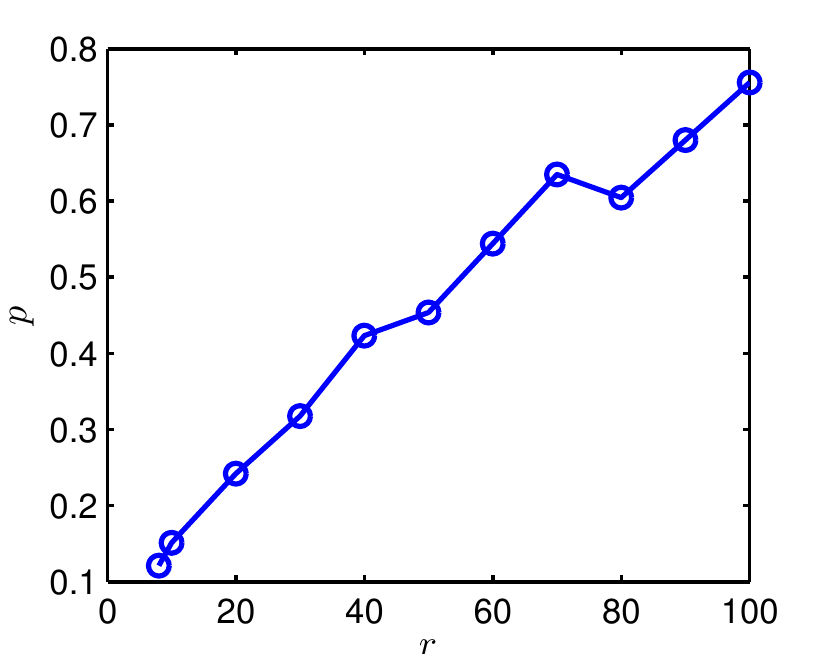} 
\par\end{centering}

\caption{{\small{{\label{fig:linear}The minimum observation probability $p$
for recovering a $900\times900$ rank-$r$ matrix with $\mu_{0}=1$
and $\mu_{1}=r$. We use the IALM method in~\cite{Lin2009_ALM} to
solve the nuclear minimization problem~(\ref{eq:cvx_opt}). For each
$r$ and $p$, we run the simulation for $20$ trials. The $Y$-axis
shows the smallest value of $p$ for which the normalized recovery
error $\left\Vert \hat{\M}-\M\right\Vert _{F}/\left\Vert \M\right\Vert _{F}$
is smaller than $10^{-4}$ in at least $19$ trials. }}}}
\end{figure}

\section{Extensions\label{sec:extension}}

As we mention in the introduction, the proof of Theorem~\ref{thm:uniform}
crucially relies on the use of the matrix $\ell_{\infty,2}$-norm.
In this section, we present two extensions of this idea to the analysis
of an SVD-projection algorithm, and to structured matrix completion
and semi-supervised clustering.

\subsection{Error Bound for SVD Projection}
\label{sec:SVD_proj}

Our first example is the derivation of error bounds for an SVD-projection
algorithm for matrix completion. Let $\P_{\Omega}\M$ be the matrix
obtained from $\M$ by setting all the unobserved entries to zero.
Given the partial observation $\P_{\Omega}\M$, Keshavan \textit{et
al.} \cite{keshavan2009matrixafew} propose the following two-step
algorithm for approximating $\M$. Step 1: Set to zero all columns
and rows in $\P_{\Omega}\M$ with degrees larger that $2pn$, where
the degree of a column or row is the number of non-zero entries of
this column/row. Let $\widetilde{\M}^{\Omega}$ be the output. Step
2: Compute the SVD of $\widetilde{\M}^{\Omega}$ 
\[
\widetilde{\M}^{\Omega}=\sum_{i=1}^{n}\tilde{\sigma}_{i}\tilde{\u}_{i}\tilde{\v}_{j}^{\top}
\]
and return the re-scaled rank-$r$ projection $\textsf{T}_{r}\left(\widetilde{\M}^{\Omega}\right):=\frac{1}{p}\sum_{i=1}^{r}\tilde{\sigma}_{i}\tilde{\u}_{i}\tilde{\v}_{j}$.
Theorem 1.1 in~\cite{keshavan2009matrixafew} provides the following
bound on the approximation error 
\begin{equation}
\left\Vert \M-\textsf{T}_{r}(\widetilde{\M}^{\Omega})\right\Vert _{F}\le c\sqrt{\frac{rn}{p}}\left\Vert \M\right\Vert _{\infty},\quad\mbox{w.h.p.},\label{eq:KMO}
\end{equation}
where $\left\Vert \M\right\Vert _{\infty}:=\max_{i,j}\left|\M_{ij}\right|$
is the matrix $ \ell_\infty $ norm.
This result is proved using a combination of tools from measure
concentration and random graph theory.

As a simple corollary of our Lemma~\ref{lem:op_inf} in Section~\ref{sec:Proofs},
we obtain a new error bound stated in terms of $\left\Vert \M\right\Vert _{\infty}$
and $\left\Vert \M\right\Vert _{\infty,2}$, where $\left\Vert \M\right\Vert _{\infty,2}$ is the matrix $\ell_{\infty,2}$
norm of $ \M $, defined as 
\begin{equation}
\label{eq:Linf2}
\left\Vert \Z\right\Vert _{\infty,2}:=\max\left\{ \max_{i}\sqrt{\sum_{b}\Z_{ib}^{2}},\; \max_{j}\sqrt{\sum_{a}\Z_{aj}^{2}}\right\};
\end{equation}
that is, $\left\Vert \M\right\Vert _{\infty,2}$ is the maximum of the
row and column norms of $\M$. 
\begin{cor}
\label{cor:SVDproj}Suppose $p\ge c_{0}\frac{\log n}{n}$. With high
probability, we have 
\[
\left\Vert \M-\textsf{T}_{r}(\widetilde{\M}^{\Omega})\right\Vert _{F}\le c'\left(\frac{\sqrt{r}\log n}{p}\left\Vert \M\right\Vert _{\infty}+\sqrt{\frac{r\log n}{p}}\left\Vert \M\right\Vert _{\infty,2}\right).
\]
\end{cor}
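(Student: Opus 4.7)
The plan is to reduce the Frobenius-norm error to a spectral-norm error on the unprojected estimate $\frac{1}{p}\widetilde{\M}^{\Omega}$, and then invoke Lemma~\ref{lem:op_inf} to control that spectral norm by $\|\M\|_{\infty}$ and $\|\M\|_{\infty,2}$. Since $\M$ has rank $r$ and $\textsf{T}_{r}(\widetilde{\M}^{\Omega})$ has rank at most $r$, their difference has rank at most $2r$, so
\[
\bigl\|\M-\textsf{T}_{r}(\widetilde{\M}^{\Omega})\bigr\|_{F}\le\sqrt{2r}\,\bigl\|\M-\textsf{T}_{r}(\widetilde{\M}^{\Omega})\bigr\|.
\]
By the triangle inequality, the spectral norm on the right is at most $\|\M-\tfrac{1}{p}\widetilde{\M}^{\Omega}\|+\|\tfrac{1}{p}\widetilde{\M}^{\Omega}-\textsf{T}_{r}(\widetilde{\M}^{\Omega})\|$; the second term equals $\sigma_{r+1}\bigl(\tfrac{1}{p}\widetilde{\M}^{\Omega}\bigr)$ which, by Weyl's inequality together with $\sigma_{r+1}(\M)=0$, is itself bounded by $\|\M-\tfrac{1}{p}\widetilde{\M}^{\Omega}\|$. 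This gives the clean reduction
\[
\bigl\|\M-\textsf{T}_{r}(\widetilde{\M}^{\Omega})\bigr\|_{F}\le 2\sqrt{2r}\,\bigl\|\M-\tfrac{1}{p}\widetilde{\M}^{\Omega}\bigr\|.
\]

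Next I would remove the trimming from the picture. Under the hypothesis $p\ge c_{0}\tfrac{\log n}{n}$, the expected degree $pn$ of each row and column of $\P_{\Omega}\M$ exceeds a large multiple of $\log n$, and a standard Chernoff bound followed by a union bound over the $n_{1}+n_{2}$ rows and columns shows that with high probability no row or column has degree above $2pn$. Consequently, no trimming occurs, and $\widetilde{\M}^{\Omega}=\P_{\Omega}\M$ on this event.

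It then suffices to bound $\|\M-\tfrac{1}{p}\P_{\Omega}\M\|$, which is precisely the quantity that Lemma~\ref{lem:op_inf} is designed to control: it is a matrix Bernstein-style estimate where $\|\M\|_{\infty}$ governs the almost-sure size of each summand $(\delta_{ij}-p)M_{ij}\e_{i}\e_{j}^{\top}$, while $\|\M\|_{\infty,2}^{2}$ controls the operator norms of the matrix variance proxies $\sum_{ij}\mathbb{E}[\cdot\,\cdot^{\top}]$ and $\sum_{ij}\mathbb{E}[\cdot^{\top}\,\cdot]$, which are diagonal with entries proportional to the row/column squared-norms of $\M$. The resulting bound takes the form $\sqrt{\log n/p}\,\|\M\|_{\infty,2}+(\log n/p)\,\|\M\|_{\infty}$, and multiplying by the $\sqrt{r}$ factor from the reduction above yields exactly the claimed inequality.

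The main obstacle is really packaged inside Lemma~\ref{lem:op_inf} itself; once that ingredient is available, the corollary follows from the two elementary reductions above. The improvement over the earlier bound \eqref{eq:KMO} of Keshavan \emph{et al.}, which uses only $\|\M\|_{\infty}$, comes precisely from replacing the crude entrywise variance proxy with the sharper row/column proxy $\|\M\|_{\infty,2}$ in the matrix Bernstein inequality.
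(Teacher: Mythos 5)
Your proposal is correct and follows essentially the same route as the paper's own proof of Corollary~\ref{cor:SVDproj}: show no trimming occurs so that $\widetilde{\M}^{\Omega}=\P_{\Omega}\M$, apply Lemma~\ref{lem:op_inf} to bound $\|\tfrac{1}{p}\P_{\Omega}\M-\M\|$, use Weyl's inequality plus the triangle inequality to control the rank-$r$ truncation, and convert to Frobenius norm via the rank bound. (You correctly note the rank of the difference is at most $2r$, while the paper states $r$; this only changes the absorbed constant.)
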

We prove the corollary in Appendix~\ref{sec:proof_SVDproj}. The corollary improves upon~(\ref{eq:KMO}) whenever $p\gtrsim\log^{2}n/n$
and $\left\Vert \M\right\Vert _{\infty,2}<\sqrt{\frac{n}{\log n}}\left\Vert \M\right\Vert _{\infty}$.
Note that for a general matrix $\M$, $\left\Vert \M\right\Vert _{\infty,2}$
is always no more than $\sqrt{n}\left\Vert \M\right\Vert _{\infty}$,
and can be much smaller. An example of such a matrix is an affinity
matrix with a block-diagonal structure, which is discussed in the
next sub-section. Here the improvement is again due to using the $\ell_{\infty,2}$ norm.

\subsection{Structured Matrix Completion and Semi-Supervised Clustering}

We next consider the extension to the \emph{structured matrix completion}
problem. In several applications of matrix completion including semi-supervised
clustering~\cite{yi2013assisted} (which we shall discuss in more
details) and multi-label learning~\cite{xu2013sideinfo}, one has
access to additional side information about the column/row spaces
of the low-rank matrix $\M$. In particular, one aims to complete
an unknown rank-$r$ matrix $\M=\U\mathbf{\Sigma}\V^{\top}\in\mathbb{R}^{n\times n}$
from the partial observations $\P_{\Omega}\M$, but is given the structural
information that the column (row, resp.) space of $\M$ lie in a known
$\bar{r}$-dimensional subspace of $\mathbb{R}^{n}$ spanned by the
columns of $\bar{\U}\in\mathbb{R}^{n\times\bar{r}}$ ($\bar{\V}\in\mathbb{R}^{n\times\bar{r}}$,
resp.); here $\bar{r}$ may be smaller than the ambient dimension
$n$. In other words, we know $\text{col}(\U)\subseteq\text{col}(\bar{\U})$
and $\text{col}(\V)\subseteq\text{col}(\bar{\V})$, where $\text{col}(\cdot)$
denotes the column space. Without loss of generality, we may assume
each of $\bar{\U}$ and $\bar{\V}$ has orthogonal columns with unit
norms. 

Given $\P_{\Omega}\M$, $\bar{\U}$ and $\bar{\V}$, we solve the
following modified nuclear norm minimization problem: 
\begin{equation}
\begin{aligned}\min_{\X} & \;\left\Vert \X\right\Vert _{*}\\
\textrm{s.t.} & \;\P_{\Omega}(\bar{\U}\X\bar{\V}^{\top})=\P_{\Omega}\M.
\end{aligned}
\label{eq:cluster_prog}
\end{equation}
For this formulation we have the following guarantee.
\begin{thm}
\label{thm:cluster}Suppose $\U$ and $\V$ satisfy the standard incoherence
condition~(\ref{eq:incoherence}) with parameter $\mu_{0}$, and
$\bar{\U}$ and $\bar{\V}$ satisfies$ $~(\ref{eq:incoherence})
with parameter $\bar{\mu}_{0}$. For some universal constants $c_{0},c_{1}$
and $c_{2}$, $\X^{*}:=\bar{\U}^{\top}\M\bar{\V}$ is the unique optimal
solution to the program~(\ref{eq:cluster_prog}) with probability
at least $1-c_{1}n^{-c_{2}}$ provided 
\[
p\ge c_{0}\frac{\mu_{0}\bar{\mu}_{0}r\bar{r}\log(32\bar{\mu}_{0}\bar{r})\log n}{n^{2}}.
\]

\end{thm}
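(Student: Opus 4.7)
The plan is to adapt the dual-certificate argument of Theorem~\ref{thm:uniform} to the constrained program~\eqref{eq:cluster_prog}. Since $\bar{\U}$ and $\bar{\V}$ have orthonormal columns, the map $\X\mapsto\bar{\U}\X\bar{\V}^{\top}$ is an isometry in both the Frobenius and nuclear norms, so it suffices to certify that $\X^{*}$ is the unique minimizer of the reformulated $\bar{r}\times\bar{r}$ problem. Writing $\U=\bar{\U}\bm{A}$ and $\V=\bar{\V}\bm{B}$ with $\bm{A}=\bar{\U}^{\top}\U$ and $\bm{B}=\bar{\V}^{\top}\V$ (both orthonormal), the SVD of $\X^{*}$ is $\bm{A}\bm{\Sigma}\bm{B}^{\top}$, and the tangent space to rank-$r$ matrices at $\X^{*}$ is $T=\{\bm{A}\Z_{1}^{\top}+\Z_{2}\bm{B}^{\top}:\Z_{1},\Z_{2}\in\mathbb{R}^{\bar{r}\times r}\}$.

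Let $\R_{\Omega}(\X):=\P_{\Omega}(\bar{\U}\X\bar{\V}^{\top})$, with adjoint $\R_{\Omega}^{*}(\Y)=\bar{\U}^{\top}\P_{\Omega}(\Y)\bar{\V}$. Standard convex analysis reduces uniqueness to (i) a well-conditioning bound $\|\P_{T}\R_{\Omega}^{*}\R_{\Omega}\P_{T}-p\P_{T}\|\le p/2$ on $T$, and (ii) a dual certificate $\Y$ supported on $\Omega$ such that $\P_{T}(\bar{\U}^{\top}\Y\bar{\V})\approx\bm{A}\bm{B}^{\top}$ and $\|\P_{T^{\perp}}(\bar{\U}^{\top}\Y\bar{\V})\|<1$. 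I would build $\Y$ by the golfing scheme: partition $\Omega$ into $L=\Theta(\log n)$ independent batches $\Omega_{\ell}$ with per-batch density $q=p/L$, set $\Z_{0}=\bm{A}\bm{B}^{\top}$, iterate $\Z_{\ell}=\bm{A}\bm{B}^{\top}-q^{-1}\P_{T}\R_{\Omega_{\ell}}^{*}\R_{\Omega_{\ell}}\P_{T}(\Z_{\ell-1})\in T$, and let $\Y=\sum_{\ell}q^{-1}\P_{\Omega_{\ell}}(\bar{\U}\Z_{\ell-1}\bar{\V}^{\top})$.

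The heart of the proof is to control $\|\Z_{\ell}\|_{F}$, $\|\Z_{\ell}\|_{\infty}$ and $\|\Z_{\ell}\|_{\infty,2}$ simultaneously via concentration bounds for $q^{-1}\P_{T}\R_{\Omega}^{*}\R_{\Omega}\P_{T}-\P_{T}$ in the corresponding norms; these bounds are the structured-completion analog of Lemma~\ref{lem:op_inf}. The rank-one atoms are now $\bar{\U}^{\top}\e_{i}\e_{i}^{\top}\bar{\U}\,(\cdot)\,\bar{\V}^{\top}\e_{j}\e_{j}^{\top}\bar{\V}$, whose size is governed by the incoherence of $\bar{\U},\bar{\V}$ through $\|\bar{\U}^{\top}\e_{i}\|_{2}^{2}\le\bar{\mu}_{0}\bar{r}/n$, while $\P_{T}$ is governed by the incoherence of $\bm{A},\bm{B}$, which is inherited from $\mu_{0}$ via $\U=\bar{\U}\bm{A}$ and $\V=\bar{\V}\bm{B}$. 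A matrix-Bernstein argument then yields per-round contraction of $\|\cdot\|_{\infty,2}$ by a factor of order $\sqrt{\mu_{0}\bar{\mu}_{0}r\bar{r}\log(\bar{\mu}_{0}\bar{r})/(pn^{2})}\le 1/2$ under the hypothesis; after $L=\Theta(\log n)$ rounds the residual falls below $1/n$ in Frobenius norm, and the usual dual-norm conversion $\|\P_{T^{\perp}}(\bar{\U}^{\top}\Y\bar{\V})\|\le$ (spectral norm controlled by $\ell_{\infty,2}$ via the $T^{\perp}$ projection) closes the certificate.

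The main obstacle is the bookkeeping that produces the factor $\mu_{0}\bar{\mu}_{0}r\bar{r}$ rather than the naive $\bar{\mu}_{0}^{2}\bar{r}^{2}$ one would obtain by treating $\X^{*}$ as a generic rank-$r$ $\bar{r}\times\bar{r}$ matrix whose incoherence is inherited entirely from $\bar{\U},\bar{\V}$. In the $\ell_{\infty,2}$ concentration, the row and column norms of $\bar{\U}\Z\bar{\V}^{\top}$ must be split so that one factor of incoherence is absorbed by the atoms (via $\bar{\mu}_{0}$) and the other by the tangent-space structure (via $\mu_{0}$, through bounds such as $\|\bm{A}^{\top}\bar{\U}^{\top}\e_{i}\|_{2}=\|\U^{\top}\e_{i}\|_{2}\le\sqrt{\mu_{0}r/n}$). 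Correctly decoupling these two roles throughout the golfing iteration, rather than bounding everything crudely through $\bar{\mu}_{0}$, is the delicate step that enables the stated sample complexity.
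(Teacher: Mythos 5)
Your high-level plan (golfing, $\ell_{\infty,2}$-concentration, decoupling of $\mu_{0}$ and $\bar{\mu}_{0}$) matches the paper's strategy, and your reformulation in the reduced $\bar{r}\times\bar{r}$ space is an equivalent parameterization to the paper's choice of working in the ambient $n\times n$ space with re-defined projections $\P_{T}\Z=\U\U^{\top}\Z\bar{\V}\bar{\V}^{\top}+\bar{\U}\bar{\U}^{\top}\Z\V\V^{\top}-\U\U^{\top}\Z\V\V^{\top}$ and $\P_{T^{\perp}}\Z=(\bar{\U}\bar{\U}^{\top}-\U\U^{\top})\Z(\bar{\V}\bar{\V}^{\top}-\V\V^{\top})$, whose sum is the projection onto $\{\Z:\bar{\U}\bar{\U}^{\top}\Z\bar{\V}\bar{\V}^{\top}=\Z\}$ rather than the identity.

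There is, however, a genuine gap: you run the golfing scheme for $L=\Theta(\log n)$ rounds and drive the tangent-space residual below $1/n$. Since each round needs per-batch density $q\gtrsim\mu_{0}\bar{\mu}_{0}r\bar{r}\log n/n^{2}$ for the concentration lemmas, $p\ge Lq$ would give a sample complexity with $\log^{2}n$, not the $\log(32\bar{\mu}_{0}\bar{r})\log n$ stated in the theorem. The paper instead uses only $k_{0}=\Theta(\log(\bar{\mu}_{0}\bar{r}))$ rounds and tolerates a much weaker dual-certificate residual $\|\P_{T}\Y-\U\V^{\top}\|_{F}\le\sqrt{\mu_{0}r/(32\bar{\mu}_{0}\bar{r})}$. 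This relaxation is justified by a second well-conditioning estimate missing from your outline, namely $\frac{1}{\sqrt{p}}\|\P_{\Omega}\P_{T^{\perp}}\|_{op}\le\sqrt{2\bar{\mu}_{0}\bar{r}/(\mu_{0}r)}$ w.h.p. (the paper's Lemma~\ref{lem:op_cluster}). Exploiting the structural constraint $\bar{\U}\bar{\U}^{\top}\bDelta\bar{\V}\bar{\V}^{\top}=\bDelta$ on the error direction $\bDelta$, together with the fact that each rank-one atom now has $\|\P_{T^{\perp}}(\ei\ej^{\top})\|_{F}\le\bar{\mu}_{0}\bar{r}/n$ rather than $\approx 1$, this yields $\|\P_{T}\bDelta\|_{F}\le\sqrt{2\bar{\mu}_{0}\bar{r}/(\mu_{0}r)}\|\P_{T^{\perp}}\bDelta\|_{*}$ in place of the crude polynomial-in-$n$ constant used in the unstructured case. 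That is what reduces the required number of golfing rounds from $\Theta(\log n)$ to $\Theta(\log(\bar{\mu}_{0}\bar{r}))$; without it, your argument proves a version of the theorem with $\log^{2}n$ in place of $\log(32\bar{\mu}_{0}\bar{r})\log n$, which is strictly weaker when $\bar{\mu}_{0}\bar{r}\ll n$, i.e.\ exactly in the regime where the side information is useful.
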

Given $\X^{*}$, we can recover $\M$ by $\M=\bar{\U}\X^{*}\bar{\V}^{\top}$
since $\bar{\U}\bar{\U}^{\top}\U=\U$ and $\bar{\V}\bar{\V}^{\top}\V=\V$.
We prove this theorem in Appendix~\ref{sec:proof_cluster}. 

Theorem~\ref{thm:cluster} shows that with the knowledge of the $\bar{r}$-dimensional
subspaces $\text{col}(\bar{\U})$ and $\text{col}(\bar{\V})$, the
number of observations needed to complete $\M$ is on the order of
$pn^{2}\asymp\mu_{0}\bar{\mu}_{0}r\bar{r}\log(\bar{\mu}_{0}\bar{r})\log(\bar{\mu}_{0}\bar{r})\log n$,
which is $\Theta(r\bar{r}\log\bar{r}\log n)$ for constant $\mu_{0}$
and $\bar{\mu}_{0}$. If $\bar{r}\ll n$, meaning that we have strong
structural information, then this number is much smaller than the
usual requirement $\Theta(nr\log^{2}n)$. On the other hand, setting
$\bar{r}=n$ recovers Theorem~\ref{thm:uniform} for standard matrix
completion where there is no additional structural information. We
note that we assume $\M$ is a square matrix here for simplicity;
the results can be trivially extended to general rectangular matrices.

Near the completion of the writing of this paper,
an independent study~\cite{xu2013sideinfo} on structured matrix
completion was made available. There they require among other things
the following condition:%
\footnote{In \cite{xu2013sideinfo} they consider the sampling without replacement
model for the observed entries. Their results can be translated to
the Bernoulli model considered in this paper, as we have done here.
See also foot note \ref{fn:sampling_model}.%
}
\begin{equation}
p\ge c_{0}\frac{\max\{\mu_{0},\bar{\mu}_{0}\}\max\left\{ \mu_{0},\bar{\mu}_{0},\mu_{1}\right\} r\bar{r}\log\bar{r}\log n}{n^{2}},\label{eq:old_structure_guarantee}
\end{equation}
where $\mu_{1}$ is the joint incoherence parameter of $\U$ and $\V$
defined in~\eqref{eq:strong_incoherence}. Theorem~\ref{thm:cluster}
is better than the result in~\eqref{eq:old_structure_guarantee}
in two ways. First, Theorem~\ref{thm:cluster} avoids the superfluous
dependence on the joint incoherence parameter $\mu_{1}$, which can
be as large as $\mu_{0}^{2}r$ as previously discussed. Second, even
in the ideal setting with $\mu_{1}=\mu_{0}$, the bound in \eqref{eq:old_structure_guarantee}
requires $p$ to scale with $\max\left(\mu_{0}^{2},\bar{\mu}_{0}^{2}\right)$,
whereas the bound in Theorem~\eqref{thm:cluster} scales with $\mu_{0}\bar{\mu}_{0}$,
which is strictly smaller whenever $\mu_{0}\neq\bar{\mu}_{0}$. We
note that \cite{xu2013sideinfo} discusses a nice application of structured
matrix completion to the multi-label learning problem.

\subsubsection{Applications to semi-supervised clustering}

Another interesting application of structured matrix completion is
presented in~\cite{yi2013assisted}. There they consider the semi-supervised
clustering problem, where the goal is to partition a set of $n$ objects
into $r$ clusters given the objects' feature vectors $\w_{i}\in\mathbb{R}^{d},i=1,\ldots,n$
and some must-link and cannot-link constraints $\mathcal{M}$ and
$\mathcal{C}$. In particular, $(i,j)\in\mathcal{M}$ means objects
$i$ and $j$ must be assigned to the same cluster, and $(i,j)\in\mathcal{C}$
means they cannot. Let $\M\in\{0,1\}^{n\times n}$ be the true affinity
matrix, with $M_{ij}=1$ if and only if objects $i$ and $j$ are
in the same cluster. Note that $\M$ has a block-diagonal structure,
with the number of blocks equal to the number of clusters~$r$ and
thus $\text{rank}(\M)=r$. Suppose the rank-$r$ SVD of $\M$ is $\M=\U\mathbf{\Sigma}\U^{\top}$.
The authors of~\cite{yi2013assisted} make the important observation
that in practice, the columns of $\U$ often (approximately) lie in
the space spanned by first $\bar{r}$ singular vectors $\bar{\U}\in\mathbb{R}^{n\times\bar{r}}$
of the input features $\W=[\w_{1}\;\w_{2}\cdots\w_{n}]^{\top}\in\mathbb{R}^{n\times d}$
for some $\bar{r}\ll n$. In this case, one can use the extra information
from the features $\W$ to improve clustering performance. The task
of recovering the affinity matrix $\M$ given $\bar{\U}$ and the
must-link/cannot-link constraints $\Omega:=\mathcal{M}\cup\mathcal{C}$
is precisely a structured matrix completion problem. 

Specifically, \cite{yi2013assisted} considers the setup where the
set of observed entries $\Omega$ are distributed according to the
Bernoulli model with probability $p$,%
\footnote{To be precise, the diagonal entries $M_{ii}=1$ are known; clearly
having more observations cannot decrease the probability that the
program~\eqref{eq:cluster_prog} outputs the correct solution. Moreover,
since the affinity matrix satisfies $M_{ij}=M_{ji}$, each observation
is a \emph{pair} of entries of $\M$. This technicality can be easily
handled, and we omit the details here.%
} the smallest cluster size is $n_{\min}$, and $\bar{\U}$ has standard
incoherence parameter $\bar{\mu}_{0}$ as defined in~(\ref{eq:incoherence}).
Note that the standard incoherence parameter of $\U$ is $n/(rn_{\min})$
due to the block diagonal structure of the affinity matrix $\M$.
Using previous techniques in matrix completion, it is shown in~\cite{yi2013assisted}
that $\X^{*}:=\bar{\U}^{\top}\M\bar{\U}$ is the unique optimal solution
to the program~(\ref{eq:cluster_prog}) w.h.p. provided 
\begin{equation}
p\gtrsim\frac{\bar{\mu}_{0}\bar{r}\log^{2}n}{n_{\min}^{2}}.\label{eq:old_cluster_guarantee}
\end{equation}
Note the quadratic term $n_{\min}^{2}$ on the RHS, which is due to
the joint incoherence parameter of $\U$ taking the value $n^{2}/(rn_{\min}^{2})$.
Suppose $\bar{r}=n$; a consequence of \eqref{eq:old_structure_guarantee}
is that, even if $\M$ is fully observed ($p=1$), the cluster size
must be at least $n_{\min}=\Theta(\sqrt{n})$ and thus the possible
number of clusters $r$ cannot exceed $n/n_{\min}=\Theta(\sqrt{n})$.
These restrictions are undesirable, and clearly unnecessary when $p=1$.

Using Theorem~\ref{thm:cluster}, we can eliminate these $\sqrt{n}$
restrictions and significantly reduce the sample complexity. Plugging
$\mu_{0}=n/(rn_{\min})$ into the theorem, we obtain that the program~(\ref{eq:cluster_prog})
succeeds with high probability provided 
\[
p\gtrsim\frac{\bar{\mu}_{0}\bar{r}\log(\bar{\mu}_{0}\bar{r})\log n}{nn_{\min}}.
\]
The last RHS is order-wise smaller than the RHS of the previous bound~\eqref{eq:old_cluster_guarantee}
by a multiplicative factor of $\frac{n_{\min}}{n}\cdot\frac{\log(\bar{\mu}_{0}\bar{r})}{\log n}$.
In particular, when $\bar{r}=n$ and ignoring logarithm factors, we
allow the size of the clusters to be as small as $n_{\min}=\Theta\left(1\right)$
and the number of clusters be as large as $r=\Theta\left(n\right)$.
These significantly improve over the results in~\cite{yi2013assisted}
which require $n_{\min}=\Omega(\sqrt{n})$ and $r=O(\sqrt{n})$.$ $
Moreover, if $n_{\min}=\sqrt{n}$, then our result require $n/n_{\min}=\sqrt{n}$
times fewer observations than the previous bound~\pageref{eq:old_cluster_guarantee}.

\section{Incoherence in Matrix Decomposition: Information and Computational
Lower Bounds\label{sec:LS}}

Having shown that the joint incoherence is not needed in matrix completion,
we now turn to a closely related problem, namely low-rank and sparse
matrix decomposition~\cite{chandrasekaran2011siam,candes2009robustPCA}.
In contrast to matrix completion, we show that the joint incoherence
condition is unavoidable in matrix decomposition, at least if one
asks for polynomial-time algorithms.

Suppose $\L^{*}\in\mathbb{R}^{n\times n}$ is a symmetric rank-$r$
matrix obeying the standard and joint incoherence conditions~(\ref{eq:incoherence})
and~(\ref{eq:strong_incoherence}) with parameters $\mu_{0}$ and
$\mu_{1}$, respectively, and $\S^{*}\in\mathbb{R}^{n\times n}$ is
a symmetric matrix where each pair of entries $S_{ij}^{*}=S_{ji}^{*}$
is non-zero with probability $\tau$, independent of all others. The
\emph{matrix decomposition} problem concerns with recovering $\left(\L^{*},\S^{*}\right)$
given the sum $\A=\L^{*}+\S^{*}$. A now standard approach is to solve
the following convex program \cite{chandrasekaran2011siam,candes2009robustPCA}:
\begin{equation}
\begin{aligned}\min_{\L,\S} & \;\left\Vert \L\right\Vert _{*}+\lambda\left\Vert \S\right\Vert _{1}\\
\textrm{s.t.} & \;\L+\S=\A,
\end{aligned}
\label{eq:LS}
\end{equation}
where $\left\Vert \S\right\Vert _{1}:=\sum_{i,j}\left|S_{ij}\right|$
is the matrix $\ell_{1}$ norm. Under the above setting, it has been
shown in~\cite{candes2009robustPCA,li2013constantCorruption,chen2011LSarxiv}
that $\left(\L^{*},\S^{*}\right)$ is the unique optimal solution
to~(\ref{eq:LS}) for a suitable $\lambda$ with probability at least
$1-n^{-10}$ provided that $\tau<c_{0}$ for any constant $c_{0}<\frac{1}{2}$
and 
\begin{equation}
c_{1}\frac{\max\left\{ \mu_{0},\mu_{1}\right\} r\log^{2}n}{n}\le1\label{eq:LS_achievability}
\end{equation}
for some constant $c_{1}$ that might depend on $c_{0}$; cf.\ Theorems
1 and 2 in~\cite{chen2011LSarxiv}.%
\footnote{In~\cite{candes2009robustPCA,li2013constantCorruption,chen2011LSarxiv},
the sufficiency of~\eqref{eq:LS_achievability} is proved for non-symmetric
matrices, but it is straightforward to show that the same holds for
the symmetric case considered here.%
} Note the dependence on $\mu_{1}$ above. Consequently, when $\L^{*}$
is positive semi-definite with $\mu_{1}=\mu_{0}^{2}r$, the condition~(\ref{eq:LS_achievability})
requires $r=o(\sqrt{n})$. Unlike the matrix completion setting which
does not have a natural motivation for the $\ell_{\infty}$-type requirement
in the joint incoherence condition~(\ref{eq:strong_incoherence}),
the $\ell_{\infty}$ norm arises naturally in the matrix decomposition
problem as it is the dual norm of the $\ell_{1}$-norm in the formulation~(\ref{eq:LS}).

In fact, we show that the joint incoherence condition is not specific
to the formulation (\ref{eq:LS}), but is in fact required by all
polynomial-time algorithms under a widely-believed computational complexity
assumption. We prove this by connecting the matrix decomposition problem
to the \emph{Planted Clique} problem~\cite{alon1998hiddenClique},
defined as follows. A graph on $n$ nodes is generated by connecting
each pairs of nodes independently with probability~$\frac{1}{2}$,
and then randomly picking a subset of $n_{\min}$ nodes and making
them fully connected (hence a clique). The goal is to find the planted
clique given the graph. The Planted Clique problem has been extensively
studied; cf.~\cite{berthet2013colt_sparsePCA,alon2007testing} for
an overview of the known results. In the regime of $n_{\min}=o(\sqrt{n})$,
there is no known polynomial-time algorithm for this problem despite
years of effort. In fact, this regime is widely believed to be intractable
in polynomial time. The average case hardness of this regime has been
proved under certain computational models~\cite{rossman2010clique,Feldman2012statAlg},
and has been utilized in cryptography~\cite{applebaum2010cryptographyClique,Juel00cliqueCrypto}
and other applications~\cite{alon2007testing,Hazan2011Nash,koiran2012rip}.
The work~\cite{berthet2013colt_sparsePCA} is the first to use this
hardness assumption to obtain bounds on statistical accuracy of sparse
PCA given computational constraints, and a similar approach is taken
in~\cite{ma2013submatrix} for submatrix detection. We therefore
adopt the following computational assumption on the Planted Clique
problem, where we recall that a size $ n_{\min} $ clique is planted in an Erdos-Renyi random graph $ G(n,\frac{1}{2}) $ with $ n $ nodes and edge probability $ \frac{1}{2} $. 
\begin{quote}
\textbf{A1}$\quad$ For any constant $\epsilon>0$, there is no algorithm
with running time polynomial in $n$ that, for all $n$ and with probability at least~$\frac{1}{2}$, finds the
planted clique with size $n_{\min}\le n^{\frac{1}{2}-\epsilon}$ given
the random graph. 
\end{quote}
This version of the assumption is similar to Conjecture 4.3 in~\cite{alon2007testing}.

The following theorem provides necessary conditions for the success
of matrix decomposition algorithms. The proof is given in Appendix~\ref{sec:proof_LS}.
\begin{thm}
\label{thm:LS}The following two statements are true for the matrix
decomposition problem with $\tau=1/3$. 
\begin{enumerate}
\item Suppose $r=1$ and the assumption \textbf{\emph{A1}} holds. For any
constant $\epsilon'>0$, there is no algorithm with running time polynomial
in $n$ that, for all $n$ and with probability at least $\frac{1}{2}$,
solves the matrix decomposition problem%
\footnote{This statement still holds if we restrict to matrix decomposition
problems with $\L^{*}$ and $\S^{*}$ taking finitely many values, which can
be encoded using a finite number of bits. This can be easily seen
from the proof of the theorem.%
} with 
\[
\frac{\mu_{1}^{1-\epsilon'}}{n}\ge1.
\]

\item Suppose $\mu_{0}\ge 2$. There is no algorithm that, for all $n$ and with
probability at least $\frac{1}{2}$, solves the matrix decomposition
problem with 
\[
\frac{1}{12} \cdot \frac{\mu_{0}r \log n }{n}\ge1.
\]

\end{enumerate}
\end{thm}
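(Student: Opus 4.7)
The theorem splits into two impossibility results that call for distinct tools. Part~1 is a computational lower bound that I plan to prove by reducing Planted Clique to matrix decomposition and invoking assumption \textbf{A1}, while Part~2 is an unconditional information-theoretic impossibility that I plan to prove via Fano's inequality. Both parts are driven by the same hard family of low-rank matrices: $\L^{*}=\bm{1}_{C}\bm{1}_{C}^{\top}$ for some $C\subseteq[n]$ of size $k$, which is rank-one with singular vectors $\U=\V=\bm{1}_{C}/\sqrt{k}$, yielding $\mu_{0}=n/k$ and $\mu_{1}=n^{2}/k^{2}$. Small cliques therefore translate into large incoherence parameters, precisely the forbidden regime.

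For Part~1, given a Planted Clique instance whose adjacency matrix $G\in\{0,1\}^{n\times n}$ has a planted clique $C$ of size $k$, the plan is to sample a symmetric random mask $\bm{M}\in\{0,1\}^{n\times n}$ with upper-triangular pairs i.i.d.\ $\mathrm{Bernoulli}(2/3)$, form $\A=G\odot \bm{M}$, feed $\A$ to the hypothesized polynomial-time decomposition algorithm, and read off $C$ as the support of any non-zero row of the returned rank-one $\hat{\L}$. The crucial identity is that $\Pr(A_{ij}=1)=2/3$ when $(i,j)\in C\times C$ (since $G_{ij}=1$) and $\Pr(A_{ij}=1)=\tfrac{1}{2}\cdot\tfrac{2}{3}=1/3$ otherwise, independently across pairs. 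Setting $\L^{*}=\bm{1}_{C}\bm{1}_{C}^{\top}$ makes $\S^{*}:=\A-\L^{*}\in\{-1,0,+1\}^{n\times n}$, and each pair of entries of $\S^{*}$ is independently non-zero with probability exactly $\tau=1/3$; this is a bona fide decomposition instance with $r=1$, $\mu_{0}=n/k$, and $\mu_{1}=n^{2}/k^{2}$. Given any $\epsilon'\in(0,1)$, pick $\epsilon=\epsilon'/(2(1-\epsilon'))$ and $k=\lceil n^{1/2-\epsilon}\rceil$, for which $\mu_{1}^{1-\epsilon'}/n=n^{(1+2\epsilon)(1-\epsilon')-1}\ge1$ for all large $n$. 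The decomposition algorithm recovers $\L^{*}$ with probability at least $1/2$, so the composition is a polynomial-time algorithm that locates a size-$n^{1/2-\epsilon}$ planted clique, contradicting \textbf{A1}.

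For Part~2, I plan to use the same construction with $r=1$ and $k=\lfloor\log n/12\rfloor$, which gives $\mu_{0}r\log n/n=\log n/k\ge12$ and $\mu_{0}=n/k\ge2$ for all large $n$. Take $C$ uniform over the $\binom{n}{k}$ size-$k$ subsets, so that $H(C)=\log\binom{n}{k}\ge k\log(n/k)=\Omega((\log n)^{2})$. With $\A$ generated from this $C$ as above, let $E_{ij}:=\bm{1}[(i,j)\in C\times C]$; then $\Pr(E_{ij}=1)=(k/n)^{2}$ and the entries of $\A$ are conditionally independent given $\{E_{ij}\}$. The data-processing inequality and subadditivity of mutual information yield
\[
I(\A;C)\;\le\;I(\A;\{E_{ij}\})\;\le\;\sum_{i\le j}I(A_{ij};E_{ij})\;\le\;\tbinom{n+1}{2}\cdot(k/n)^{2}\cdot\mathrm{KL}\!\bigl(\mathrm{Ber}(\tfrac{2}{3})\,\big\Vert\,\mathrm{Ber}(\tfrac{1}{3})\bigr)\;=\;O(k^{2}),
\]
which evaluates to $O((\log n)^{2})$ with a constant strictly smaller than $H(C)/2-1$ for $k=\lfloor\log n/12\rfloor$. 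Fano's inequality then forces the probability of correctly recovering $C$, and therefore $\L^{*}$, to be strictly less than $1/2$ for every estimator.

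The main obstacle is in Part~1: the matrix decomposition model demands i.i.d.\ $\mathrm{Bernoulli}(\tau)$ support for $\S^{*}$, whereas the naive reduction $\A=G$ produces $\S^{*}$ that is identically zero on $C\times C$ and Bernoulli$(1/2)$ off it, violating independence. The $\mathrm{Bernoulli}(2/3)$ mask repairs this because the two equations $1-q=\tau$ (inside the clique, after subtracting $\L^{*}$) and $q/2=\tau$ (outside) have a common solution only at $(q,\tau)=(2/3,1/3)$---exactly the value of $\tau$ the statement fixes. Once the distributions are matched, Part~1 reduces to an exponent calculation relating $\epsilon,\epsilon',k$, and Part~2 reduces to a routine Fano bound, so both parts come down to the distributional alignment above.
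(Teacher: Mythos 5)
Your proposal is correct and the two parts track the paper's strategy at the same high level: Part~1 is a reduction from Planted Clique under \textbf{A1}; Part~2 is a Fano argument. I'll flag where the details match and where they diverge.

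Part~1 is essentially the paper's proof, and in one small respect it is more careful. The paper's text says ``set each $\bar A_{ij}$ to zero with probability $\tfrac23$,'' i.e.\ keep with probability $\tfrac13$; plugging this in gives inside-clique entries of $\S^*$ nonzero with probability $\tfrac23$, not $\tau=\tfrac13$. Your mask keeps with probability $\tfrac23$ (i.e.\ zeroes with probability $\tfrac13$), which is the unique choice that makes both inside-clique and outside-clique entries of $\S^*$ nonzero with probability exactly $\tfrac13$, as you observe via the two matching equations. The parameter bookkeeping $\mu_1 = n^2/n_{\min}^2$ and the exponent calculation $\epsilon = \epsilon'/(2(1-\epsilon'))$ agree with the paper (just use $\lfloor\cdot\rfloor$ rather than $\lceil\cdot\rceil$ for $k$ so that $\mu_1^{1-\epsilon'}/n \ge 1$ holds in the inequality direction you need).

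Part~2 takes a genuinely different route that is equally valid. The paper fixes arbitrary $\mu_0\ge 2$ and $r$ satisfying the bound, sets $K=n/(\mu_0 r)$, and builds a family of $M=n/2$ rank-$r$ block-diagonal matrices $\L^{(1)},\dots,\L^{(M)}$ each obtained from a reference $\L^{(0)}$ by a single row/column swap into the lower half; it then bounds $I(\L^*;\A)$ by the pairwise KL divergence (each $\asymp K$) via convexity, and applies Fano over $\log M = \log(n/2)$. You instead put a uniform prior on all $\binom{n}{k}$ supports for $r=1$, use the Markov chain $C\to\{E_{ij}\}\to\A$ and the conditional independence of the $A_{ij}$ given $\{E_{ij}\}$ to tensorize $I(\A;\{E_{ij}\})\le\sum_{i\le j} I(A_{ij};E_{ij})$, bound each summand by $\Pr(E_{ij}=1)\cdot\mathrm{KL}\bigl(\mathrm{Ber}(\tfrac23)\,\Vert\,\mathrm{Ber}(\tfrac13)\bigr)$ (using the fact that $I(X;Y)\le\mathbb E_Y\,\mathrm{KL}(P_{X|Y}\Vert Q)$ for any fixed $Q$), and apply Fano with the much larger alphabet $\log\binom{n}{k}\gtrsim k\log n$. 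Both arguments give the same asymptotics. The trade-off is that your version is restricted to $r=1$, which suffices to prove the stated non-existence claim since $r=1$ gives valid instances of the problem; the paper's swap construction adapts to arbitrary $r$ and therefore produces hard instances at every admissible $(\mu_0,r)$ pair, which is what it also needs for the Planted $r$-Clique remark. As a minor precision point, when you write $H(C)/2 - 1$ you should really compare $I(\A;C)+\log 2$ against $\tfrac12\log\binom{n}{k}$ in Fano; the constant-factor slack is large, so the conclusion holds, but the inequality you display should be stated against $\log\binom{n}{k}$ rather than $H(C)/2-1$.
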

If we modify the assumption \textbf{A1 }by assuming that the \emph{Planted
$r$-Clique} problem\emph{~}\cite{mcsherry2001spectralpartitioning}
with $r$ disjoint planted cliques of size $o(\sqrt{n})$ is intractable
in polynomial time, then the first part of the theorem holds with
\[
\frac{\mu_{1}^{1-\epsilon'}r}{n}\ge1.
\]
Together with the second part of the theorem, this result shows that,
under the planted clique assumption, the standard and joint incoherence
conditions are both necessary for solving matrix decomposition in
polynomial time. Therefore, the bound in~(\ref{eq:LS_achievability})
is unlikely to be improvable (up to a polylog factor) using polynomial-time
algorithms. In particular, this implies that the matrix decomposition
problem is intractable in general for positive semidefinite matrices
with rank $r=\omega(\sqrt{n})$ since in this case $\mu_{1}r=\mu_{0}^{2}r^{2}\ge r^{2}.$

We note that the first part of Theorem~\ref{thm:LS} is a \emph{computational}
limit. It is proved by showing that if there is a matrix decomposition
algorithm that does not require the joint incoherence condition, then
the algorithm would solve the computationally hard problem of finding
a planted clique with size $n_{\min}=o(\sqrt{n})$. On the other hand,
the second part of the theorem is an \emph{information/statistical}
limit applicable to all algorithms regardless of their computational
complexity, and is proved by an information-theoretic argument. Interestingly,
Theorem~\ref{thm:LS} shows that the standard incoherence and the
joint incoherence are associated with the statistical and computational
aspects of the matrix decomposition problem, respectively.

\section{Proof of Theorem~\ref{thm:uniform} \label{sec:Proofs}}

We prove the our main result Theorem~\ref{thm:uniform} in this section.
While Theorem~\ref{thm:uniform} can be derived from the more general
Theorem~\ref{thm:cluster}, we choose to provide a separate proof
of Theorem~\ref{thm:uniform} in order to highlight the main innovation
(the use of the $\ell_{\infty,2}$ norm) of the analysis. The general
setting of Theorem~\ref{thm:cluster} requires several additional
technical steps.

The high level roadmap of the proof is a standard one: by convex analysis,
to show that $\M$ is the unique optimal solution to the program~(\ref{eq:cvx_opt}),
it suffices to construct a \emph{dual certificate }$\Y$ obeying several
subgradient-type conditions. One of the conditions requires the spectral
norm $\left\Vert \Y\right\Vert $ to be small. Previous work bounds
$\left\Vert \Y\right\Vert $ by the $\ell_{\infty}$ norm $\left\Vert \Z\right\Vert _{\infty}:=\max_{i,j}\left|\Z_{ij}\right|$
of a certain matrix $\Z$, which ultimately links to $\left\Vert \U\V^{\top}\right\Vert _{\infty}$
and thus leads to the joint incoherence condition in~(\ref{eq:strong_incoherence}).
Here we derive a new bound using the $\ell_{\infty,2}$ norm $\left\Vert \Z\right\Vert _{\infty,2}$ as defined in~\eqref{eq:Linf2}.
Note that $\left\Vert \Z\right\Vert _{\infty,2}$ is always no greater
than $\sqrt{\max\{n_{1},n_{2}\}}\left\Vert \Z\right\Vert _{\infty}$
for any $\Z\in\mathbb{R}^{n_{1}\times n_{2}}$. In our setting, there
is a significant gap between $\left\Vert \U\V^\top \right\Vert _{\infty,2}\le\sqrt{\frac{\mu_{0}r}{\min\{n_{1},n_{2}\}}}$
and $\sqrt{\max\{n_{1},n_{2}\}}\left\Vert \U\V^\top\right\Vert _{\infty}\le\frac{\mu_{0}r}{\sqrt{\min\{n_{1},n_{2}\}}}.$
This leads to a tighter bound of $\left\Vert \Y\right\Vert $ and
hence less restrictive incoherence conditions.

We now turn to the details. To simplify notion, we prove the results
for square matrices~($n_{1}=n_{2}=n$); the results for non-square
matrices are proven in exactly the same fashion. Some additional notation
is needed. We use $c$ and its derivatives ($c',c_{0}$, etc.) for
universal positive constants.  By \emph{with high probability} (\emph{w.h.p.})
we mean with probability at least $1-c_{1}n{}^{-c_{2}}$ for some
constants $c_{1},c_{2}>0$ independent of the problem parameters ($n,r,p,\mu_{0},\mu_{1}$).
Throughout the proof the constant $c_{2}$ can be made arbitrarily
large by choosing the constant $c_{0}$ in Theorem~\ref{thm:uniform}
sufficiently large. The proof below involves $80\log n + 1$ random events, each of which is shown to occur with
high probability. By the union bound their intersection also
occurs with high probability. 

A few additional notations are needed. The inner product between two matrices
is given by $\left\langle \X,\Z\right\rangle :=\mbox{trace}(\X^{\top}\Z)$.
The projections $\P_{T}$ and $\P_{T^{\bot}}$ are given by 
\[
\P_{T}(\Z):=\U\U^{\top}\Z+\Z\V\V^{\top}-\U\U^{\top}\Z\V\V^{\top}
\]
and $\P_{T^{\bot}}(\Z):=\Z-\P_{T}(\Z).$ $\P_{\Omega}(\Z)$ denotes
the matrix given by $\left(\P_{\Omega}(\Z)\right)_{ij}=Z_{ij}$ if $(i,j)\in\Omega$
and zero otherwise. We use $\II$ to denote the identity mapping for
matrices. For $1\le i,j\le n$, we define the random variable $\gamma_{ij}:=\mathbb{I}\left((i,j)\in\Omega\right)$,
where $\mathbb{I}(\cdot)$ is the indicator function. The projection
$\R_{\Omega}$ is given by 
\begin{equation}
\R_{\Omega}\Z:=\frac{1}{p}\P_{\Omega}\Z=\sum_{i,j}\frac{1}{p}\gamma_{ij}Z_{ij}\ei\ej^{\top}.
\end{equation}
As usual, $\left\Vert \z\right\Vert _{2}$ is the $\ell_{2}$ norm
of the vector $\z$, and $\left\Vert \Z\right\Vert _{F}$ and $\left\Vert \Z\right\Vert $
are the Frobenius norm and spectral norm of the matrix $\Z$, respectively.
For an operator $\mathcal{A}$ on matrices, its operator norm is defined
as $\left\Vert \mathcal{A}\right\Vert _{op}:=\sup_{\Z\in\mathbb{R}^{n\times n}}\left\Vert \mathcal{A}(\Z)\right\Vert _{F}/\left\Vert \Z\right\Vert _{F}.$

\paragraph*{Subgradient Optimality Condition}

Following our proof roadmap, we now state a sufficient condition for
$\M$ to be the unique optimal solution to the optimization problem
(\ref{eq:cvx_opt}). 
\begin{prop}
\label{prop:opt_cond} Suppose $p\ge\frac{1}{n}$. The matrix $\M$
is the unique optimal solution to~(\ref{eq:cvx_opt}) if the following
conditions hold: 
\begin{enumerate}
\item $\left\Vert \P_{T}\R_{\Omega}\P_{T}-\P_{T}\right\Vert _{op}\le\frac{1}{2}.$ 
\item There exists a dual certificate $\Y\in\mathbb{R}^{n\times n}$ which
satisfies $\P_{\Omega}(\Y)=\Y$ and

\begin{enumerate}
\item $\left\Vert \P_{T^{\bot}}(\Y)\right\Vert \le\frac{1}{2}$,
\item $\left\Vert \P_{T}(\Y)-\U\V^{\top}\right\Vert _{F}\le\frac{1}{4n}.$ 
\end{enumerate}
\end{enumerate}
\end{prop}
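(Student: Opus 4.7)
The plan is to follow the standard subgradient/dual-certificate recipe for nuclear-norm recovery and to verify that the conditions in the proposition are exactly what is needed to make it go through. Let $\bDelta$ be any nonzero perturbation with $\P_\Omega(\bDelta)=0$, so that $\M+\bDelta$ is feasible; my goal is to deduce $\|\M+\bDelta\|_* > \|\M\|_*$. The main tool is the subgradient characterization of the nuclear norm: for every $\W\in T^\bot$ with $\|\W\|\le 1$, one has $\|\M+\bDelta\|_*\ge \|\M\|_*+\langle \U\V^\top+\W,\bDelta\rangle$. I will choose $\W$ so that $\langle \W,\bDelta\rangle = \langle \W,\P_{T^\bot}(\bDelta)\rangle = \|\P_{T^\bot}(\bDelta)\|_*$, which is possible by the duality between the nuclear norm and the spectral norm.

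Next, I would use the dual certificate $\Y$ to rewrite $\langle \U\V^\top,\bDelta\rangle$. Since $\P_\Omega(\Y)=\Y$ and $\P_\Omega(\bDelta)=0$, we have $\langle \Y,\bDelta\rangle=0$, and decomposing $\Y=\P_T(\Y)+\P_{T^\bot}(\Y)$ then yields
\[
\langle \U\V^\top,\bDelta\rangle = \langle \U\V^\top-\P_T(\Y),\P_T(\bDelta)\rangle - \langle \P_{T^\bot}(\Y),\P_{T^\bot}(\bDelta)\rangle.
\]
Applying Cauchy--Schwarz on the first term, the duality inequality $|\langle\A,\B\rangle|\le\|\A\|\|\B\|_*$ on the second, and conditions 2(a)--(b), I obtain
\[
\|\M+\bDelta\|_* - \|\M\|_* \;\ge\; -\tfrac{1}{4n}\|\P_T(\bDelta)\|_F + \tfrac{1}{2}\|\P_{T^\bot}(\bDelta)\|_*.
\]

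It remains to control $\|\P_T(\bDelta)\|_F$ in terms of $\|\P_{T^\bot}(\bDelta)\|_*$. Because $\P_\Omega(\bDelta)=0$, one has $\R_\Omega(\bDelta)=0$, hence $\P_T\R_\Omega\P_T(\bDelta) = -\P_T\R_\Omega\P_{T^\bot}(\bDelta)$. Combining this with condition 1 applied to $\P_T(\bDelta)$ gives $\|\P_T(\bDelta)\|_F \le 2\|\P_T\R_\Omega\P_{T^\bot}(\bDelta)\|_F \le \frac{2}{p}\|\P_{T^\bot}(\bDelta)\|_F \le \frac{2}{p}\|\P_{T^\bot}(\bDelta)\|_*$, where the middle step uses the deterministic bound $\|\R_\Omega \Z\|_F\le \frac{1}{p}\|\Z\|_F$. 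Plugging this into the previous display yields
\[
\|\M+\bDelta\|_* - \|\M\|_* \;\ge\; \tfrac{1}{2}\Bigl(1-\tfrac{1}{pn}\Bigr)\|\P_{T^\bot}(\bDelta)\|_*,
\]
which is nonnegative whenever $p\ge 1/n$ and strictly positive as long as $\P_{T^\bot}(\bDelta)\neq 0$.

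To finish, I handle the degenerate case $\P_{T^\bot}(\bDelta)=0$ separately: then $\bDelta\in T$ and $\R_\Omega(\bDelta)=0$, so condition 1 yields $\|\bDelta\|_F = \|\P_T(\bDelta)\|_F \le \|(\P_T\R_\Omega\P_T-\P_T)(\bDelta)\|_F + \|\P_T\R_\Omega\P_T(\bDelta)\|_F \le \tfrac{1}{2}\|\bDelta\|_F$, forcing $\bDelta=0$. Combining the two cases gives strict improvement for every nonzero feasible perturbation, which is uniqueness. The only delicate step is keeping the constants aligned so that the $\frac{1}{4n}$ slack in condition 2(b), the factor $\frac{2}{p}$ from the RIP-type bound, and the $\frac{1}{2}$ margin in condition 2(a) combine consistently under the mild hypothesis $p\ge 1/n$; this is the part where one must be careful, but no new idea is required beyond the standard RIP-plus-certificate argument used in~\cite{candes2009exact,Recht2009SimplerMC,gross2009anybasis}.
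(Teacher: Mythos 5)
Your proof follows the same roadmap as the paper's — subgradient inequality, insert the dual certificate $\Y$, split by $T$/$T^{\bot}$, then control $\left\Vert\P_{T}(\bDelta)\right\Vert_{F}$ by $\left\Vert\P_{T^{\bot}}(\bDelta)\right\Vert_{*}$ via the approximate-isometry condition — and most of it is correct. The gap is quantitative but real: to pass from $\left\Vert\P_{T}(\bDelta)\right\Vert_{F}$ to $\left\Vert\P_{T^{\bot}}(\bDelta)\right\Vert_{F}$ you use condition~1 plus a triangle inequality, obtaining
\[
\left\Vert\P_{T}(\bDelta)\right\Vert_{F}\le 2\left\Vert\P_{T}\R_{\Omega}\P_{T^{\bot}}(\bDelta)\right\Vert_{F}\le\frac{2}{p}\left\Vert\P_{T^{\bot}}(\bDelta)\right\Vert_{F}.
\]
With the stated slack $\frac{1}{4n}$ and the hypothesis $p\ge\frac{1}{n}$, this yields
\[
\left\Vert\M+\bDelta\right\Vert_{*}-\left\Vert\M\right\Vert_{*}\ge\frac{1}{2}\left(1-\frac{1}{pn}\right)\left\Vert\P_{T^{\bot}}(\bDelta)\right\Vert_{*},
\]
which is \emph{exactly zero} when $p=\frac{1}{n}$, not strictly positive as you claim. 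So the argument does not establish uniqueness at the boundary case allowed by the hypothesis. The paper avoids this by squaring rather than applying the triangle inequality: from
\[
\left\Vert\sqrt{p}\R_{\Omega}\P_{T}(\bDelta)\right\Vert_{F}^{2}=\left\langle\left(\P_{T}\R_{\Omega}\P_{T}-\P_{T}\right)\bDelta,\P_{T}(\bDelta)\right\rangle+\left\Vert\P_{T}(\bDelta)\right\Vert_{F}^{2}\ge\frac{1}{2}\left\Vert\P_{T}(\bDelta)\right\Vert_{F}^{2}
\]
and $\left\Vert\sqrt{p}\R_{\Omega}\P_{T}(\bDelta)\right\Vert_{F}=\left\Vert\sqrt{p}\R_{\Omega}\P_{T^{\bot}}(\bDelta)\right\Vert_{F}\le\frac{1}{\sqrt{p}}\left\Vert\P_{T^{\bot}}(\bDelta)\right\Vert_{F}$ one gets the sharper $\left\Vert\P_{T}(\bDelta)\right\Vert_{F}\le\sqrt{2/p}\,\left\Vert\P_{T^{\bot}}(\bDelta)\right\Vert_{F}$, a full square-root improvement over your $2/p$. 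Substituting this gives $\left\Vert\M+\bDelta\right\Vert_{*}-\left\Vert\M\right\Vert_{*}\ge\bigl(\frac{1}{2}-\frac{\sqrt{2}}{4\sqrt{n}}\bigr)\left\Vert\P_{T^{\bot}}(\bDelta)\right\Vert_{*}$, which is strictly positive for every $n\ge1$ whenever $\P_{T^{\bot}}(\bDelta)\neq 0$. Your handling of the degenerate case $\P_{T^{\bot}}(\bDelta)=0$ is fine, so replacing the triangle-inequality step with the quadratic-form estimate closes the gap.
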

A somewhat different version of the proposition appears in~\cite{Recht2009SimplerMC,gross2009anybasis}.
We prove the proposition in Appendix~\ref{sec:proof_opt_cond}.

\paragraph*{Approximate Isometry}

The requirement $p\ge\frac{1}{n}$ in Proposition~\ref{prop:opt_cond}
clearly holds under the conditions of Theorem~\ref{thm:uniform}.
The following standard result shows that the approximate isometry
in Condition 1 is also satisfied. 
\begin{lem}
[Theorem 4.1 in~\cite{candes2009exact}; Lemma 11 in~\cite{chen2011LSarxiv}]\label{lem:op}If
$p\ge c_{0}\frac{\mu_{0}r\log n}{n}$ for some $c_{0}$ sufficiently
large, then w.h.p. 
\[
\left\Vert \P_{T}\R_{\Omega}\P_{T}-\P_{T}\right\Vert \le\frac{1}{2}.
\]

\end{lem}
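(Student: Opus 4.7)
The plan is to realize $\P_T\R_\Omega\P_T - \P_T$ as a sum of independent, mean-zero random operators and apply a matrix Bernstein (operator-valued Ahlswede--Winter) inequality. The starting identity I would use is
\[
\P_T = \sum_{i,j} \langle \P_T(\ei\ej^\top), \cdot \rangle\, \P_T(\ei\ej^\top),
\]
which holds because $\{\ei\ej^\top\}$ is an orthonormal basis of $\mathbb{R}^{n\times n}$ and $\P_T$ is self-adjoint and idempotent. Combined with $\R_\Omega(\Z) = \frac{1}{p}\sum_{i,j}\gamma_{ij}\langle \ei\ej^\top,\Z\rangle\,\ei\ej^\top$ and self-adjointness, this yields
\[
\P_T\R_\Omega\P_T - \P_T = \sum_{i,j}\left(\frac{\gamma_{ij}}{p} - 1\right)\langle \P_T(\ei\ej^\top),\cdot\rangle\,\P_T(\ei\ej^\top) =: \sum_{i,j}\tau_{ij},
\]
a sum of independent, mean-zero rank-one (as operators) summands.

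Next I would estimate the two quantities that feed into matrix Bernstein. The key deterministic input is the standard incoherence bound $\|\P_T(\ei\ej^\top)\|_F^2 \le \frac{2\mu_0 r}{n}$, obtained by expanding $\P_T(\ei\ej^\top) = \U\U^\top\ei\ej^\top + \ei\ej^\top\V\V^\top - \U\U^\top\ei\ej^\top\V\V^\top$ and applying \eqref{eq:incoherence}. Because $|\gamma_{ij}/p - 1| \le 1/p$ and $\tau_{ij}$ is rank-one as an operator on $\mathbb{R}^{n\times n}$, I obtain the uniform bound $\|\tau_{ij}\|_{op} \le \frac{1}{p}\|\P_T(\ei\ej^\top)\|_F^2 \le \frac{2\mu_0 r}{pn} =: R$. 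For the variance, a short computation gives
\[
\Bigl\|\sum_{i,j}\mathbb{E}[\tau_{ij}^*\tau_{ij}]\Bigr\|_{op} \le \frac{1-p}{p}\,\max_{i,j}\|\P_T(\ei\ej^\top)\|_F^2\cdot\|\P_T\|_{op} \le \frac{2\mu_0 r}{pn} =: \sigma^2,
\]
using the identity for $\P_T$ recalled above to collapse the sum.

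Finally, I would invoke the matrix Bernstein inequality: for $t = 1/2$,
\[
\Pr\Bigl[\Bigl\|\sum_{i,j}\tau_{ij}\Bigr\|_{op} > \tfrac{1}{2}\Bigr] \le 2n^2 \exp\!\left(-\frac{t^2/2}{\sigma^2 + Rt/3}\right),
\]
and the bounds above show the exponent is at most $-c\,\frac{pn}{\mu_0 r}$ up to constants. Choosing $c_0$ large enough in $p \ge c_0\frac{\mu_0 r\log n}{n}$ makes this probability $\le n^{-c_2}$ for any desired $c_2$, giving the claim with high probability.

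The main obstacle is really just the variance computation: one has to be careful that $\sum_{i,j}\|\P_T(\ei\ej^\top)\|_F^2\langle\P_T(\ei\ej^\top),\cdot\rangle\P_T(\ei\ej^\top)$ can be bounded by pulling out the maximum weight and recognizing the remaining sum as $\P_T$ itself, rather than overcounting by using crude triangle inequalities which would cost an extra factor of $n$ or $r$. Everything else is a routine application of Bernstein, and indeed this result is exactly Theorem 4.1 of \cite{candes2009exact} / Lemma 11 of \cite{chen2011LSarxiv}.
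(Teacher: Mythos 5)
Your proposal is correct and is essentially the standard proof of this lemma (which the paper cites rather than reproving): decompose $\P_T\R_\Omega\P_T-\P_T$ into the independent mean-zero rank-one operators $\tau_{ij}$, bound $\|\tau_{ij}\|_{op}$ and the variance via $\|\P_T(\ei\ej^\top)\|_F^2\le 2\mu_0 r/n$ together with the identity $\sum_{i,j}\langle\P_T(\ei\ej^\top),\cdot\rangle\P_T(\ei\ej^\top)=\P_T$, and invoke operator Bernstein. Your observation about collapsing the variance sum to $\P_T$ (rather than crudely summing the $n^2$ terms) is exactly the point that gives the correct $\sigma^2\asymp\mu_0 r/(pn)$ and hence the $O(\mu_0 r\log n/n)$ sample complexity.
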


\paragraph*{Constructing the Dual Certificate}

We now construct a dual certificate $\Y$ that satisfies Condition
2 in Proposition~\ref{prop:opt_cond}. We do this using the Golfing
Scheme~\cite{gross2009anybasis,candes2009robustPCA}. Set $k_{0}:=20\log n$.
Assume for now the set $\Omega$ of observed entries is generated
from $\Omega=\bigcup_{k=1}^{k_{0}}\Omega_{k}$, where for each $k$
and matrix index $(i,j)$, $\mathbb{P}\left[(i,j)\in\Omega_{k}\right]=q:=1-(1-p)^{1/k_{0}}$
and is independent of all others. Clearly this $\Omega$ has the same
distribution as the original model. Let $\W_{0}:=0$ and for $k=1,\ldots,k_{0},$
define 
\begin{equation}
\W_{k}:=\W_{k-1}+\R_{\Omega_{k}}\P_{T}\left(\U\V^{\top}-\P_{T}\W_{k-1}\right),\label{eq:W_defn}
\end{equation}
where the operator $\R_{\Omega_{k}}$ is defined analogously to $\R_{\Omega}$
as $\R_{\Omega_{k}}(\Z):=\sum_{i,j}\frac{1}{q}\mathbb{I}\left((i,j)\in\Omega_{k}\right)Z_{ij}\ei\ej^{\top}.$
The dual certificate is given by $\Y:=\W_{k_{0}}.$ We have $\P_{\Omega}(\Y)=\Y$
by construction. The proof of Theorem~\ref{thm:uniform} is completed
if we show that $\Y$ satisfies Conditions 2(a) and 2(b) in Proposition
\ref{prop:opt_cond}~w.h.p.

\paragraph*{Lemmas on Matrix Norms}

The key step of our proof is to show that $\Y$ satisfies Condition
2(a) in Proposition~\ref{prop:opt_cond}, i.e., we need to bound
$\left\Vert \P_{T^{\bot}}(\Y)\right\Vert $. Here our proof departs
from existing work -- we establish bounds on this quantity in terms
of the $\ell_{\infty,2}$ norm. This is done with the help of two
lemmas. The first one bounds the spectral norm of $\left(\R_{\Omega}-\II\right)\Z$
in terms of the $\ell_{\infty,2}$ and $\ell_{\infty}$ norms of $\Z$.
This gives tighter bounds than previous approaches~\cite{candes2010NearOptimalMC,gross2010noreplacement,Recht2009SimplerMC,keshavan2009matrixafew}
that use solely the $\ell_{\infty}$ norm of $\Z$. 
\begin{lem}
\label{lem:op_inf} Suppose $\Z$ is a fixed $n\times n$ matrix.
For a universal constant $c>1$, we have w.h.p. 
\[
\left\Vert \left(\R_{\Omega}-\II\right)\Z\right\Vert \le c\left(\frac{\log n}{p}\left\Vert \Z\right\Vert _{\infty}+\sqrt{\frac{\log n}{p}}\left\Vert \Z\right\Vert _{\infty,2}\right).
\]

\end{lem}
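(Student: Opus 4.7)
The plan is to apply a matrix Bernstein inequality to the decomposition
\[
(\R_{\Omega}-\II)\Z \;=\; \sum_{i,j} S_{ij}, \qquad S_{ij}:=\Bigl(\tfrac{\gamma_{ij}}{p}-1\Bigr)\,Z_{ij}\,\ei\ej^{\top},
\]
where the $\gamma_{ij}$ are independent Bernoulli$(p)$ variables, so the $S_{ij}$ are independent, zero-mean matrices. This is the natural setting for matrix Bernstein, and both parameters of that inequality will land naturally on the $\ell_{\infty}$ and $\ell_{\infty,2}$ norms of $\Z$.

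First I would control the uniform bound. Since $|\gamma_{ij}/p - 1|\le 1/p$ almost surely, we get $\|S_{ij}\|\le \tfrac{1}{p}\|\Z\|_{\infty}$ for every pair $(i,j)$. Next I would compute the matrix variance. A direct calculation gives
\[
\sum_{i,j}\mathbb{E}\bigl[S_{ij}S_{ij}^{\top}\bigr] \;=\; \tfrac{1-p}{p}\sum_{i}\Bigl(\sum_{j}Z_{ij}^{2}\Bigr)\,\ei\ei^{\top},
\]
whose spectral norm equals $\tfrac{1-p}{p}\max_{i}\sum_{j}Z_{ij}^{2}\le \tfrac{1}{p}\|\Z\|_{\infty,2}^{2}$ by the definition~\eqref{eq:Linf2}. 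The symmetric computation for $\sum \mathbb{E}[S_{ij}^{\top}S_{ij}]$ gives the same bound using the column norms, which are also controlled by $\|\Z\|_{\infty,2}^{2}$. Hence the Bernstein variance parameter is $\sigma^{2}\le \tfrac{1}{p}\|\Z\|_{\infty,2}^{2}$ and the uniform bound is $B\le\tfrac{1}{p}\|\Z\|_{\infty}$.

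Finally I would invoke the (rectangular) matrix Bernstein inequality, which yields
\[
\mathbb{P}\Bigl[\|(\R_{\Omega}-\II)\Z\|\ge t\Bigr]\;\le\; 2n\exp\!\Bigl(-\tfrac{t^{2}/2}{\sigma^{2}+Bt/3}\Bigr),
\]
and choose $t=c\bigl(\sqrt{\log n/p}\,\|\Z\|_{\infty,2}+(\log n/p)\,\|\Z\|_{\infty}\bigr)$ with $c$ a sufficiently large absolute constant. Splitting the sum $\sigma^{2}+Bt/3$ into its two regimes (the subgaussian regime dominated by $\sigma^{2}$ and the subexponential regime dominated by $Bt$) shows that each regime gives a bound of the form $\exp(-c'\log n)$, so the right-hand side is at most $2n^{1-c'}$, which gives the lemma with high probability.

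The only subtle point is ensuring the variance bound is expressed in terms of $\|\Z\|_{\infty,2}$ rather than the weaker $\sqrt{n}\|\Z\|_{\infty}$; this comes from observing that the variance proxy is a diagonal matrix whose entries are squared row norms of $\Z$, and taking the spectral norm of this diagonal matrix returns a maximum row norm rather than a sum. Everything else is a direct application of matrix Bernstein; no union bound over a net or truncation is needed since $\Z$ is fixed and the summands already satisfy the required almost-sure bound.
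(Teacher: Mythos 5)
Your proof is correct and follows essentially the same route as the paper: the same decomposition into independent zero-mean rank-one summands, the same almost-sure bound $\|S_{ij}\|\le p^{-1}\|\Z\|_{\infty}$, the same observation that the variance proxy is diagonal with entries equal to (scaled) squared row/column norms giving $\sigma^2\le p^{-1}\|\Z\|_{\infty,2}^{2}$, and then matrix Bernstein. The only cosmetic difference is that you invoke the tail-probability form of Bernstein and pick $t$ afterward, whereas the paper quotes a version already phrased as a high-probability bound; these are equivalent.
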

The second lemma further controls the $\ell_{\infty,2}$ norm.
\begin{lem}
\label{lem:inf_2}Suppose $\Z$ is a fixed matrix. If $p\ge c_{0}\frac{\mu_{0}r\log n}{n}$
for some $c_{0}$ sufficiently large, then w.h.p. 
\[
\left\Vert \left(\P_{T}\R_{\Omega}-\P_{T}\right)\Z\right\Vert _{\infty,2}\le\frac{1}{2}\sqrt{\frac{n}{\mu_{0}r}}\left\Vert \Z\right\Vert _{\infty}+\frac{1}{2}\left\Vert \Z\right\Vert _{\infty,2}.
\]

\end{lem}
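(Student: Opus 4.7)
}

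\textbf{Approach.} Since $\|\cdot\|_{\infty,2}$ is the maximum of $2n$ row and column $\ell_2$ norms, the plan is to control each of them separately via a vector (equivalently, $1\times n$ matrix) Bernstein inequality and then union bound. Fix a row index $a\in[n]$, let $\zeta_{ij}:=\gamma_{ij}/p-1$, so that $\{\zeta_{ij}\}$ are independent, mean zero, $|\zeta_{ij}|\le 1/p$, and $\mathbb{E}\zeta_{ij}^2\le 1/p$, and note $(\R_\Omega-\II)\Z=\sum_{i,j}\zeta_{ij}Z_{ij}\e_i\e_j^\top$. Writing $\u_k:=\U^\top\e_k$ and expanding $\P_T(\e_i\e_j^\top)=(\U\u_i)\e_j^\top+\e_i\e_j^\top\V\V^\top-(\U\u_i)\e_j^\top\V\V^\top$, the identity $(\I-\V\V^\top)\V\V^\top=\bm{0}$ yields
\[
\w_{a,ij}:=\bigl(\e_a^\top\P_T(\e_i\e_j^\top)\bigr)^\top=\langle\u_a,\u_i\rangle(\I-\V\V^\top)\e_j+\delta_{ia}\V\V^\top\e_j,
\]
whose two summands are orthogonal, so $\|\w_{a,ij}\|_2^2=\langle\u_a,\u_i\rangle^2\|(\I-\V\V^\top)\e_j\|_2^2+\delta_{ia}\|\V^\top\e_j\|_2^2\le\langle\u_a,\u_i\rangle^2+\delta_{ia}\mu_0 r/n$ by standard incoherence on $\V$. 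The target random vector is $\s_a:=\sum_{i,j}\zeta_{ij}Z_{ij}\w_{a,ij}$, an independent sum of mean-zero vectors in $\mathbb{R}^n$.

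\textbf{Concentration.} I would apply vector Bernstein with uniform bound $L\le\tfrac{1}{p}\|\Z\|_\infty\max_{i,j}\|\w_{a,ij}\|_2\le\tfrac{1}{p}\|\Z\|_\infty\sqrt{2\mu_0 r/n}$ and variance proxy
\[
\sigma^2\;\le\;\frac{1}{p}\sum_{i,j}Z_{ij}^2\|\w_{a,ij}\|_2^2\;\le\;\frac{1}{p}\|\Z\|_{\infty,2}^2\Bigl[\|\u_a\|_2^4+\tfrac{\mu_0 r}{n}+\|\u_a\|_2^2\Bigr]\;\le\;\frac{3\mu_0 r}{np}\|\Z\|_{\infty,2}^2,
\]
where I split on $i=a$ versus $i\ne a$, used $\sum_jZ_{ij}^2\le\|\Z\|_{\infty,2}^2$, and used $\sum_i\langle\u_a,\u_i\rangle^2=\|\u_a\|_2^2\le\mu_0 r/n$ (from the resolution $\sum_i\u_i\u_i^\top=\I_r$). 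Bernstein then gives, with probability at least $1-n^{-(c'+1)}$,
\[
\|\s_a\|_2\;\lesssim\;\sqrt{\sigma^2\log n}+L\log n\;\lesssim\;\sqrt{\tfrac{\mu_0 r\log n}{np}}\|\Z\|_{\infty,2}+\tfrac{\log n}{p}\sqrt{\tfrac{\mu_0 r}{n}}\,\|\Z\|_\infty.
\]
Under $p\ge c_0\mu_0 r\log n/n$ with $c_0$ large enough, the first term is at most $\tfrac12\|\Z\|_{\infty,2}$, and the second equals $\tfrac12\sqrt{n/(\mu_0 r)}\,\|\Z\|_\infty$ after the cancellation $\tfrac{\log n}{p}\sqrt{\mu_0 r/n}\asymp\sqrt{n/(\mu_0 r)}$. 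A union bound over the $n$ choices of $a$ controls all row norms, and the $n$ column norms are handled identically by the symmetric argument swapping $\U\leftrightarrow\V$.

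\textbf{Main obstacle.} The unusual coefficient $\sqrt{n/(\mu_0 r)}$ in front of $\|\Z\|_\infty$ is precisely what the Golfing-Scheme iteration downstream needs in order to combine with Lemma~\ref{lem:op_inf} and produce geometric decay; it emerges from the Bernstein \emph{tail} $L\log n$, not from the sub-Gaussian piece, and does so only because of the small row norm $\|\w_{a,ij}\|_2\le\sqrt{2\mu_0 r/n}$. The delicate step is the variance bookkeeping: for the $i\ne a$ slice, one must use the resolution of identity $\sum_i\u_i\u_i^\top=\I_r$ to sum the $\langle\u_a,\u_i\rangle^2$ factors into $\|\u_a\|_2^2\le\mu_0 r/n$; any cruder bound that pulled $\max_i\langle\u_a,\u_i\rangle^2$ out of the sum would cost an extra factor of $n$, break the match with the assumed sampling rate, and spoil the final coefficient. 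Everything else is standard bookkeeping of Bernstein constants.
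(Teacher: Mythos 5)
Your proof is correct and follows essentially the same route as the paper: fix a row (resp.\ column) index, decompose the corresponding slice of $(\P_T\R_\Omega-\P_T)\Z$ as a sum of independent zero-mean vectors, control the range and variance in terms of $\|\Z\|_\infty$ and $\|\Z\|_{\infty,2}$ using incoherence, apply matrix Bernstein on $n\times 1$ matrices, and union bound over the $2n$ rows and columns. The only cosmetic differences are that the paper fixes a column $b$ and bounds $\|\P_T(\e_i\e_j^\top)\e_b\|_2$ by triangle inequality on the split $\U\U^\top\e_i\e_j^\top\e_b + (\I-\U\U^\top)\e_i\e_j^\top\V\V^\top\e_b$ (paying a factor of $2$ rather than using orthogonality), and the ``resolution of identity'' step you flag as delicate appears in the paper as the equivalent identity $\sum_j|\e_j^\top\V\V^\top\e_b|^2=\|\V\V^\top\e_b\|_2^2\le\mu_0 r/n$.
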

We prove Lemmas~\ref{lem:op_inf} and~\ref{lem:inf_2} in Appendix
\ref{sec:proof_tech}. We also need a standard result that controls
the $\ell_{\infty}$ norm. 
\begin{lem}
[Lemma 3.1 in~\cite{candes2009robustPCA}; Lemma 13 in~\cite{chen2011LSarxiv}]\label{lem:inf}Suppose
$\Z$ is a fixed $n\times n$ matrix in $T$. If $p\ge c_{0}\frac{\mu r\log n}{n}$
for some $c_{0}$ sufficiently large, then w.h.p. 
\[
\left\Vert \left(\P_{T}\R_{\Omega}\P_{T}-\P_{T}\right)\Z\right\Vert _{\infty}\le\frac{1}{2}\left\Vert \Z\right\Vert _{\infty}.
\]

\end{lem}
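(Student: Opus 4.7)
The plan is to use duality to turn the entrywise bound into a sum of independent scalar random variables, then apply Bernstein's inequality and a union bound over the $n^2$ entries.

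First I would fix indices $a,b \in [n]$ and write, using the self-adjointness of $\P_T$ and the assumption $\Z = \P_T \Z \in T$,
\begin{equation*}
\bigl((\P_T\R_\Omega\P_T - \P_T)\Z\bigr)_{ab} = \langle \e_a\e_b^\top, \P_T(\R_\Omega - \II)\Z\rangle = \langle \P_T(\e_a\e_b^\top), (\R_\Omega - \II)\Z\rangle.
\end{equation*}
Since $(\R_\Omega - \II)\Z = \sum_{i,j}\bigl(\tfrac{\gamma_{ij}}{p}-1\bigr)Z_{ij}\,\e_i\e_j^\top$, this expands into the sum of independent, mean-zero random variables
\begin{equation*}
S_{ab} := \sum_{i,j} \xi_{ij}^{(a,b)}, \qquad \xi_{ij}^{(a,b)} := \Bigl(\tfrac{\gamma_{ij}}{p}-1\Bigr)\,Z_{ij}\,\bigl(\P_T(\e_a\e_b^\top)\bigr)_{ij}.
\end{equation*}

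Next I would control the two parameters needed for Bernstein. Using the explicit form $\P_T(\e_a\e_b^\top) = \U\U^\top\e_a\e_b^\top + \e_a\e_b^\top\V\V^\top - \U\U^\top\e_a\e_b^\top\V\V^\top$ together with standard incoherence, one gets $|(\U\U^\top)_{ia}| \le \tfrac{\mu_0 r}{n}$ and $|(\V\V^\top)_{bj}| \le \tfrac{\mu_0 r}{n}$, hence $\|\P_T(\e_a\e_b^\top)\|_\infty \le \tfrac{3\mu_0 r}{n}$. For the Frobenius norm, the identity $\|\P_T(\e_a\e_b^\top)\|_F^2 = \langle\P_T(\e_a\e_b^\top),\e_a\e_b^\top\rangle = \|\U^\top\e_a\|_2^2 + \|\V^\top\e_b\|_2^2 - \|\U^\top\e_a\|_2^2\|\V^\top\e_b\|_2^2$ gives $\|\P_T(\e_a\e_b^\top)\|_F^2 \le \tfrac{2\mu_0 r}{n}$. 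Combining, each summand satisfies $|\xi_{ij}^{(a,b)}| \le B := \tfrac{3\mu_0 r}{np}\|\Z\|_\infty$, and the variances sum to
\begin{equation*}
\sum_{i,j}\mathbb{E}\bigl[(\xi_{ij}^{(a,b)})^2\bigr] \le \tfrac{1}{p}\|\Z\|_\infty^2 \cdot \|\P_T(\e_a\e_b^\top)\|_F^2 \le \sigma^2 := \tfrac{2\mu_0 r}{np}\|\Z\|_\infty^2.
\end{equation*}

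Applying Bernstein's inequality with threshold $t = \tfrac{1}{2}\|\Z\|_\infty$ yields
\begin{equation*}
\Pr\bigl[|S_{ab}| > \tfrac{1}{2}\|\Z\|_\infty\bigr] \le 2\exp\!\Bigl(-\,c\,\tfrac{np}{\mu_0 r}\Bigr)
\end{equation*}
for a universal constant $c>0$; under the hypothesis $p \ge c_0\,\tfrac{\mu_0 r \log n}{n}$ with $c_0$ large enough, the right side is at most $n^{-10}$. A union bound over the $n^2$ choices of $(a,b)$ then gives $\max_{a,b}|S_{ab}| \le \tfrac{1}{2}\|\Z\|_\infty$ with high probability, which is the claim. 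The only mildly delicate step is choosing $c_0$ to ensure that the Bernstein exponent is dominated by the variance term rather than the Bernoulli term (which requires $t/B \gtrsim t^2/\sigma^2$, equivalently $p \gtrsim \mu_0 r \log n/n$); everything else is bookkeeping with standard incoherence.
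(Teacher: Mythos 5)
Your proof is correct and is essentially the standard argument: the paper itself only cites this lemma (from Cand\`es et al.\ and Chen et al.), but its appendix proof of the analogous Lemma~\ref{lem:inf_cluster} follows exactly your route --- write the $(a,b)$ entry as $\sum_{i,j}(\tfrac{1}{p}\gamma_{ij}-1)Z_{ij}\langle \e_i\e_j^{\top},\P_T(\e_a\e_b^{\top})\rangle$, bound the magnitude and variance of the summands via incoherence, apply Bernstein, and union bound over the $n^2$ entries. The only cosmetic difference is that you bound the summands through $\|\P_T(\e_a\e_b^{\top})\|_{\infty}\le 3\mu_0 r/n$ while the paper uses Cauchy--Schwarz with $\|\P_T(\e_i\e_j^{\top})\|_F\|\P_T(\e_a\e_b^{\top})\|_F\le 2\mu_0 r/n$; both give the same order and the rest of your bookkeeping is sound.
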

Equipped with the lemmas above, we are ready to validate Condition
2 in Proposition~\ref{prop:opt_cond}.

\paragraph{Validating Condition 2(b)}

Set $\D_{k}:=\U\V^{\top}-\P_{T}(\W_{k})$ for $k=0,\ldots,k_{0}$.
By definition of $\W_{k}$, we have $\D_{0}=\U\V^{\top}$and 
\begin{align}
\D_{k} & =\left(\P_{T}-\P_{T}\R_{\Omega_{k}}\P_{T}\right)\D_{k-1}.\label{eq:recursion}
\end{align}
Note that $\Omega_{k}$ is independent of $\D_{k-1}$ and $q\ge p/k_{0}\ge c_{0}\mu_{0}r\log(n)/n$
under the conditions in Theorem~\ref{thm:uniform}. Applying Lemma
\ref{lem:op} with $\Omega$ replaced by $\Omega_{k}$, we obtain
that w.h.p. 
\[
\left\Vert \D_{k}\right\Vert _{F}\le\left\Vert \P_{T}-\P_{T}\R_{\Omega_{k}}\P_{T}\right\Vert \left\Vert \D_{k-1}\right\Vert _{F}\le\frac{1}{2}\left\Vert \D_{k-1}\right\Vert _{F}
\]
for each $k$. Applying the above inequality recursively with $k=k_{0,}k_{0}-1,\ldots,1$
gives 
\[
\left\Vert \P_{T}(\Y)-\U\V^{\top}\right\Vert _{F}=\left\Vert \D_{k_{0}}\right\Vert _{F}\le\left(\frac{1}{2}\right)^{k_{0}}\left\Vert \U\V^{\top}\right\Vert _{F}\le\frac{1}{4n^{2}}\cdot\sqrt{r}\le\frac{1}{4n}.
\]

\paragraph{Validating Condition 2(a)}

Note that $\Y=\sum_{k=1}^{k_{0}}\R_{\Omega_{k}}\P_{T}\left(\D_{k-1}\right)$
by construction. We therefore have 
\begin{align*}
\left\Vert \P_{T^{\bot}}(\Y)\right\Vert  & \le\sum_{k=1}^{k_{0}}\left\Vert \P_{T^{\bot}}\left(\R_{\Omega_{k}}\P_{T}-\P_{T}\right)\left(\D_{k-1}\right)\right\Vert \le\sum_{k=1}^{k_{0}}\left\Vert \left(\R{}_{\Omega_{k}}-\II\right)\P_{T}\left(\D_{k-1}\right)\right\Vert .
\end{align*}
Applying Lemma~\ref{lem:op_inf} with $\Omega$ replaced by $\Omega_{k}$
to each summand of the last R.H.S., we get that w.h.p. 
\begin{align}
\left\Vert \P_{T^{\bot}}(\Y)\right\Vert  & \le c\sum_{k=1}^{k_{0}}\left(\frac{\log n}{q}\left\Vert \D_{k-1}\right\Vert _{\infty}+\sqrt{\frac{\log n}{q}}\left\Vert \D_{k-1}\right\Vert _{\infty,2}\right)\nonumber \\
 & \le\frac{c}{\sqrt{c_{0}}}\sum_{k=1}^{k_{0}}\left(\frac{n}{\mu_{0}r}\left\Vert \D_{k-1}\right\Vert _{\infty}+\sqrt{\frac{n}{\mu_{0}r}}\left\Vert \D_{k-1}\right\Vert _{\infty,2}\right),\label{eq:x}
\end{align}
where the last inequality follows from $q\ge c_{0}\mu_{0}r\log(n)/n$.
We proceed by bounding $\left\Vert \D_{k-1}\right\Vert _{\infty}$
and $\left\Vert \D_{k-1}\right\Vert _{\infty,2}$. Using~\eqref{eq:recursion},
and repeatedly applying Lemma~\ref{lem:inf} with $\Omega$ replaced
by $\Omega_{k}$, we obtain that w.h.p. 
\begin{align*}
\left\Vert \D_{k-1}\right\Vert _{\infty} & =\left\Vert \left(\P_{T}-\P_{T}\R_{\Omega_{k-1}}\P_{T}\right)\cdots\left(\P_{T}-\P_{T}\R_{\Omega_{1}}\P_{T}\right)\D_{0}\right\Vert _{\infty}\le\left(\frac{1}{2}\right)^{k-1}\left\Vert \U\V^{\top}\right\Vert _{\infty}.
\end{align*}
By Lemma~\ref{lem:inf_2} with $\Omega$ replaced by $\Omega_{k}$,
we obtain that~w.h.p. 
\begin{align*}
\left\Vert \D_{k-1}\right\Vert _{\infty,2} 
= & \left\Vert \left(\P_{T}-\P_{T}\R_{\Omega_{k-1}}\P_{T}\right)\D_{k-2}\right\Vert _{\infty,2}\le\frac{1}{2}\sqrt{\frac{n}{\mu r}}\left\Vert \D_{k-2}\right\Vert _{\infty}+\frac{1}{2}\left\Vert \D_{k-2}\right\Vert _{\infty,2}.
\end{align*}
Using~\eqref{eq:recursion} and combining the last two display equations
gives w.h.p. 
\[
\left\Vert \D_{k-1}\right\Vert _{\infty,2}\le k\left(\frac{1}{2}\right)^{k-1}\sqrt{\frac{n}{\mu r}}\left\Vert \U\V^{\top}\right\Vert _{\infty}+\left(\frac{1}{2}\right)^{k-1}\left\Vert \U\V^{\top}\right\Vert _{\infty,2}.
\]
Substituting back to~\eqref{eq:x}, we get w.h.p. 
\begin{align*}
\left\Vert \P_{T^{\bot}}(\Y)\right\Vert  
& \le \frac{c}{\sqrt{c_{0}}}\frac{n}{\mu_{0}r}\left\Vert \U\V^{\top}\right\Vert _{\infty}\sum_{k=1}^{k_{0}}(k+1)\left(\frac{1}{2}\right)^{k-1}+\frac{c}{\sqrt{c_{0}}}\sqrt{\frac{n}{\mu_{0}r}}\left\Vert \U\V^{\top}\right\Vert _{\infty,2}\sum_{k=1}^{k_{0}}\left(\frac{1}{2}\right)^{k-1}\\
& \le \frac{6c}{\sqrt{c_{0}}}\frac{n}{\mu_{0}r}\left\Vert \U\V^{\top}\right\Vert _{\infty}+\frac{2c}{\sqrt{c_{0}}}\sqrt{\frac{n}{\mu_{0}r}}\left\Vert \U\V^{\top}\right\Vert _{\infty,2}.
\end{align*}
But the standard incoherence condition~\eqref{eq:incoherence} implies
that 
\begin{align*}
\left\Vert \U\V^{\top}\right\Vert _{\infty} & \le\max_{i,j}\left\Vert \U^{\top}\ei\right\Vert _{2}\left\Vert \V^{\top}\ej\right\Vert _{2}\le\frac{\mu_{0}r}{n},\\
\left\Vert \U\V^{\top}\right\Vert _{\infty,2} & \le\max\left\{ \max_{i}\left\Vert \ei^{\top}\U\V^{\top}\right\Vert _{2},\max_{j}\left\Vert \U\V^{\top}\ej\right\Vert _{2}\right\} \le\sqrt{\frac{\mu_{0}r}{n}}.
\end{align*}
It follows that w.h.p. 
\[
\left\Vert \P_{T^{\bot}}(\Y)\right\Vert \le\frac{6c}{\sqrt{c_{0}}}+\frac{2c}{\sqrt{c_{0}}}\le\frac{1}{2}
\]
provided $c_{0}$ is sufficiently large. This completes the proof
of Theorem~\ref{thm:uniform}.

\section{Discussion\label{sec:discussion}}

In this paper, we consider exact matrix completion and show that the
joint incoherence condition imposed by all previous work is in fact
not necessary. We discuss two extensions of this result, namely in
bounding the approximation errors of SVD projection, and in structured
matrix completion and semi-supervised clustering. We then show that
the joint incoherence condition is unavoidable in the apparently similar
problem of low-rank and sparse matrix decomposition based on the computational
hardness assumption of the Planted Clique problem.

The improvements in the matrix completion problem are achieved via
the use of $\ell_{\infty,2}$-type bounds. The $\ell_{\infty,2}$
norm seems to be\emph{ }natural in the context of low-rank matrices
as it captures the relative importance of the rows and columns. It
is interesting to see if the techniques in this paper are relevant
more generally.

\section*{Acknowledgment}

The author would like to thank Constantine Caramanis, Yuxin Chen,
Sujay Sanghavi and Rachel Ward for their support and helpful comments. This work is supported by NSF grant EECS-1056028 and DTRA grant HDTRA 1-08-0029.

\appendixpage

\appendix

\section{Proof of Proposition~\ref{prop:opt_cond}\label{sec:proof_opt_cond}}

Consider any feasible solution $\X$ to~(\ref{eq:cvx_opt}) with
$\P_{\Omega}(\X)=\P_{\Omega}(\M)$. Let $\G$ be an $n\times n$ matrix
which satisfies $\left\Vert \P_{T^{\bot}}\G\right\Vert =1$ and $\left\langle \P_{T^{\bot}}\G,\P_{T^{\bot}}(\X-\M)\right\rangle =\left\Vert \P_{T^{\bot}}(\X-\M)\right\Vert _{*}$.
Such $G$ always exists by duality between the nuclear norm and the
spectral norm. Because $\U\V^{\top}+\P_{T^{\bot}}\G$ is a sub-gradient
of $\left\Vert \Z\right\Vert _{*}$ at $\Z=\M$, we get 
\[
\left\Vert \X\right\Vert _{*}-\left\Vert \M\right\Vert _{*}\ge\left\langle \U\V^{\top}+\P_{T^{\bot}}\G,\X-\M\right\rangle .
\]
We also have $\left\langle \Y,\X-\M\right\rangle =\left\langle \P_{\Omega}(\Y),\P_{\Omega}(\X-\M)\right\rangle =0$
since $\P_{\Omega}(\Y)=\Y$. It follows that 
\begin{align*}
\left\Vert \X\right\Vert _{*}-\left\Vert \M\right\Vert _{*} & \ge\left\langle \U\V^{\top}+\P_{T^{\bot}}\G-\Y,\X-\M\right\rangle \\
 & =\left\Vert \P_{T^{\bot}}(\X-\M)\right\Vert _{*}+\left\langle \U\V^{\top}-\P_{T}\Y,\X-\M\right\rangle -\left\langle \P_{T^{\bot}}\Y,\X-\M\right\rangle \\
 & \ge\left\Vert \P_{T^{\bot}}(\X-\M)\right\Vert _{*}-\left\Vert \U\V^{\top}-\P_{T}\Y\right\Vert _{F}\left\Vert \P_{T}(\X-\M)\right\Vert _{F}-\left\Vert \P_{T^{\bot}}\Y\right\Vert \left\Vert \P_{T^{\bot}}(\X-\M)\right\Vert _{*}\\
 & \ge\frac{1}{2}\left\Vert \P_{T^{\bot}}(\X-\M)\right\Vert _{*}-\frac{1}{4n^{5}}\left\Vert \P_{T}(\X-\M)\right\Vert _{F},
\end{align*}
where in the last inequality we use Conditions 1 and 2 in the proposition.
Applying Lemma~\ref{lem:mini_lemma} below, we further obtain 
\begin{align*}
\left\Vert \X\right\Vert _{*}-\left\Vert \M\right\Vert _{*} & \ge\frac{1}{2}\left\Vert \P_{T^{\bot}}(\X-\M)\right\Vert _{*}-\frac{1}{4n^{5}}\cdot\sqrt{2}n^{5}\left\Vert \P_{T^{\bot}}(\X-\M)\right\Vert _{*}>\frac{1}{8}\left\Vert \P_{T^{\bot}}(\X-\M)\right\Vert _{*}.
\end{align*}
The RHS is strictly positive for all $\X$ with $\P_{\Omega}(\X-\M)=0$
and $\X\neq\M$. Otherwise we must have $\P_{T}(\X-\M)=\X-\M$ and
$\P_{T}\R_{\Omega}\P_{T}(\X-\M)=0$, contradicting the assumption
$\left\Vert \P_{T}\R_{\Omega}\P_{T}-\P_{T}\right\Vert _{op}\le\frac{1}{2}$.
This proves that $\M$ is the unique optimum.
\begin{lem}
\label{lem:mini_lemma}If $p\ge\frac{1}{n^{10}}$ and $\left\Vert \P_{T}\R_{\Omega}\P_{T}-\P_{T}\right\Vert _{op}\le\frac{1}{2}$,
then we have 
\[
\left\Vert \P_{T}\Z\right\Vert _{F}\le\sqrt{2}n^{5}\left\Vert \P_{T^{\bot}}(\Z)\right\Vert _{*},\forall\Z\in\{\Z':\P_{\Omega}(\Z')=0\}.
\]
\end{lem}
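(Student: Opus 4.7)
The plan is to bootstrap the near-isometry hypothesis on $T$ together with the kernel condition $\P_{\Omega}\Z=0$. The key observation at the outset is that this kernel condition immediately gives
\[
\P_{\Omega}\P_T\Z = -\P_{\Omega}\P_{T^{\bot}}\Z,
\]
so the observed part of $\P_T\Z$ has Frobenius norm at most $\|\P_{T^{\bot}}\Z\|_F\le\|\P_{T^{\bot}}\Z\|_*$ (the last step using $\sum_i\sigma_i^2\le(\sum_i\sigma_i)^2$). The task therefore reduces to recovering $\|\P_T\Z\|_F$ from $\|\P_{\Omega}\P_T\Z\|_F$, and this is exactly where the approximate-isometry hypothesis enters.

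To extract the recovery, I would promote the operator bound $\|\P_T\R_{\Omega}\P_T-\P_T\|_{op}\le\tfrac{1}{2}$ to a quadratic-form lower bound. By Cauchy-Schwarz, for any matrix $\W$,
\[
|\langle \W,(\P_T\R_{\Omega}\P_T-\P_T)\W\rangle|\le\tfrac{1}{2}\|\W\|_F^2,
\]
and choosing $\W=\P_T\Z\in T$ rearranges this to $\langle \P_T\Z,\P_T\R_{\Omega}\P_T\Z\rangle\ge\tfrac{1}{2}\|\P_T\Z\|_F^2$. A direct computation, using the self-adjointness of $\P_T$ and the idempotence of $\P_{\Omega}$, identifies the left-hand side as $\tfrac{1}{p}\|\P_{\Omega}\P_T\Z\|_F^2$. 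Chaining this with the opening identity gives
\[
\tfrac{1}{2}\|\P_T\Z\|_F^2\le\tfrac{1}{p}\|\P_{\Omega}\P_{T^{\bot}}\Z\|_F^2\le\tfrac{1}{p}\|\P_{T^{\bot}}\Z\|_*^2,
\]
and the hypothesis $p\ge n^{-10}$ finally yields $\|\P_T\Z\|_F\le\sqrt{2/p}\,\|\P_{T^{\bot}}\Z\|_*\le\sqrt{2}\,n^{5}\,\|\P_{T^{\bot}}\Z\|_*$.

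The main subtlety to avoid is the tempting but wasteful route of using $\P_T\R_{\Omega}\P_T\Z=-\P_T\R_{\Omega}\P_{T^{\bot}}\Z$ and then bounding $\|\P_T\R_{\Omega}\P_{T^{\bot}}\Z\|_F$ via the Frobenius operator norm $\|\R_{\Omega}\|_{op}\le 1/p$; that route yields only $\|\P_T\Z\|_F\le(2/p)\|\P_{T^{\bot}}\Z\|_*$, which degrades to $2n^{10}$ under $p\ge n^{-10}$ and is too weak to cancel the $1/(4n^{5})$ factor with which this lemma is combined in the proof of Proposition~\ref{prop:opt_cond}. The quadratic-form detour above is precisely what trades the $1/p$ factor for $\sqrt{1/p}$ and produces the constant $\sqrt{2}\,n^{5}$ that the lemma requires.
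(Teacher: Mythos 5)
Your proof is correct and is essentially the same argument the paper uses: both lower-bound the quantity $\tfrac{1}{p}\|\P_\Omega\P_T\Z\|_F^2=\|\sqrt{p}\,\R_\Omega\P_T\Z\|_F^2$ by $\tfrac12\|\P_T\Z\|_F^2$ via the near-isometry on $T$, then swap $\P_T\Z$ for $-\P_{T^\perp}\Z$ inside $\P_\Omega$ using the kernel condition and pass to the nuclear norm.
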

\begin{proof}
Observe that 
\begin{align*}
\left\Vert \sqrt{p}\R_{\Omega}\P_{T}(\Z)\right\Vert _{F} & =\sqrt{\left\langle \left(\P_{T}\R_{\Omega}\P_{T}-\P_{T}\right)\Z,\P_{T}(\Z)\right\rangle +\left\langle \P_{T}(\Z),\P_{T}(\Z)\right\rangle }\\
 & \ge\sqrt{\left\Vert \P_{T}(\Z)\right\Vert _{F}^{2}-\left\Vert \P_{T}\R_{\Omega}\P_{T}-\P_{T}\right\Vert \left\Vert \P_{T}(\Z)\right\Vert _{F}^{2}}\ge\frac{1}{\sqrt{2}}\left\Vert \P_{T}(\Z)\right\Vert _{F},
\end{align*}
where the last inequality follows from the assumption $\left\Vert \P_{T}\R_{\Omega}\P_{T}-\P_{T}\right\Vert _{op}\le\frac{1}{2}$.
On the other hand, $\P_{\Omega}(\Z)=0$ implies $\R_{\Omega}(\Z)=0$
and thus 
\[
\left\Vert \sqrt{p}\R_{\Omega}\P_{T}(\Z)\right\Vert _{F}=\left\Vert \sqrt{p}\R_{\Omega}\P_{T^{\bot}}(\Z)\right\Vert _{F}\le\frac{1}{\sqrt{p}}\left\Vert \P_{T^{\bot}}(\Z)\right\Vert _{F}\le n^{5}\left\Vert \P_{T^{\bot}}(\Z)\right\Vert _{F}.
\]
Combining the last two display equations gives 
\[
\left\Vert \P_{T}(\Z)\right\Vert _{F}\le\sqrt{2}n^{5}\left\Vert \P_{T^{\bot}}(\Z)\right\Vert _{F}\le\sqrt{2}n^{5}\left\Vert \P_{T^{\bot}}(\Z)\right\Vert _{*}.
\]

\end{proof}

\section{Proofs of Technical Lemmas\label{sec:proof_tech} in Section~\ref{sec:Proofs}}

We prove the technical lemmas that are used in the proof of Theorem
\ref{thm:uniform}. The proofs use the matrix Bernstein inequality,
restated below. 
\begin{thm}
[\cite{tropp2010matrixmtg}]\label{lem:matrix_bernstein}Let $\X_{1},\ldots,\X_{N}\in\mathbb{R}^{n}$
be independent zero mean random matrices. Suppose 
\[
\max\left\{ \left\Vert \mathbb{E}\sum_{k=1}^{N}\X_{k}\X_{k}^{\top}\right\Vert ,\left\Vert \mathbb{E}\sum_{k=1}^{N}\X_{k}^{\top}\X_{k}\right\Vert \right\} \le\sigma^{2}
\]
and $\left\Vert \X_{k}\right\Vert \le B$ almost surely for all $k$.
Then for any $c>1$, we have 
\[
\left\Vert \sum_{k=1}^{N}\X_{k}\right\Vert \le\sqrt{4c\sigma^{2}\log(2n)}+cB\log(2n).
\]
with probability at least $1-(2n)^{-(c-1)}.$ 
\end{thm}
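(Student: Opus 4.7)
The plan is to follow the by-now-standard Laplace-transform method for matrix concentration, originating in Ahlswede--Winter and sharpened by Oliveira and Tropp. First I would symmetrize by the Hermitian dilation: for each rectangular $\X_k$, set $\mathcal{D}(\X_k) := \bigl[\begin{smallmatrix} 0 & \X_k \\ \X_k^{\top} & 0\end{smallmatrix}\bigr] \in \mathbb{R}^{2n\times 2n}$. This is Hermitian, satisfies $\Vert \mathcal{D}(\X_k)\Vert = \Vert \X_k\Vert \le B$, and $\mathcal{D}(\X_k)^2$ is block-diagonal with blocks $\X_k\X_k^{\top}$ and $\X_k^{\top}\X_k$, so the dilated variance parameter is exactly $\max\{\Vert \mathbb{E}\sum_k \X_k\X_k^{\top}\Vert, \Vert \mathbb{E}\sum_k \X_k^{\top}\X_k\Vert\} \le \sigma^2$. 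Writing $\Y := \sum_k \mathcal{D}(\X_k)$, the spectrum of $\Y$ is symmetric about zero and $\Vert \sum_k \X_k\Vert = \lambda_{\max}(\Y)$, so it suffices to control $\lambda_{\max}(\Y)$.

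Next I would apply the matrix Laplace transform. For any $\theta>0$, matrix Markov gives $\mathbb{P}[\lambda_{\max}(\Y)\ge t] \le e^{-\theta t}\,\mathbb{E}\,\mathrm{tr}\,e^{\theta\Y}$. The crucial ingredient is Lieb's concavity theorem (the map $\A \mapsto \mathrm{tr}\exp(\H + \log \A)$ is concave on positive definites for any Hermitian $\H$), which, combined with Jensen's inequality and the independence of the summands, yields the subadditivity of matrix cumulant generating functions: $\mathbb{E}\,\mathrm{tr}\,e^{\theta\Y} \le \mathrm{tr}\exp\bigl(\sum_k \log\mathbb{E}\,e^{\theta\mathcal{D}(\X_k)}\bigr)$.

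Then I would bound each single-variable matrix MGF. The elementary scalar inequality $e^x \le 1+x+\tfrac{x^2/2}{1-x/3}$ valid for $x<3$ lifts, by the spectral theorem, to the operator inequality $e^{\theta\Z} \preceq \I + \theta\Z + \tfrac{\theta^2/2}{1-\theta B/3}\Z^2$ whenever $\Z$ is Hermitian with $\Vert \Z\Vert \le B$ and $0<\theta<3/B$. Taking expectations (using zero mean) and applying the operator inequality $\log(\I+\A)\preceq \A$ gives $\log\mathbb{E}\,e^{\theta\mathcal{D}(\X_k)} \preceq \tfrac{\theta^2/2}{1-\theta B/3}\,\mathbb{E}\,\mathcal{D}(\X_k)^2$. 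Summing, using monotonicity of $\mathrm{tr}\exp$ under the PSD order, and invoking the dimensional bound $\mathrm{tr}\,e^{\A} \le 2n\cdot e^{\lambda_{\max}(\A)}$, everything collapses to $\mathbb{P}[\lambda_{\max}(\Y)\ge t] \le 2n\exp\!\bigl(-\theta t + \tfrac{\theta^2\sigma^2/2}{1-\theta B/3}\bigr)$. Choosing $\theta = t/(\sigma^2 + Bt/3)$ yields the classical Bernstein tail $2n\exp\!\bigl(-\tfrac{t^2/2}{\sigma^2+Bt/3}\bigr)$. Requiring this to be at most $(2n)^{-(c-1)}$ reduces to $t^2 \ge 2c\log(2n)\cdot(\sigma^2+Bt/3)$; solving the quadratic and applying $\sqrt{u+v}\le\sqrt{u}+\sqrt{v}$ gives the stated $t \le \sqrt{4c\sigma^2\log(2n)} + cB\log(2n)$.

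The main obstacle is the Lieb-concavity step: it is genuinely deep, has no elementary proof, and I would invoke it as a black box (as in Tropp's \emph{User-Friendly Tail Bounds}). A secondary technicality is the lift of the scalar Bernstein-type MGF inequality to the operator setting; this relies on the observation that a polynomial inequality in a single variable valid on $[-B,B]$ transfers to an operator inequality via the functional calculus whenever the spectrum of the argument lies in $[-B,B]$, which is guaranteed here by the almost-sure bound $\Vert \mathcal{D}(\X_k)\Vert \le B$.
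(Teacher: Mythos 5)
Your proposal is correct: it is precisely the Laplace-transform/Lieb-concavity argument of Tropp, with the Hermitian dilation handling the rectangular case and the final algebra (tail $2n\exp\bigl(-\tfrac{t^2/2}{\sigma^2+Bt/3}\bigr)$, then $t=\sqrt{4c\sigma^{2}\log(2n)}+cB\log(2n)$) checking out. The paper does not prove this statement at all --- it imports it directly from the cited reference \cite{tropp2010matrixmtg} --- so your sketch is exactly the proof the paper implicitly relies on, and no comparison beyond that is needed.
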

We also make use of the following facts: for all $i$ and $j$, we
have 
\begin{equation}
\left\Vert \P_{T}(\ei\ej^{\top})\right\Vert _{F}^{2}\le\frac{2\mu_{0}r}{n}.\label{eq:basic_inequality}
\end{equation}
This follows from the definition of $\P_{T}$ and the standard incoherence
condition~\eqref{eq:incoherence}.

\subsection{Proof of Lemma~\ref{lem:op_inf}}

We may write 
\[
\left(\R_{\Omega}-\II\right)\Z=\sum_{i,j}\S_{(ij)}:=\sum_{i,j}\left(\frac{1}{p}\gamma_{ij}-1\right)Z_{ij}\ei\ej^{\top},
\]
where $\left\{ \S_{(ij)}\right\} $ are independent matrices satisfying
$\mathbb{E}[\S_{(ij)}]=0$ and $\left\Vert \S_{(ij)}\right\Vert \le\frac{1}{p}\left\Vert \Z\right\Vert _{\infty}.$
Moreover, we have 
\[
\mathbb{E}\sum_{i,j}\S_{(ij)}^{\top}\S_{(ij)}
=\sum_{i,j}Z_{ij}^{2}\ej\ei^{\top}\ei\ej^{\top}\mathbb{E}\left(\frac{1}{p}\gamma_{ij}-1\right)^{2}
=\sum_{i,j}\frac{1-p}{p}Z_{ij}^{2}\ej\ej^{\top}
\]
and thus 
\[
\left\Vert \mathbb{E}\sum_{i,j}\S_{(ij)}^{\top}\S_{(ij)}\right\Vert \le\frac{1}{p}\max_{j}\left|\sum_{i=1}^{n}Z_{ij}^{2}\right|\le\frac{1}{p}\left\Vert \Z\right\Vert _{\infty,2}^{2}.
\]
In a similar way we can bound $\left\Vert \mathbb{E}\sum_{i,j}\S_{(ij)}\S_{(ij)}^{\top}\right\Vert $
by the same quantity. Applying the matrix Bernstein inequality in
Theorem~\ref{lem:matrix_bernstein} proves the lemma.

\subsection{Proof of Lemma~\ref{lem:inf_2}}

Fix $b\in[n]$. The $b$-th column of the matrix $\left(\P_{T}\R_{\Omega}-\P_{T}\right)\Z$
can be written as 
\[
\left(\left(\P_{T}\R_{\Omega}-\P_{T}\right)\Z\right)\e_{b}=\sum_{i,j}\s_{(ij)}:=\sum_{i,j}\left(\frac{1}{p}\gamma_{ij}-1\right)Z_{ij}\P_{T}(\ei\ej^{\top})\e_{b},
\]
where $\{\s_{(ij)}\}$ are independent column vectors in $\mathbb{R}^{n}$.
Note that $\mathbb{E}\left[\s_{(ij)}\right]=0$ and
\[
\left\Vert \s_{(ij)}\right\Vert _{2}\le\frac{1}{p}\sqrt{\frac{\mu_{0}r}{n}}\left\Vert \Z\right\Vert _{\infty}\le\frac{1}{c_{0}\log n}\sqrt{\frac{n}{\mu_{0}r}}\left\Vert \Z\right\Vert _{\infty},
\]
where the last inequality follows from the assumption of $p$ in the
statement of the lemma. We also have 
\[
\left|\mathbb{E}\left[\sum_{i,j}\s_{(ij)}^{\top}\s_{(ij)}\right]\right|=\left|\sum_{i,j}\mathbb{E}\left[\left(\frac{1}{p}\gamma_{ij}-1\right)\right]Z_{ij}^{2}\left\Vert \P_{T}(\ei\ej^{\top})\e_{b}\right\Vert _{2}^{2}\right|=\frac{1-p}{p}\sum_{i,j}Z_{ij}^{2}\left\Vert \P_{T}(\ei\ej^{\top})\e_{b}\right\Vert _{2}^{2}.
\]
Observe that 
\[
\left\Vert \P_{T}(\ei\ej^{\top})\e_{b}\right\Vert _{2}=\left\Vert \U\U^{\top}\ei\ej^{\top}\e_{b}+(\I-\U\U^{\top})\ei\ej^{\top}\V\V^{\top}\e_{b}\right\Vert _{2}\le\sqrt{\frac{\mu_{0}r}{n}}\left|\ej^{\top}\e_{b}\right|+\left|\ej^{\top}\V\V^{\top}\e_{b}\right|
\]
using the incoherence condition~\eqref{eq:incoherence}. It follows
that 
\begin{align*}
\left|\mathbb{E}\left[\sum_{i,j}\s_{(ij)}^{\top}\s_{(ij)}\right]\right| & \le\frac{2}{p}\sum_{i,j}Z_{ij}^{2}\frac{\mu_{0}r}{n}\left|\ej^{\top}\e_{b}\right|^{2}+\frac{2}{p}\sum_{i,j}Z_{ij}^{2}\left|\ej^{\top}\V\V^{\top}\e_{b}\right|^{2}\\
 & =\frac{2\mu_{0}r}{pn}\sum_{i}Z_{ib}^{2}+\frac{2}{p}\sum_{j}\left|\ej^{\top}\V\V^{\top}\e_{b}\right|^{2}\sum_{i}Z_{ij}^{2}\\
 & \le\frac{2}{p}\frac{\mu_{0}r}{n}\left\Vert \Z\right\Vert _{\infty,2}^{2}+\frac{2}{p}\left\Vert \V\V^{\top}\e_{b}\right\Vert ^{2}\left\Vert \Z\right\Vert _{\infty,2}^{2}\\
 & \le\frac{4\mu_{0}r}{pn}\left\Vert \Z\right\Vert _{\infty,2}^{2}\le\frac{4}{c_{0}\log n}\left\Vert \Z\right\Vert _{\infty,2}^{2}.
\end{align*}
We can bound $\left\Vert \mathbb{E}\left[\sum_{i,j}\s_{(ij)}\s_{(ij)}^{\top}\right]\right\Vert $
by the same quantity in a similar manner. Treating $\{\s_{(ij)}\}$
as $n\times1$ matrices and applying the matrix Bernstein inequality
in Theorem~\ref{lem:matrix_bernstein} gives that w.h.p. 
\[
\left\Vert \left(\left(\P_{T}\R_{\Omega}-\P_{T}\right)\Z\right)\e_{b}\right\Vert _{2}\le\frac{1}{2}\sqrt{\frac{n}{\mu_{0}r}}\left\Vert \Z\right\Vert _{\infty}+\frac{1}{2}\left\Vert \Z\right\Vert _{\infty,2}
\]
provided $c_{0}$ in the lemma statement is large enough. In a similar
fashion we prove that $\left\Vert \e_{a}^{\top}\left(\left(\P_{T}\R_{\Omega}-\P_{T}\right)\Z\right)\right\Vert $
is bounded by the same quantity w.h.p. The lemma follows from a union
bound over all $(a,b)\in[n]\times[n]$.

\section{Proof of Corollary~\ref{cor:SVDproj}\label{sec:proof_SVDproj}}

When $p\gtrsim\frac{\log^{2}n}{n}$, the standard Bernstein inequality
and a union bound implies that w.h.p. the degrees (i.e., the number
of observed entries) of the rows and columns of $\P_{\Omega}\M$ are
bounded by $2pn$. This means $\frac{1}{p}\widetilde{\M}^{\Omega}=\frac{1}{p}\P_{\Omega}\M=\R_{\Omega}\M$.
By Lemma~\ref{lem:inf}, we have

\begin{equation}
\left\Vert \frac{1}{p}\widetilde{\M}^{\Omega}-\M\right\Vert \le c\left(\frac{1}{p}\left\Vert \M\right\Vert _{\infty}\log n+\sqrt{\frac{1}{p}\log n}\left\Vert \M\right\Vert _{\infty,2}\right).\label{eq:k1}
\end{equation}
Let $\sigma_{i}$ be the $i$-th singular value of $\M$ (with $\sigma_{i}=0$
for $i>r$), and recall that $\tilde{\sigma}_{i}$ is the $i$-th
singular values of $\widetilde{\M}^{\Omega}$. By Weyl's inequality~\cite{bhatia1987perturbation},
we obtain that for $i=r+1,\ldots,n$, 
\begin{equation}
\frac{1}{p}\tilde{\sigma}_{i}=\left|\frac{1}{p}\tilde{\sigma}_{i}-\sigma_{i}\right|\le\left\Vert \frac{1}{p}\widetilde{\M}^{\Omega}-\M\right\Vert .\label{eq:k2}
\end{equation}
It follows that 
\begin{align*}
\left\Vert \M-\textsf{T}_{r}(\widetilde{\M}^{\Omega})\right\Vert  & \le\left\Vert \M-\frac{1}{p}\widetilde{\M}^{\Omega}\right\Vert +\left\Vert \frac{1}{p}\widetilde{\M}^{\Omega}-\textsf{T}_{r}(\widetilde{\M}^{\Omega})\right\Vert \\
 & =\left\Vert \M-\frac{1}{p}\widetilde{\M}^{\Omega}\right\Vert +\max_{i=r+1,\ldots,n}\frac{1}{p}\tilde{\sigma}_{i}\\
 & \le2c\left(\frac{1}{p}\left\Vert \M\right\Vert _{\infty}\log n+\sqrt{\frac{1}{p}\log n}\left\Vert \M\right\Vert _{\infty,2}\right),
\end{align*}
where we use~(\ref{eq:k1}) and~(\ref{eq:k2}) in the last inequality.
Since the rank of $\M-\textsf{T}_{r}(\widetilde{\M}^{\Omega})$ is
at most $r$, we have $\left\Vert \M-\textsf{T}_{r}(\widetilde{\M}^{\Omega})\right\Vert _{F}\le\sqrt{r}\left\Vert \M-\textsf{T}_{r}(\widetilde{\M}^{\Omega})\right\Vert $
and the corollary follows.

\section{Proof of Theorem~\ref{thm:cluster}\label{sec:proof_cluster}}

The proof is similar to that of Theorem~\ref{thm:uniform}, and we
shall point out where they differ. We use the same notations as in
the proof of Theorem~\ref{thm:uniform}, except that throughout this
section we \emph{re-define} the two projections: 
\begin{align*}
\P_{T}\Z & :=\U\U^{\top}\Z\bar{\V}\bar{\V}^{\top}+\bar{\U}\bar{\U}^{\top}\Z\V\V^{\top}-\U\U^{\top}\Z\V\V^{\top},\\
\P_{T^{\bot}}\Z & :=(\bar{\U}\bar{\U}^{\top}-\U\U^{\top})\Z(\bar{\V}\bar{\V}^{\top}-\V\V^{\top}).
\end{align*}
Note that $\P_{T}\Z+\P_{T^{\bot}}\Z=\bu\bu^{\top}\Z\bv\bv^{\top}$.
Since $\mbox{col}(\U)\subseteq\mbox{col}(\bu)$ and $\frac{\mu_{0}r}{n}\le\frac{\bar{\mu}_{0}\bar{r}}{n}$,
one can verify that under the incoherence assumption on $\U$ and
$\bu$ in the theorem statement, we have for all $i,j,b\in[n]$, 
\begin{align}
\left\Vert \P_{T}(\ei\ej^{\top})\right\Vert _{F}^{2} & =\left\Vert \U^{\top}\ei\ej^{\top}\bar{\V}\right\Vert _{F}^{2}+\left\Vert \bu^{\top}\ei\ej^{\top}\V\right\Vert _{F}^{2}-\left\Vert \U^{\top}\ei\ej^{\top}\V\right\Vert _{F}^{2}\le2\frac{\mu_{0}r}{n}\cdot\frac{\bar{\mu}_{0}\bar{r}}{n}.\label{eq:basic_ineq_cluster}\\
\left\Vert \P_{T^{\bot}}(\ei\ej^{\top})\right\Vert _{F} & =\left\Vert (\bar{\U}\bar{\U}^{\top}-\U\U^{\top})\ei\right\Vert _{2}\left\Vert \ej^{\top}(\bar{\V}\bar{\V}^{\top}-\V\V^{\top})\right\Vert _{2}\le\frac{\bar{\mu}_{0}\bar{r}}{n}\label{eq:basic_ineq_cluster2}\\
\left\Vert \P_{T}(\ei\ej^{\top})\e_{b}\right\Vert _{2}^{2} & \le2\frac{\mu_{0}r}{n}\frac{\bar{\mu}_{0}^{2}\bar{r}^{2}}{n}.\label{eq:basic_ineq_cluster3}
\end{align}
We have the following subgradient optimality condition.
\begin{prop}
\label{prop:opt_cond_cluster} $\X^{*}:=\bar{\U}^{\top}\M\bar{\V}$
is the unique optimal solution to the program~\eqref{eq:cluster_prog}
if the following conditions are satisfied: 1. $\left\Vert \P_{T}-\P_{T}\R_{\Omega}\P_{T}\right\Vert _{op}\le\frac{1}{2}$
and $\frac{1}{\sqrt{p}}\left\Vert \P_{\Omega}\P_{T^{\bot}}\right\Vert _{op}\le\sqrt{\frac{2\bar{\mu}_{0}\bar{r}}{\mu_{0}r}}$;
2. there exist a dual certificate $\Y$ with $\P_{\Omega}\Y=\Y$ and
obeys (a) $\left\Vert \P_{T}\Y-\U\V^{\top}\right\Vert _{F}\le\sqrt{\frac{\mu_{0}r}{32\bar{\mu}_{0}\bar{r}}}$
and (b) $\left\Vert \P_{T^{\bot}}\Y\right\Vert \le\frac{1}{2}.$\end{prop}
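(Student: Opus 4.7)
}
The plan is to mimic the argument of Proposition~\ref{prop:opt_cond} (Appendix~\ref{sec:proof_opt_cond}), adapted to the redefined $\P_{T},\P_{T^{\bot}}$ and to the fact that the optimization variable lives in $\mathbb{R}^{\bar{r}\times\bar{r}}$ rather than $\mathbb{R}^{n\times n}$. Since $\bar{\U}$ and $\bar{\V}$ have orthonormal columns, $\Vert\X\Vert_{*}=\Vert\bar{\U}\X\bar{\V}^{\top}\Vert_{*}$, so it suffices to show $\Vert\Z\Vert_{*}>\Vert\M\Vert_{*}$ for every $\Z:=\bar{\U}\X\bar{\V}^{\top}\neq\M$ with $\P_{\Omega}\Z=\P_{\Omega}\M$. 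Any such $\Z$ lies in $S:=\{\Y:\bar{\U}\bar{\U}^{\top}\Y\bar{\V}\bar{\V}^{\top}=\Y\}$, as does $\M$. A direct calculation gives two structural facts that I would record at the outset: (i)~$\P_{T}$ and $\P_{T^{\bot}}$ are orthogonal complementary projections \emph{within} $S$, so $\P_{T}+\P_{T^{\bot}}$ is the orthogonal projection onto $S$ and $\P_{T}\P_{T^{\bot}}=0$; and (ii)~the redefined $T^{\bot}$ is contained in the usual tangent-complement space $T_{M}^{\bot}=\{\W:\U^{\top}\W=0,\,\W\V=0\}$ of $\M$, which follows from $\bar{\U}\bar{\U}^{\top}\U=\U$ and $\bar{\V}\bar{\V}^{\top}\V=\V$. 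Fact~(ii) is what certifies $\U\V^{\top}+\G\in\partial\Vert\cdot\Vert_{*}(\M)$ for every $\G\in T^{\bot}$ with $\Vert\G\Vert\le1$.

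I would then pick $\G\in T^{\bot}$ with $\Vert\G\Vert\le1$ achieving $\langle\G,\P_{T^{\bot}}(\Z-\M)\rangle=\Vert\P_{T^{\bot}}(\Z-\M)\Vert_{*}$; such a $\G$ exists by nuclear/spectral duality because the singular vectors of $\P_{T^{\bot}}(\Z-\M)$ lie in the column and row ranges of $\bar{\U}\bar{\U}^{\top}-\U\U^{\top}$ and $\bar{\V}\bar{\V}^{\top}-\V\V^{\top}$, respectively, so the dualizing $\G$ lies in $T^{\bot}$ as well. Using $\P_{\Omega}\Y=\Y$ together with $\P_{\Omega}(\Z-\M)=0$ to cancel $\Y$, and then decomposing $\Z-\M$ via $\P_{T}$ and $\P_{T^{\bot}}$ (which sum to the identity on $S\ni\Z-\M$ by fact~(i)), the subgradient inequality together with conditions~2(a) and 2(b) yields
\[
\Vert\Z\Vert_{*}-\Vert\M\Vert_{*}\;\ge\;\tfrac{1}{2}\Vert\P_{T^{\bot}}(\Z-\M)\Vert_{*}\;-\;\sqrt{\tfrac{\mu_{0}r}{32\bar{\mu}_{0}\bar{r}}}\,\Vert\P_{T}(\Z-\M)\Vert_{F}.
\]

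The key step, and the real analogue of Lemma~\ref{lem:mini_lemma}, is to bound $\Vert\P_{T}(\Z-\M)\Vert_{F}$ using \emph{both} parts of Condition~1. The first part gives $\Vert\sqrt{p}\,\R_{\Omega}\P_{T}(\Z-\M)\Vert_{F}\ge\tfrac{1}{\sqrt{2}}\Vert\P_{T}(\Z-\M)\Vert_{F}$ in the standard way. Because $\P_{\Omega}(\Z-\M)=0$ and $\Z-\M\in S$, we have $\P_{\Omega}\P_{T}(\Z-\M)=-\P_{\Omega}\P_{T^{\bot}}(\Z-\M)$, so the second part of Condition~1 gives
\[
\Vert\sqrt{p}\,\R_{\Omega}\P_{T}(\Z-\M)\Vert_{F}=\tfrac{1}{\sqrt{p}}\Vert\P_{\Omega}\P_{T^{\bot}}(\Z-\M)\Vert_{F}\le\sqrt{\tfrac{2\bar{\mu}_{0}\bar{r}}{\mu_{0}r}}\,\Vert\P_{T^{\bot}}(\Z-\M)\Vert_{F}.
\]
Chaining these two inequalities and using $\Vert\cdot\Vert_{F}\le\Vert\cdot\Vert_{*}$ gives $\Vert\P_{T}(\Z-\M)\Vert_{F}\le2\sqrt{\bar{\mu}_{0}\bar{r}/(\mu_{0}r)}\,\Vert\P_{T^{\bot}}(\Z-\M)\Vert_{*}$. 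Substituting, the coefficient of $\Vert\P_{T^{\bot}}(\Z-\M)\Vert_{*}$ collapses to $\tfrac{1}{2}-\tfrac{\sqrt{2}}{4}>0$, so $\Vert\Z\Vert_{*}>\Vert\M\Vert_{*}$ unless $\P_{T^{\bot}}(\Z-\M)=0$; in that degenerate case the same chain forces $\P_{T}(\Z-\M)=0$ as well, and since $\Z-\M\in S$ this gives $\Z=\M$, hence $\X=\X^{*}$.

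The hard part is extracting the right tight bound on $\Vert\P_{T}(\Z-\M)\Vert_{F}$: the original Proposition~\ref{prop:opt_cond} could afford the crude estimate $\tfrac{1}{\sqrt{p}}\Vert\P_{\Omega}\Vert_{op}\le n^{5}$ from Lemma~\ref{lem:mini_lemma} because its condition~2(b) absorbed a matching polynomial factor, but here, for the subspace structure to actually help, $\P_{\Omega}\P_{T^{\bot}}$ must be controlled by the specific factor $\sqrt{\bar{\mu}_{0}\bar{r}/(\mu_{0}r)}$. This is precisely why the second clause of Condition~1 is packaged into the proposition; verifying it (at the scaling of $p$ in Theorem~\ref{thm:cluster}) is a separate matrix-Bernstein estimate in the spirit of Section~\ref{sec:Proofs}, but it lies outside the proof of the present proposition.
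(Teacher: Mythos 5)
Your proposal is correct and follows essentially the same route as the paper's proof: the same subgradient inequality combined with Conditions 2(a)--2(b), and the same two-step chaining of both halves of Condition~1 to bound $\left\Vert \P_{T}\bDelta\right\Vert _{F}$ by $\left\Vert \P_{T^{\bot}}\bDelta\right\Vert _{*}$. You are in fact slightly more careful than the paper at one step: chaining $\frac{1}{\sqrt{2}}\left\Vert \P_{T}\bDelta\right\Vert _{F}\le\sqrt{2\bar{\mu}_{0}\bar{r}/(\mu_{0}r)}\left\Vert \P_{T^{\bot}}\bDelta\right\Vert _{F}$ gives the factor $2\sqrt{\bar{\mu}_{0}\bar{r}/(\mu_{0}r)}$ as you wrote, not the $\sqrt{2\bar{\mu}_{0}\bar{r}/(\mu_{0}r)}$ the paper states (which drops a $\sqrt{2}$); your resulting coefficient $\frac{1}{2}-\frac{\sqrt{2}}{4}$ is still positive so the conclusion is unaffected.
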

\begin{proof}
Consider any feasible solution $\X$ to~\eqref{eq:cluster_prog}.
Let $\bDelta:=\bar{\U}\X\bar{\V}^{\top}-\M$ and $\G\in\mathbb{R}^{n\times n}$
be such that $\left\Vert \P_{T^{\bot}}\G\right\Vert =1$ and $\left\langle \P_{T^{\bot}}\G,\P_{T^{\bot}}\bDelta\right\rangle =\left\Vert \P_{T^{\bot}}\bDelta\right\Vert _{*}$.
Note that $\P_{\Omega}\bDelta=0$, $\left\Vert \X\right\Vert _{*}=\left\Vert \bu\X\bar{\V}^{\top}\right\Vert _{*}$
and $\left\Vert \bu^{\top}\M\bar{\V}\right\Vert _{*}=\left\Vert \M\right\Vert _{*}$.
Similarly to the proof of Proposition~\ref{prop:opt_cond}, we have
\begin{equation}
\left\Vert \X\right\Vert _{*}-\left\Vert \bu^{\top}\M\bar{\V}\right\Vert _{*}\ge\left\langle \U\V^{\top}-\P_{T}\Y+\P_{T^{\bot}}\G-\P_{T^{\bot}}\Y,\bDelta\right\rangle \ge\frac{1}{2}\left\Vert \P_{T^{\bot}}\bDelta\right\Vert _{*}-\sqrt{\frac{\mu_{0}r}{32\bar{\mu}_{0}\bar{r}}}\left\Vert \P_{T}\bDelta\right\Vert _{F}.\label{eq:subgrad_inq_cluster}
\end{equation}
On the other hand, since $\left\Vert \P_{T}\R_{\Omega}\P_{T}-\P_{T}\right\Vert _{op}\le\frac{1}{2}$
by assumption, we have 
\[
\frac{1}{\sqrt{2}}\left\Vert \P_{T}\bDelta\right\Vert _{F}\le\sqrt{\left\langle \left(\P_{T}\R_{\Omega}\P_{T}-\P_{T}\right)\bDelta,\P_{T}\bDelta\right\rangle +\left\langle \P_{T}\bDelta,\P_{T}\bDelta\right\rangle }=\frac{1}{\sqrt{p}}\left\Vert \P_{\Omega}\P_{T}\bDelta\right\Vert _{F}.
\]
Because $\bu\bu^{\top}\bDelta\bv\bv^{\top}=\bDelta$, we have $0=\P_{\Omega}(\bDelta)=\P_{\Omega}\left(\P_{T}+\P_{T^{\bot}}\right)\bDelta$
and thus 
\[
\frac{1}{\sqrt{p}}\left\Vert \P_{\Omega}\P_{T}\bDelta\right\Vert _{F}=\frac{1}{\sqrt{p}}\left\Vert \P_{\Omega}\P_{T^{\bot}}\bDelta\right\Vert _{F}\le\sqrt{\frac{2\bar{\mu}_{0}\bar{r}}{\mu_{0}r}}\left\Vert \P_{T^{\bot}}\bDelta\right\Vert _{F},
\]
where the last inequality follows from Condition 1 in the statement
of the proposition. Combining the last two display equations gives
$\left\Vert \P_{T}\bDelta\right\Vert _{F}\le\sqrt{\frac{2\bar{\mu}_{0}\bar{r}}{\mu_{0}r}}\left\Vert \P_{T^{\bot}}\bDelta\right\Vert _{*}.$
It follows from~\eqref{eq:subgrad_inq_cluster} that 
\begin{align*}
\left\Vert \X\right\Vert _{*}-\left\Vert \bu^{\top}\M\bv\right\Vert _{*} & \ge\frac{1}{2}\left\Vert \P_{T^{\bot}}\bDelta\right\Vert _{*}-\frac{1}{4}\left\Vert \P_{T^{\bot}}\bDelta\right\Vert _{*}\ge\frac{1}{4}\left\Vert \P_{T^{\bot}}\bDelta\right\Vert _{*}.
\end{align*}
The last RHS is strictly positive for all $\bDelta$ with $\P_{\Omega}\bDelta=0$
and $\bDelta\neq0$; otherwise we would have $\P_{T}\bDelta=\left(\P_{T}+\P_{T^{\bot}}\right)\bDelta=\bDelta$
and thus $\P_{T}\R_{\Omega}\P_{T}\bDelta=0$, contradicting $\left\Vert \P_{T}\R_{\Omega}\P_{T}-\P_{T}\right\Vert _{op}\le\frac{1}{2}$.
This proves that $\X^{*}:=\bu^{\top}\M\bv$ is the unique optimal
solution to~\eqref{eq:cluster_prog}. 
\end{proof}
We proceed by showing that Condition 1 in Proposition~\ref{prop:opt_cond_cluster}
is satisfied w.h.p. under the conditions of Theorem~\ref{thm:cluster}.
This is done in the lemma below, which is proved in Section~\ref{sub:proof_op_cluster}
to follow.
\begin{lem}
\label{lem:op_cluster}If $p\ge c_{0}\frac{\mu_{0}\bar{\mu}_{0}r\bar{r}}{n^{2}}\log n$
for some sufficiently large constant $c_{0}$, then w.h.p. we have
\[
\left\Vert \P_{T}\R_{\Omega}\P_{T}-\P_{T}\right\Vert _{op}\le\frac{1}{2}\qquad\text{and}\qquad\frac{1}{\sqrt{p}}\left\Vert \P_{\Omega}\P_{T^{\bot}}\right\Vert _{op}\le\sqrt{\frac{2\bar{\mu}_{0}\bar{r}}{\mu_{0}r}}.
\]

\end{lem}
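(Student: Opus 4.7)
The plan is to prove both inequalities by applying the matrix Bernstein inequality (Theorem~\ref{lem:matrix_bernstein}), viewed as acting on operators on $\mathbb{R}^{n\times n}$ through the natural $n^2$-dimensional vectorization. Write $A\otimes B$ for the rank-one operator $Z\mapsto\langle A,Z\rangle B$. The key algebraic identity underlying both parts is $\sum_{i,j}\P_T(\ei\ej^\top)\otimes\P_T(\ei\ej^\top)=\P_T$ (and the analogous statement for $\P_{T^\perp}$), which follows from $\II=\sum_{i,j}(\ei\ej^\top)\otimes(\ei\ej^\top)$ together with $\P_T(A\otimes B)\P_T=\P_T(A)\otimes\P_T(B)$ and $\P_T^2=\P_T$.

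For the first inequality, I would write
\[
\P_T\R_\Omega\P_T-\P_T=\sum_{i,j}\left(\tfrac{1}{p}\gamma_{ij}-1\right)\mathcal{S}_{ij},\qquad \mathcal{S}_{ij}:=\P_T(\ei\ej^\top)\otimes\P_T(\ei\ej^\top),
\]
a sum of independent mean-zero self-adjoint operators. By~\eqref{eq:basic_ineq_cluster}, each summand has operator norm at most $\tfrac{1}{p}\|\P_T(\ei\ej^\top)\|_F^2\le \tfrac{2\mu_0\bar{\mu}_0 r\bar{r}}{pn^2}$. For the variance, since $\mathcal{S}_{ij}^2=\|\P_T(\ei\ej^\top)\|_F^2\,\mathcal{S}_{ij}$, one has
\[
\Bigl\|\sum_{i,j}\mathbb{E}[(\tfrac{\gamma_{ij}}{p}-1)^2\mathcal{S}_{ij}^2]\Bigr\|_{op}\le \tfrac{1}{p}\max_{i,j}\|\P_T(\ei\ej^\top)\|_F^2\cdot\Bigl\|\sum_{i,j}\mathcal{S}_{ij}\Bigr\|_{op}\le\tfrac{2\mu_0\bar{\mu}_0 r\bar{r}}{pn^2},
\]
where the last step uses the identity $\sum_{i,j}\mathcal{S}_{ij}=\P_T$. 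Matrix Bernstein then delivers $\|\P_T\R_\Omega\P_T-\P_T\|_{op}\le 1/2$ w.h.p.\ once $c_0$ in the hypothesis on $p$ is sufficiently large.

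For the second inequality, I would use $\|\P_\Omega\P_{T^\perp}\|_{op}^2=\|\P_{T^\perp}\P_\Omega\P_{T^\perp}\|_{op}$ and decompose
\[
\P_{T^\perp}\P_\Omega\P_{T^\perp}-p\P_{T^\perp}=\sum_{i,j}(\gamma_{ij}-p)\mathcal{T}_{ij},\qquad\mathcal{T}_{ij}:=\P_{T^\perp}(\ei\ej^\top)\otimes\P_{T^\perp}(\ei\ej^\top),
\]
whose expectation is zero since $\sum_{i,j}\mathcal{T}_{ij}=\P_{T^\perp}$. By~\eqref{eq:basic_ineq_cluster2}, $\|\mathcal{T}_{ij}\|_{op}\le(\bar{\mu}_0\bar{r}/n)^2$, and the same telescoping as above gives variance at most $p(\bar{\mu}_0\bar{r}/n)^2$. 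Matrix Bernstein yields, w.h.p.,
\[
\|\P_{T^\perp}\P_\Omega\P_{T^\perp}-p\P_{T^\perp}\|_{op}\le c\sqrt{p(\bar{\mu}_0\bar{r}/n)^2\log n}+c(\bar{\mu}_0\bar{r}/n)^2\log n.
\]
Since $\text{col}(\U)\subseteq\text{col}(\bu)$ implies $\mu_0 r\le\bar{\mu}_0\bar{r}$, the hypothesis $p\ge c_0\mu_0\bar{\mu}_0 r\bar{r}\log n/n^2$ makes both terms on the right at most $\tfrac{\bar{\mu}_0\bar{r}}{\mu_0 r}p$, so $\|\P_{T^\perp}\P_\Omega\P_{T^\perp}\|_{op}\le p+\tfrac{\bar{\mu}_0\bar{r}}{\mu_0 r}p\le \tfrac{2\bar{\mu}_0\bar{r}}{\mu_0 r}p$, and taking square roots gives the claim.

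The main obstacle is the second inequality: its target bound $\sqrt{2\bar{\mu}_0\bar{r}/(\mu_0 r)}$ is not of the usual ``vanishing deviation'' form but rather a constant factor depending on the ratio of the two incoherence parameters. This forces a recentering by $p\P_{T^\perp}$ before applying Bernstein and a careful matching of the deviation terms to the target; the identity $\sum_{i,j}\mathcal{T}_{ij}=\P_{T^\perp}$, which controls the variance by $\|\P_{T^\perp}\|_{op}=1$ rather than by a cruder pointwise bound, is what makes the argument tight. The final step is a union bound combining the two high-probability events.
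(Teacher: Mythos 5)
Your proposal is correct and matches the paper's proof in all essential respects: both inequalities are handled by applying the matrix Bernstein inequality to the same operator decompositions, with the range and variance bounds coming from~\eqref{eq:basic_ineq_cluster} and~\eqref{eq:basic_ineq_cluster2}, and the paper likewise obtains the second inequality by first bounding $\left\Vert \tfrac{1}{p}\P_{T^{\bot}}\P_{\Omega}\P_{T^{\bot}}-\P_{T^{\bot}}\right\Vert_{op}$ and then adding back $\left\Vert \P_{T^{\bot}}\right\Vert_{op}=1$ under the square root. The only nit is the final constant accounting — two deviation terms each $\le\tfrac{\bar{\mu}_0\bar{r}}{\mu_0 r}p$ together with $p\le\tfrac{\bar{\mu}_0\bar{r}}{\mu_0 r}p$ actually gives $3\tfrac{\bar{\mu}_0\bar{r}}{\mu_0 r}p$ rather than $2\tfrac{\bar{\mu}_0\bar{r}}{\mu_0 r}p$ — but this is absorbed by taking $c_0$ slightly larger so that each deviation term is at most $\tfrac{1}{2}\tfrac{\bar{\mu}_0\bar{r}}{\mu_0 r}p$.
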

We now construct a dual certificate $\Y$ using the golfing scheme.
This is done similarly as before; in particular, we let $k_{0}:=20\log(32\bar{\mu}_{0}\bar{r})$,
$q:=1-(1-p)^{1/k_{0}}\ge\frac{p}{k_{0}}$, $\W_{k}$ be given by~\eqref{eq:W_defn}
(with the re-defined $\P_{T}$) and $\Y:=\W_{k_{0}}$. Clearly $\P_{\Omega}(\Y)=\Y$
by construction. Note that for $k\in[k_{0}]$, the matrix $\D_{k}:=\U\V^{\top}-\P_{T}(\W_{k})$
again satisfies~\eqref{eq:recursion}. It follows that $\left\Vert \D_{k}\right\Vert _{F}\le\frac{1}{2}\left\Vert \D_{k-1}\right\Vert _{F}$
by the first inequality in Lemma~\ref{lem:op_cluster} and thus 
\[
\left\Vert \P_{T}\Y-\U\V^{\top}\right\Vert _{F}=\left\Vert \D_{k_{0}}\right\Vert _{F}\le\left(\frac{1}{2}\right)^{k_{0}}\left\Vert \D_{0}\right\Vert _{F}\le\sqrt{\frac{r}{32\bar{\mu}_{0}\bar{r}}}\le\sqrt{\frac{\mu_{0}r}{32\bar{\mu}_{0}\bar{r}}},
\]
proving Condition 2(a) in Proposition~\ref{prop:opt_cond_cluster}.
To prove Condition 2(b), we need three lemmas which are analogues
of Lemmas~\ref{lem:op_inf},~\ref{lem:inf_2} and~\ref{lem:inf}
in the proof of Theorem~\ref{thm:uniform}.
\begin{lem}
\label{lem:op_inf_cluster} Suppose $\Z$ is a fixed $n\times n$
matrix. For some universal constant $c>1$, we have w.h.p. 
\[
\left\Vert \P_{T^{\bot}}\left(\R_{\Omega}-\II\right)\Z\right\Vert \le c\left(\frac{\bar{\mu}_{0}\bar{r}}{pn}\log n\left\Vert \Z\right\Vert _{\infty}+\sqrt{\frac{\bar{\mu}_{0}\bar{r}\log n}{pn}}\left\Vert \Z\right\Vert _{\infty,2}\right).
\]

\end{lem}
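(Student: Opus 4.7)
}

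The plan is to apply the matrix Bernstein inequality (Theorem~\ref{lem:matrix_bernstein}) to the sum decomposition
\[
\P_{T^{\bot}}(\R_{\Omega}-\II)\Z=\sum_{i,j}\S_{(ij)},\qquad \S_{(ij)}:=\left(\frac{1}{p}\gamma_{ij}-1\right)Z_{ij}\,\P_{T^{\bot}}(\ei\ej^{\top}),
\]
where the $\S_{(ij)}$ are independent, zero-mean. The key structural observation, which is what makes the bound tight, is that with the redefined $\P_{T^{\bot}}$ we have the rank-one factorization $\P_{T^{\bot}}(\ei\ej^{\top})=\x_i\y_j^{\top}$, where $\x_i:=(\bu\bu^{\top}-\U\U^{\top})\ei$ and $\y_j:=(\bv\bv^{\top}-\V\V^{\top})\ej$. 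Since $\mathrm{col}(\U)\subseteq\mathrm{col}(\bu)$, the matrix $\bu\bu^{\top}-\U\U^{\top}$ is an orthogonal projection onto a subspace of $\mathrm{col}(\bu)$, so $\|\x_i\|_2\le\|\bu\bu^{\top}\ei\|_2\le\sqrt{\bar{\mu}_0\bar{r}/n}$ by the incoherence of $\bu$, and analogously $\|\y_j\|_2\le\sqrt{\bar{\mu}_0\bar{r}/n}$.

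From these, the uniform bound comes out directly: $\|\S_{(ij)}\|\le\frac{1}{p}\|\Z\|_{\infty}\|\x_i\|_2\|\y_j\|_2\le\frac{\bar{\mu}_0\bar{r}}{pn}\|\Z\|_{\infty}=:B$. For the variance, the factorization gives $\S_{(ij)}\S_{(ij)}^{\top}=(\frac{1}{p}\gamma_{ij}-1)^{2}Z_{ij}^{2}\|\y_j\|_2^{2}\,\x_i\x_i^{\top}$, whence
\[
\mathbb{E}\sum_{i,j}\S_{(ij)}\S_{(ij)}^{\top}=\frac{1-p}{p}\sum_{i}\Bigl(\sum_{j}Z_{ij}^{2}\|\y_j\|_2^{2}\Bigr)\x_i\x_i^{\top}\preceq\frac{1}{p}\Bigl(\max_i\sum_{j}Z_{ij}^{2}\|\y_j\|_2^{2}\Bigr)\sum_{i}\x_i\x_i^{\top}.
\]
The inner sum telescopes to $\sum_i\x_i\x_i^{\top}=(\bu\bu^{\top}-\U\U^{\top})^{2}=\bu\bu^{\top}-\U\U^{\top}$, which has spectral norm at most $1$; meanwhile $\sum_j Z_{ij}^{2}\|\y_j\|_2^{2}\le\frac{\bar{\mu}_0\bar{r}}{n}\|\Z\|_{\infty,2}^{2}$. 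Hence $\|\mathbb{E}\sum_{i,j}\S_{(ij)}\S_{(ij)}^{\top}\|\le\frac{\bar{\mu}_0\bar{r}}{pn}\|\Z\|_{\infty,2}^{2}=:\sigma^{2}$, and the symmetric bound on $\|\mathbb{E}\sum_{i,j}\S_{(ij)}^{\top}\S_{(ij)}\|$ follows by the same argument with the roles of $\x_i$ and $\y_j$ swapped. Plugging $B$ and $\sigma^{2}$ into Theorem~\ref{lem:matrix_bernstein} yields the stated bound with a suitable universal constant $c$.

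The only delicate point is extracting the sharp variance bound: using the crude estimate $\sum_i a_i\x_i\x_i^{\top}\preceq(\max_i a_i)\I$ would give an extra factor of $n/(\bar{\mu}_0\bar{r})$ and ruin the target rate. The saving comes from recognizing that $\sum_i\x_i\x_i^{\top}$ is itself a projection of norm one, so the operator-norm bound is $\max_i a_i$ rather than $n\max_i a_i/\dim$. This is exactly parallel to the role of the identity $\left\Vert \sum_{i}\V\V^{\top}\e_i\e_i^{\top}\V\V^{\top}\right\Vert\le 1$ implicit in Lemma~\ref{lem:op_inf}; the current setting just requires the mild additional step of exploiting $\mathrm{col}(\U)\subseteq\mathrm{col}(\bu)$ to keep $\bu\bu^{\top}-\U\U^{\top}$ a genuine projection rather than a sign-indefinite difference.
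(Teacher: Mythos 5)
Your proof is correct and follows essentially the same approach as the paper's: the same decomposition into independent rank-one terms $\S_{(ij)}$, the same factorization $\P_{T^{\bot}}(\ei\ej^{\top})=\x_{i}\y_{j}^{\top}$ exploiting that $\bu\bu^{\top}-\U\U^{\top}$ and $\bv\bv^{\top}-\V\V^{\top}$ are projections (via $\mathrm{col}(\U)\subseteq\mathrm{col}(\bu)$, $\mathrm{col}(\V)\subseteq\mathrm{col}(\bv)$), and the same $B$ and $\sigma^{2}$ fed into matrix Bernstein. The only cosmetic difference is how the variance bound is organized --- the paper works with $\mathbb{E}\sum\S_{(ij)}^{\top}\S_{(ij)}$ and sandwiches a diagonal matrix between two copies of the projection $\bv\bv^{\top}-\V\V^{\top}$, whereas you pull out the maximum coefficient and then observe $\sum_{i}\x_{i}\x_{i}^{\top}=\bu\bu^{\top}-\U\U^{\top}$ is itself a projection of norm one --- but these steps are interchangeable.
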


\begin{lem}
\label{lem:inf2_cluster}Suppose $\Z$ is a fixed matrix. If $p\ge c_{0}\frac{\mu_{0}\bar{\mu}_{0}r\bar{r}\log n}{n}$
for some $c_{0}$ sufficiently large, then w.h.p. 
\[
\left\Vert \left(\P_{T}\R_{\Omega}-\P_{T}\right)\Z\right\Vert _{\infty,2}\le\frac{1}{2}\sqrt{\frac{n}{\mu_{0}r}}\left\Vert \Z\right\Vert _{\infty}+\frac{1}{2}\left\Vert \Z\right\Vert _{\infty,2}
\]

\end{lem}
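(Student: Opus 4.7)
The proof will mirror that of Lemma~\ref{lem:inf_2}, modified to handle the re-defined projection $\P_T$ for structured matrix completion. Fix a column index $b\in[n]$. The $b$-th column of $(\P_T\R_\Omega-\P_T)\Z$ decomposes as a sum of independent, zero-mean random vectors in $\mathbb{R}^n$:
\[
\bigl((\P_T\R_\Omega-\P_T)\Z\bigr)\e_b = \sum_{i,j}\s_{(ij)}, \qquad \s_{(ij)}:=\Bigl(\tfrac{1}{p}\gamma_{ij}-1\Bigr)Z_{ij}\,\P_T(\ei\ej^\top)\e_b.
\]
The plan is to apply the matrix Bernstein inequality (Theorem~\ref{lem:matrix_bernstein}) to these vectors viewed as $n\times 1$ matrices, and then take a union bound over all $n$ columns and all $n$ rows.

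For the uniform bound $B$, I invoke \eqref{eq:basic_ineq_cluster3}: $\|\P_T(\ei\ej^\top)\e_b\|_2 \le \sqrt{2\mu_0 \bar\mu_0^2 r\bar r^2/n^2}$, which combined with the assumed lower bound on $p$ yields $B\lesssim \frac{1}{c_0\log n}\sqrt{n/(\mu_0 r)}\|\Z\|_\infty$. The heart of the proof is the variance bound, where a crude application of \eqref{eq:basic_ineq_cluster3} inside $\sigma^2$ would not suffice. Instead, I expand
\[
\P_T(\ei\ej^\top)\e_b = (\ej^\top\bv\bv^\top\e_b)\,\U\U^\top\ei + (\ej^\top\V\V^\top\e_b)\,(\bu\bu^\top-\U\U^\top)\ei,
\]
and use $(x+y)^2\le 2x^2+2y^2$ together with $\|\U\U^\top\ei\|_2^2\le \mu_0 r/n$ and $\|\bu\bu^\top\ei\|_2^2\le \bar\mu_0\bar r/n$ to obtain
\[
\|\P_T(\ei\ej^\top)\e_b\|_2^2 \le 2\tfrac{\mu_0 r}{n}(\ej^\top\bv\bv^\top\e_b)^2 + 2\tfrac{\bar\mu_0\bar r}{n}(\ej^\top\V\V^\top\e_b)^2.
\]
Multiplying by $Z_{ij}^2$, summing over $i$ (bounded by $\|\Z\|_{\infty,2}^2$), and then over $j$ using $\sum_j(\ej^\top\bv\bv^\top\e_b)^2 = \|\bv\bv^\top\e_b\|_2^2\le \bar\mu_0\bar r/n$ and analogously $\|\V\V^\top\e_b\|_2^2\le\mu_0 r/n$, I get
\[
\sigma^2 := \Bigl\|\mathbb{E}\sum_{i,j}\s_{(ij)}^\top\s_{(ij)}\Bigr\| = \tfrac{1-p}{p}\sum_{i,j}Z_{ij}^2\|\P_T(\ei\ej^\top)\e_b\|_2^2 \le \frac{4\mu_0\bar\mu_0 r\bar r}{p n^2}\|\Z\|_{\infty,2}^2.
\]
The matching bound on $\|\mathbb{E}\sum_{i,j}\s_{(ij)}\s_{(ij)}^\top\|$ follows from the fact that this matrix is PSD with trace equal to the scalar above. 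Under the hypothesis on $p$, this gives $\sigma^2\lesssim \frac{1}{c_0\log n}\|\Z\|_{\infty,2}^2$, and Bernstein then produces the two terms $\sqrt{4c\sigma^2\log(2n)}\le \tfrac12\|\Z\|_{\infty,2}$ and $cB\log(2n)\le \tfrac12\sqrt{n/(\mu_0 r)}\|\Z\|_\infty$ for $c_0$ sufficiently large.

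The symmetric bound on each row $\e_a^\top((\P_T\R_\Omega-\P_T)\Z)$ follows by the same argument with the roles of $(i,\U,\bu)$ and $(j,\V,\bv)$ swapped; a union bound over all $2n$ indices, absorbed into the $\log n$ slack provided by $c_0$, completes the proof. The main obstacle is precisely the variance computation: the re-defined $\P_T$ has three constituent terms rather than two as in Lemma~\ref{lem:inf_2}, so after squaring and marginalization one obtains a cross term involving $\bu$ on one side and $\V$ on the other, which is exactly what introduces the product $\mu_0\bar\mu_0 r\bar r$ into the sampling requirement in place of $\mu_0 r$ as in the unstructured case.
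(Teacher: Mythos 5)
Your proposal follows exactly the paper's proof: fix a column $b$, decompose into independent zero-mean vectors $\s_{(ij)}$, expand $\P_T(\ei\ej^\top)\e_b$ into the $\U\U^\top\ei$ and $(\bu\bu^\top-\U\U^\top)\ei$ pieces, bound $B$ and $\sigma^2$ precisely as the paper does (arriving at the same $\tfrac{4\mu_0\bar\mu_0 r\bar r}{pn^2}\|\Z\|_{\infty,2}^2$ for the variance), apply matrix Bernstein, and union bound over rows and columns. The only wrinkle is that you quote \eqref{eq:basic_ineq_cluster3} as printed, which appears to have a stray power of $n$ (compare with the $\frac{2}{p}\frac{\bar\mu_0\bar r}{n}\sqrt{\mu_0 r/n}$ the paper's proof actually uses), but your stated conclusion $B\lesssim\frac{1}{c_0\log n}\sqrt{n/(\mu_0 r)}\|\Z\|_\infty$ is the correct one, so this is inherited bookkeeping rather than an error in your argument.
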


\begin{lem}
\label{lem:inf_cluster} Suppose $\Z$ is a fixed $n\times n$ matrix.
If $p\ge c_{0}\frac{\mu_{0}\bar{\mu}_{0}r\bar{r}\log n}{n}$ for some
$c_{0}$ sufficiently large, then w.h.p. 
\[
\left\Vert \left(\P_{T}\R_{\Omega}-\P_{T}\right)\Z\right\Vert _{\infty}\le\frac{1}{2}\left\Vert \Z\right\Vert _{\infty}.
\]

\end{lem}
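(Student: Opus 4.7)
The plan is to control $\bigl(\P_T\R_\Omega - \P_T\bigr)\Z$ entry by entry, applying a scalar Bernstein inequality at each $(a,b)\in[n]\times[n]$ and then taking a union bound over the $n^2$ entries. Fix such an $(a,b)$. Using that the re-defined $\P_T$ is self-adjoint (which one verifies from the inclusions $\text{col}(\U)\subseteq\text{col}(\bu)$ and $\text{col}(\V)\subseteq\text{col}(\bv)$, giving $\U\U^{\top}\bu\bu^{\top}=\U\U^{\top}$ and similarly for $\V,\bv$), I would write
\[
\e_a^{\top}\bigl(\P_T\R_\Omega - \P_T\bigr)\Z\,\e_b \;=\; \bigl\langle \P_T(\e_a\e_b^{\top}),\,(\R_\Omega-\II)\Z\bigr\rangle \;=\; \sum_{i,j} X_{ij},
\]
where $X_{ij}:=\bigl(\tfrac{1}{p}\gamma_{ij}-1\bigr)Z_{ij}\bigl(\P_T(\e_a\e_b^{\top})\bigr)_{ij}$ are independent and mean zero.

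Next I would collect the two ingredients needed for Bernstein. For the almost-sure bound, I would expand the three terms in the definition of $\P_T$, and for each term apply Cauchy--Schwarz together with the standard incoherence of both $(\U,\V)$ (parameter $\mu_0$) and $(\bu,\bv)$ (parameter $\bar{\mu}_0$) to obtain
\[
\bigl|\bigl(\P_T(\e_a\e_b^{\top})\bigr)_{ij}\bigr| \;\le\; \frac{3\mu_0\bar{\mu}_0 r\bar{r}}{n^2},
\]
so that $|X_{ij}|\le B:=\frac{3\mu_0\bar{\mu}_0 r\bar{r}}{pn^2}\|\Z\|_\infty$. For the variance proxy, I would invoke the Frobenius-norm estimate~\eqref{eq:basic_ineq_cluster} to get
\[
\sum_{i,j}\mathbb{E}[X_{ij}^2] \;\le\; \frac{\|\Z\|_\infty^2}{p}\,\bigl\|\P_T(\e_a\e_b^{\top})\bigr\|_F^2 \;\le\; \frac{2\mu_0\bar{\mu}_0 r\bar{r}}{pn^2}\|\Z\|_\infty^2 \;=:\;\sigma^2 .
\]

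Applying the scalar Bernstein inequality and absorbing the union-bound logarithm into the tail parameter, w.h.p.
\[
\bigl|\e_a^{\top}\bigl(\P_T\R_\Omega-\P_T\bigr)\Z\,\e_b\bigr| \;\le\; C\bigl(\sqrt{\sigma^2 \log n}+B\log n\bigr) \;\le\; C'\Biggl(\sqrt{\frac{\mu_0\bar{\mu}_0 r\bar{r}\log n}{pn^2}}+\frac{\mu_0\bar{\mu}_0 r\bar{r}\log n}{pn^2}\Biggr)\|\Z\|_\infty.
\]
Under the hypothesis $p\ge c_0\frac{\mu_0\bar{\mu}_0 r\bar{r}\log n}{n}$, which is in fact stronger than the minimal requirement $p\gtrsim\frac{\mu_0\bar{\mu}_0 r\bar{r}\log n}{n^2}$ needed to absorb these two error terms, each is at most $\tfrac14\|\Z\|_\infty$ once $c_0$ is taken sufficiently large. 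A union bound over the $n^2$ index pairs then yields the claimed entry-wise inequality.

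The only step that requires genuine care is the pointwise bound on $\bigl(\P_T(\e_a\e_b^{\top})\bigr)_{ij}$: because the modified $\P_T$ mixes $(\U,\V)$ with $(\bu,\bv)$, one must keep track of three cross-terms and combine both incoherence parameters, using the inequality $\mu_0 r \le \bar{\mu}_0\bar{r}$ implied by $\text{col}(\U)\subseteq\text{col}(\bu)$. Once that entry-wise estimate and~\eqref{eq:basic_ineq_cluster} are in hand, the remainder of the argument is a routine Bernstein-plus-union-bound computation paralleling the proof of Lemma~\ref{lem:inf_2}.
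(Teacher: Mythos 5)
Your proof is correct and follows essentially the same route as the paper's: fix $(a,b)$, use self-adjointness of the re-defined $\P_T$ to write the entry as $\sum_{i,j}\bigl(\tfrac{1}{p}\gamma_{ij}-1\bigr)Z_{ij}\bigl(\P_T(\e_a\e_b^\top)\bigr)_{ij}$, apply Bernstein, and union-bound over the $n^2$ entries. The only cosmetic difference is in the almost-sure bound: you bound $\bigl|\bigl(\P_T(\e_a\e_b^\top)\bigr)_{ij}\bigr|$ directly by expanding the three terms of $\P_T$ entrywise, while the paper passes through Cauchy--Schwarz as $\bigl|\bigl\langle \P_T(\ei\ej^\top),\P_T(\e_a\e_b^\top)\bigr\rangle\bigr|\le\|\P_T(\ei\ej^\top)\|_F\|\P_T(\e_a\e_b^\top)\|_F$ and invokes~\eqref{eq:basic_ineq_cluster} twice; both give the same order $\mu_0\bar{\mu}_0 r\bar{r}/n^2$.
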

We prove these lemmas in Sections~\ref{sub:proof_op_inf_cluster}--\ref{sub:proof_inf_cluster}
to follow. Following the same lines as in the proof of Theorem~\ref{thm:uniform},
we obtain 
\[
\left\Vert \P_{T^{\bot}}\Y\right\Vert \le\sum_{k=1}^{k_{0}}\left\Vert \P_{T^{\bot}}\left(\R_{\Omega_{k}}\P_{T}-\P_{T}\right)\D_{k-1}\right\Vert .
\]
Applying Lemma~\ref{lem:op_inf_cluster} with $\Omega$ replaced
by $\Omega_{k}$ to each summand of the last R.H.S, we get that w.h.p.
\begin{align*}
\left\Vert \P_{T^{\bot}}\Y\right\Vert  & \le c\frac{\bar{\mu}_{0}\bar{r}\log n}{qn}\sum_{k=1}^{k_{0}}\left\Vert \D_{k-1}\right\Vert _{\infty}+c\sqrt{\frac{\bar{\mu}_{0}\bar{r}\log n}{qn}}\sum_{k=1}^{k_{0}}\left\Vert \D_{k-1}\right\Vert _{\infty,2}\\
 & \le\frac{c'}{\sqrt{c_{0}}}\frac{n}{\mu_{0}r}\sum_{k=1}^{k_{0}}\left\Vert \D_{k-1}\right\Vert _{\infty}+\frac{c'}{\sqrt{c_{0}}}\sqrt{\frac{n}{\mu_{0}r}}\sum_{k=1}^{k_{0}}\left\Vert \D_{k-1}\right\Vert _{\infty,2}
\end{align*}
where the last inequality follows from $q\ge\frac{p}{20\log(32\bar{\mu}_{0}\bar{r})}\ge c_{0}\frac{\mu_{0}\bar{\mu}_{0}r\bar{r}\log n}{20n^{2}}.$
Again following the same lines as in the proof of Theorem~\ref{thm:uniform},
but using the new Lemmas \ref{lem:inf_cluster} and~\ref{lem:inf2_cluster},
we can bound the two terms above as 
\begin{align*}
\left\Vert \D_{k-1}\right\Vert _{\infty} & \le\left(\frac{1}{2}\right)^{k-1}\left\Vert \U\V^{\top}\right\Vert _{\infty},\\
\left\Vert \D_{k-1}\right\Vert _{\infty,2} & \le k\left(\frac{1}{2}\right)^{k-1}\sqrt{\frac{n}{\mu_{0}r}}\left\Vert \U\V^{\top}\right\Vert _{\infty}+\left(\frac{1}{2}\right)^{k-1}\left\Vert \U\V^{\top}\right\Vert _{\infty,2}.
\end{align*}
It follows that 
\[
\left\Vert \P_{T^{\bot}}\Y\right\Vert \le\frac{c''}{\sqrt{c_{0}}}\left(\frac{n}{\mu_{0}r}\left\Vert \U\V^{\top}\right\Vert _{\infty}+\sqrt{\frac{n}{\mu_{0}r}}\left\Vert \U\V^{\top}\right\Vert _{\infty,2}\right).
\]
The inequality $\left\Vert \P_{T^{\bot}}(\Y)\right\Vert \le\frac{1}{2}$
then follows from the incoherence conditions~\eqref{eq:incoherence}
of $\U$ and $\V$ provided $c_{0}$ is sufficiently large. This proves
Condition 2(b) in Proposition~\ref{prop:opt_cond_cluster} and hence
completes the proof of Theorem~\ref{thm:cluster}.

\subsection{Proof of Lemma~\ref{lem:op_cluster}\label{sub:proof_op_cluster}}

The proof of the first inequality is identical to that of Lemma~\ref{lem:op}
except that we use~\eqref{eq:basic_ineq_cluster} instead of~\eqref{eq:basic_inequality}
(cf. Theorem 4.1 in~\cite{candes2009exact} and Lemma 11 in~\cite{chen2011LSarxiv}). 

To prove the second inequality, recall that $\gamma_{ij}:=\mathbb{I}\left((i,j)\in\Omega\right)$
is the indicator variable. For $(i,j)\in[n]\times[n]$, let $\mathcal{S}_{(ij)}$
be the operator that maps $\Z\in\mathbb{R}^{n\times n}$ to $\left(\frac{1}{p}\gamma_{ij}-1\right)\left\langle \P_{T^{\bot}}\Z,\ei\ej^{\top}\right\rangle \P_{T^{\bot}}\left(\ei\ej^{\top}\right)$.
Observe that $\left\{ \mathcal{S}_{(ij)}\right\} $ are independent
zero-mean self-adjoint operators, and 
\[
\frac{1}{p}\mathcal{P}_{T^{\bot}}\mathcal{P}_{\Omega}\mathcal{P}_{T^{\bot}}-\mathcal{P}_{T^{\bot}}=\sum_{i,j}\mathcal{S}_{(ij)}.
\]
By~\eqref{eq:basic_ineq_cluster2}, we know that for any $\Z$,
\[
\left\Vert \mathcal{S}_{(ij)}\Z\right\Vert _{F}\le\frac{1}{p}\left\Vert \Z\right\Vert _{F}\left\Vert \P_{T^{\bot}}\left(\ei\ej^{\top}\right)\right\Vert _{F}^{2}\le\frac{\bar{\mu}_{0}^{2}\bar{r}^{2}}{pn^{2}}\left\Vert \Z\right\Vert _{F},
\]
and
\begin{align*}
\left\Vert \mathbb{E}{\textstyle \sum_{i,j}}\mathcal{S}_{(ij)}^{2}\Z\right\Vert _{F} & =\left\Vert \mathbb{E}\sum_{i,j}\left(\frac{1}{p}\gamma_{ij}-1\right)^{2}\left\langle \P_{T^{\bot}}\Z,\ei\ej^{\top}\right\rangle \left\Vert \P_{T^{\bot}}\left(\ei\ej^{\top}\right)\right\Vert _{F}^{2}\P_{T^{\bot}}\left(\ei\ej^{\top}\right)\right\Vert _{F}\\
 & \le\frac{1-p}{p}\left(\max_{i,j}\left\Vert \P_{T^{\bot}}\left(\ei\ej^{\top}\right)\right\Vert _{F}^{2}\right)\left\Vert {\textstyle \sum_{i,j}}\left\langle \P_{T^{\bot}}\Z,\ei\ej^{\top}\right\rangle \ei\ej^{\top}\right\Vert _{F}\le\frac{\bar{\mu}_{0}^{2}\bar{r}_{0}^{2}}{pn^{2}}\left\Vert \mathcal{P}_{T^{\bot}}\Z\right\Vert _{F}.
\end{align*}
This means 
\[
\left\Vert \mathcal{S}_{(ij)}\right\Vert _{op}\le\frac{\bar{\mu}_{0}^{2}\bar{r}^{2}}{pn^{2}}\qquad\text{and}\qquad\left\Vert \mathbb{E}{\textstyle \sum_{i,j}}\mathcal{S}_{(ij)}^{2}\right\Vert _{op}\le\frac{\bar{\mu}_{0}^{2}\bar{r}_{0}^{2}}{pn^{2}}.
\]
Applying the the matrix Bernstein inequality in Theorem~\ref{lem:matrix_bernstein},
we obtain w.h.p.
\[
\left\Vert \frac{1}{p}\mathcal{P}_{T^{\bot}}\mathcal{P}_{\Omega}\mathcal{P}_{T^{\bot}}-\mathcal{P}_{T^{\bot}}\right\Vert _{op}\le c\frac{\bar{\mu}_{0}^{2}\bar{r}^{2}}{pn^{2}}\log\left(2n\right)+\frac{\bar{\mu}_{0}\bar{r}}{\sqrt{p}n}\sqrt{4c\log(2n)}\le c'\frac{\bar{\mu}_{0}\bar{r}\sqrt{\bar{\mu}_{0}\bar{r}\log(2n)}}{\sqrt{p\mu_{0}r}n}
\]
for some constant $c'$, where the last inequality follows from $\mu_{0}r\le\bar{\mu}_{0}\bar{r}$
and the assumption $p\ge c_{0}\frac{\mu_{0}\bar{\mu}_{0}r\bar{r}}{n^{2}}\log(2n)$.
It follows that
\begin{align*}
\frac{1}{\sqrt{p}}\left\Vert \mathcal{P}_{\Omega}\mathcal{P}_{T^{\bot}}\right\Vert _{op}\le\sqrt{\left\Vert \frac{1}{p}\mathcal{P}_{T^{\bot}}\mathcal{P}_{\Omega}\mathcal{P}_{T^{\bot}}\right\Vert _{op}} & \le\sqrt{\left\Vert \frac{1}{p}\mathcal{P}_{T^{\bot}}\mathcal{P}_{\Omega}\mathcal{P}_{T^{\bot}}-\mathcal{P}_{T^{\bot}}\right\Vert _{op}+\left\Vert \mathcal{P}_{T^{\bot}}\right\Vert _{op}}\\
 & \le\sqrt{c'\sqrt{\frac{\bar{\mu}_{0}^{3}\bar{r}^{3}\log(2n)}{p\mu_{0}rn^{2}}}+1}\le\sqrt{\frac{2\bar{\mu}_{0}\bar{r}}{\mu_{0}r}},
\end{align*}
where in the last inequality we again use the assumption $p\ge c_{0}\frac{\mu_{0}\bar{\mu}_{0}r\bar{r}}{n^{2}}\log(2n).$

\subsection{Proof of Lemma~\ref{lem:op_inf_cluster}\label{sub:proof_op_inf_cluster}}

We can write 
\[
\P_{T^{\bot}}\left(\R_{\Omega}-\II\right)\Z=\sum_{i,j}\S_{(ij)}:=\sum_{i,j}\left(\frac{1}{p}\gamma_{ij}-1\right)Z_{ij}\P_{T^{\bot}}\left(\ei\ej^{\top}\right),
\]
where $\left\{ \S_{(ij)}\right\} $ are independent $n\times n$ matrices
satisfying $\mathbb{E}[\S_{(ij)}]=0$ and 
\[
\left\Vert \S_{(ij)}\right\Vert \le\frac{1}{p}\left|Z_{ij}\right|\left\Vert \P_{T^{\bot}}(\ei\ej^{\top})\right\Vert _{F}\le\frac{\bar{\mu}_{0}\bar{r}}{pn}\left\Vert \Z\right\Vert _{\infty}.
\]
by~\eqref{eq:basic_ineq_cluster2}. Moreover, since $\mbox{col}(\U)\subseteq\mbox{col}(\bar{\U})$,
$\mbox{col}(\V)\subseteq\mbox{col}(\bv)$ and $\bu\bu^{\top}-\U\U^{\top}$,
$\bv\bv^{\top}-\V\V^{\top}$ are projections, we have 
\[
\left(\P_{T^{\bot}}\left(\ei\ej^{\top}\right)\right)^{\top}\P_{T^{\bot}}\left(\ei\ej^{\top}\right)=\left|\ei^{\top}\left(\bar{\U}\bar{\U}^{\top}-\U\U^{\top}\right)\ei\right|\left(\bv\bv^{\top}-\V\V^{\top}\right)\ej\ej^{\top}\left(\bv\bv^{\top}-\V\V^{\top}\right)
\]
and thus 
\begin{align*}
\left\Vert \mathbb{E}{\textstyle \sum_{i,j}}\S_{(ij)}^{\top}\S_{(ij)}\right\Vert = & \frac{1-p}{p}\left\Vert \sum_{j}\left(\bv\bv^{\top}-\V\V^{\top}\right)\ej\ej^{\top}\left(\bv\bv^{\top}-\V\V^{\top}\right)\sum_{i}\left|\ei^{\top}\left(\bar{\U}\bar{\U}^{\top}-\U\U^{\top}\right)\ei\right|Z_{ij}^{2}\right\Vert \\
\le & \frac{1-p}{p}\left\Vert \sum_{j}\ej\ej^{\top}\sum_{i}\left|\ei^{\top}\left(\bar{\U}\bar{\U}^{\top}-\U\U^{\top}\right)\ei\right|Z_{ij}^{2}\right\Vert \le\frac{\bar{\mu}_{0}\bar{r}}{pn}\left\Vert \Z\right\Vert _{\infty,2}^{2}.
\end{align*}
We can bound $\left\Vert \mathbb{E}\sum_{i,j}\S_{(ij)}\S_{(ij)}^{\top}\right\Vert $
by the same quantity in a similar manner. Applying the matrix Bernstein
inequality in Theorem~\ref{lem:matrix_bernstein} proves the lemma.

\subsection{Proof of Lemma~\ref{lem:inf2_cluster}}

Fix $b\in[n]$. The $b$-th column of the matrix $\left(\P_{T}\R_{\Omega}-\P_{T}\right)\Z$
can be written as 
\[
\left(\left(\P_{T}\R_{\Omega}-\P_{T}\right)\Z\right)\e_{b}=\sum_{i,j}\s_{(ij)}:=\sum_{i,j}\left(\frac{1}{p}\gamma_{ij}-1\right)Z_{ij}\P_{T}(\ei\ej^{\top})\e_{b},
\]
where $\{\s_{(ij)}\}$ are independent column vectors in $\mathbb{R}^{n}$.
Note that $\mathbb{E}\left[\s_{(ij)}\right]=0$ and

\[
\left\Vert \s_{(ij)}\right\Vert _{2}\le\frac{1}{p}\left|Z_{ij}\right|\left\Vert \P_{T}(\ei\ej^{\top})\e_{b}\right\Vert _{2}\le\frac{2}{p}\frac{\bar{\mu}_{0}\bar{r}}{n}\sqrt{\frac{\mu_{0}r}{n}}\left\Vert \Z\right\Vert _{\infty}\le\frac{2}{c_{0}\log n}\sqrt{\frac{n}{\mu_{0}r}}\left\Vert \Z\right\Vert _{\infty}
\]
by~\eqref{eq:basic_ineq_cluster3} and the assumption on $p$. We
also have
\[
\left|\mathbb{E}\sum_{i,j}\s_{(ij)}^{\top}\s_{(ij)}\right|=\left|\sum_{i,j}\mathbb{E}\left[\left(\frac{1}{p}\gamma_{ij}-1\right)^{2}\right]Z_{ij}^{2}\left\Vert \P_{T}(\ei\ej^{\top})\e_{b}\right\Vert _{2}^{2}\right|=\frac{1-p}{p}\sum_{i,j}Z_{ij}^{2}\left\Vert \P_{T}(\ei\ej^{\top})\e_{b}\right\Vert _{2}^{2}.
\]
Because 
\begin{align*}
\left\Vert \P_{T}(\ei\ej^{\top})\e_{b}\right\Vert _{2} & =\left\Vert \U\U^{\top}\ei\ej^{\top}\bv\bv^{\top}\e_{b}+(\bu\bu^{\top}-\U\U^{\top})\ei\ej^{\top}\V\V^{\top}\e_{b}\right\Vert _{2}\\
 & \le\sqrt{\frac{\mu_{0}r}{n}}\left|\ej^{\top}\bv\bv^{\top}\e_{b}\right|+\sqrt{\frac{\bar{\mu}_{0}\bar{r}}{n}}\left|\ej^{\top}\V\V^{\top}\e_{b}\right|,
\end{align*}
it follows that 
\begin{align*}
\left|\mathbb{E}\sum_{i,j}\s_{(ij)}^{\top}\s_{(ij)}\right| & \le\frac{2}{p}\sum_{i,j}Z_{ij}^{2}\frac{\mu_{0}r}{n}\left|\ej^{\top}\bv\bv^{\top}\e_{b}\right|^{2}+\frac{2}{p}\sum_{i,j}Z_{ij}^{2}\frac{\bar{\mu}_{0}\bar{r}}{n}\left|\ej^{\top}\V\V^{\top}\e_{b}\right|^{2}\\
 & =\frac{2}{p}\frac{\mu_{0}r}{n}\sum_{j}\left|\ej^{\top}\bv\bv^{\top}\e_{b}\right|^{2}\sum_{i}Z_{ij}^{2}+\frac{2}{p}\frac{\bar{\mu}_{0}\bar{r}}{n}\sum_{j}\left|\ej^{\top}\V\V^{\top}\e_{b}\right|^{2}\sum_{i}Z_{ij}^{2}\\
 & \le\frac{2}{p}\frac{\mu_{0}r}{n}\left\Vert \Z\right\Vert _{\infty,2}^{2}\sum_{j}\left|\ej^{\top}\bv\bv^{\top}\e_{b}\right|^{2}+\frac{2}{p}\frac{\bar{\mu}_{0}\bar{r}}{n}\left\Vert \Z\right\Vert _{\infty,2}^{2}\sum_{j}\left|\ej^{\top}\V\V^{\top}\e_{b}\right|^{2}\\
 & =\frac{2}{p}\frac{\mu_{0}r}{n}\left\Vert \Z\right\Vert _{\infty,2}^{2}\left\Vert \bv^{\top}\e_{b}\right\Vert _{2}^{2}+\frac{2}{p}\frac{\bar{\mu}_{0}\bar{r}}{n}\left\Vert \Z\right\Vert _{\infty,2}^{2}\left\Vert \V^{\top}\e_{b}\right\Vert _{2}^{2}\\
 & \le\frac{4}{p}\frac{\mu_{0}r\bar{\mu}_{0}\bar{r}}{n^{2}}\left\Vert \Z\right\Vert _{\infty,2}^{2}\le\frac{4}{c_{0}\log n}\left\Vert \Z\right\Vert _{\infty,2}^{2}.
\end{align*}
We can bound $\left\Vert \mathbb{E}\left[\sum_{i,j}\s_{(ij)}^{\top}\s_{(ij)}\right]\right\Vert $
in a similar manner. Treating $\{\S_{(ij)}\}$ as $n\times1$ matrices
and applying the Matrix Bernstein inequality in Theorem~\ref{lem:matrix_bernstein},
we get 
\[
\left\Vert \left(\left(\P_{T}\R_{\Omega}-\P_{T}\right)\Z\right)\e_{b}\right\Vert _{2}\le\frac{1}{2}\sqrt{\frac{n}{\mu_{0}r}}\left\Vert \Z\right\Vert _{\infty}+\frac{1}{2}\left\Vert \Z\right\Vert _{\infty,2},\quad\textrm{w.h.p.}
\]
provided $c_{0}$ in the statement of the lemma is sufficiently large.
In a similar fashion we can prove that the $\left\Vert \e_{a}^{\top}\left(\left(\P_{T}\R_{\Omega}-\P_{T}\right)\Z\right)\right\Vert _{2}$
is bounded by the same quantity w.h.p. The lemma follows from a union
bound over all $(a,b)\in[n]\times[n]$.

\subsection{Proof of Lemma~\ref{lem:inf_cluster}\label{sub:proof_inf_cluster}}

Fix $(a,b)\in[n]\times[n]$. We can write the $(a,b)$ entry of the
matrix $\left(\P_{T}\R_{\Omega}-\P_{T}\right)\Z$ as 
\begin{align*}
\left[\left(\P_{T}\R_{\Omega}-\P_{T}\right)\Z\right]_{ab} & =\sum_{i,j}s_{ij}:=\sum_{i,j}\left(\frac{1}{p}\gamma_{ij}-1\right)Z_{ij}\left\langle \ei\ej^{\top},\P_{T}(\e_{a}\e_{b}^{\top})\right\rangle ,
\end{align*}
where $s_{ij}\in\mathbb{R}$ are independent zero-mean random variables.
By~\eqref{eq:basic_ineq_cluster} and the assumption on $p$, we
have 
\[
\left|s_{ij}\right|\le\frac{1}{p}\left|Z_{ij}\right|\left\Vert \P_{T}(\ei\ej^{\top})\right\Vert _{F}\left\Vert \P_{T}(\e_{a}\e_{b}^{\top})\right\Vert _{F}\le\frac{1}{2c_{0}\log n}\left\Vert \Z\right\Vert _{\infty}.
\]
and
\begin{align*}
\left|\mathbb{E}\sum_{i,j}s_{ij}^{2}\right| & =\sum_{i,j}\mathbb{E}\left[\left(\frac{1}{p}\gamma_{ij}-1\right)^{2}\right]Z_{ij}^{2}\left\langle \ei\ej^{\top},\P_{T}(\e_{a}\e_{b}^{\top})\right\rangle ^{2}\\
 & \le\frac{1}{p}\left\Vert \Z\right\Vert _{\infty}^{2}\sum_{i,j}\left\langle \ei\ej^{\top},\P_{T}(\e_{a}\e_{b}^{\top})\right\rangle ^{2}\\
 & =\frac{1}{p}\left\Vert \Z\right\Vert _{\infty}^{2}\left\Vert \P_{T}(\e_{a}\e_{b}^{\top})\right\Vert _{F}^{2}\le\frac{1}{2c_{0}\log n}\left\Vert \Z\right\Vert _{\infty}^{2}.
\end{align*}
Applying the Bernstein inequality in Theorem~\ref{lem:matrix_bernstein},
we conclude that w.h.p. $\left|\left[\left(\P_{T}\R_{\Omega}\P_{T}-\P_{T}\right)\Z\right]_{ab}\right|\le\frac{1}{2}\left\Vert \Z\right\Vert _{(\infty)}$
for $c_{0}$ sufficiently large. The lemma follows from a union bound
over all $(a,b)\in[n]\times[n]$.

\section{Proof of Theorem~\ref{thm:LS}\label{sec:proof_LS}}

\subsection{Part 1 of the theorem}

We first describe an equivalent formulation of the planted clique
problem. Let $\bar{\A}\in\mathbb{R}^{n\times n}$ be the adjacency
matrix of the graph, and $\L^{*}\in\left\{ 0,1\right\} ^{n\times n}$
be the matrix with $L_{ij}^{*}=1$ if and only if the nodes $i$ and
$j$ are both in the clique. Let $\bar{\S}^{*}:=\bar{\A}-\L^{*}$.
Note that for each $(i,j)\notin\mbox{support}(\L^{*})=\left\{ (i,j):L_{ij}^{*}=1\right\} $,
the pair $\bar{S}_{ij}^{*}=\bar{S}_{ji}^{*}$ is non-zero with probability
$1/2$; for each $(i,j)\in\mbox{support}(\L^{*})$, we always have
$\bar{S}_{ij}^{*}=\bar{S}_{ji}^{*}=0$.

We reduce the planted problem above to the matrix decomposition problem
using subsampling. Given the matrix $\bar{\A}$, we set each $\bar{A}_{ij}$
to zero with probability $\frac{2}{3}$ independently, and let $\A$
be the resulting matrix. If we let $\S^{*}:=\A-\L^{*}$, then each
pair $S_{ij}^{*}=S_{ji}^{*}$ is non-zero with probability $\tau=\frac{1}{3}$.
Moreover, the matrix $\L^{*}$ has rank $1$ and satisfies the standard
and joint incoherence conditions~(\ref{eq:incoherence}) and~(\ref{eq:strong_incoherence})
with parameters $\mu_{0}=1/n_{\min}$ and $\mu_{1}=n^{2}/n_{\min}^{2}$.
Hence recovering $\left(\L^{*},\S^{*}\right)$ from $\A$ is a special
case of the matrix decomposition problem. If there exists a polynomial-time
algorithm that, for all $n$, finds $\L^{*}$ given $\A$ with probability
at least $\frac{1}{2}$ when 
\[
\frac{\mu_{1}^{1-\epsilon'}}{n}=\frac{n^{1-2\epsilon'}}{n_{\min}^{2(1-\epsilon')}}\ge1,
\]
then it means this algorithm recovers the planted clique with $n_{\min}\le n^{\frac{1}{2}-\frac{\epsilon'}{2(1-\epsilon')}}$
from $\bar{\A}$, which violates the assumption \textbf{A1}.

\subsection{Part 2 of the theorem}

For simplicity, we assume $K:=\frac{n}{\mu_{0}r}$ and $ \frac{n}{2} $ are both integers. Let $ M := n/2 $.
Suppose $\L^{*}$ takes value uniformly at random from a set $\mathcal{L} =\left\{ \L^{(1)},\L^{(2)},\ldots,\L^{(M)}\right\} \subseteq\mathbb{R}^{n\times n}$
which we now define. Let $\L^{(0)}$ be the symmetric block-diagonal matrix 
with $r$ contingent blocks of size $K\times K$, where $L_{ij}^{(0)}=1$
inside the blocks and $0$ otherwise, and the blocks are in the first $ rK $ columns. Note that $ \mu_0 \ge 2 $ by assumption, so $ rK\le \frac{n}{2} $ and thus the last $ \frac{n}{2} $ rows and columns of $ \L^{(0)} $ are all zeros.
For $ l=1,\ldots, M $, let $ \L^{(l)} $ be the matrix obtained from $ \L^{(0)} $ by swapping the first row and column with the $ (n/2 + l) $-th row and column, respectively. In other words, the first block of $ \L^{(l)} $ corresponds to the rows and columns with indices $ \{2,3,\ldots,K, n/2+1\} $, and the other $ r-1 $ blocks are the same as those in $ \L^{(0)} .$ It is easy to check that each
$\L^{(l)}$ has rank $r$ and satisfies the standard incoherence condition~\eqref{eq:incoherence}
with parameter $\mu_{0}$.
We further assume that conditioned on $\L^{*}$, the matrix $ \S^* $ is distributed as follows: $S_{ij}^{*}$
equals $-1$ with probability $\tau=1/3$ and $0$ otherwise for $(i,j)\in\mbox{support}(\L^{*})$,
and $S_{ij}^{*}$ equals $1$ with probability $\tau=1/3$ and $0$
otherwise  for $(i,j)\not\in\mbox{support}(\L^{*})$. Finally, recall that $ \A = \L^* + \S^* $.

We now compute an upper bound on the mutual information $ I(\L^*; \A) $. Let $ \mathbb{P}^{(l)} $ be the distribution of $ \A $ conditioned on $ \L^* = \L^{(l)} $, and we use $ D\left(\mathbb{P}^{(l)} \Vert \mathbb{P}^{(l')} \right) $ to denote the  Kullback-Leibler (KL)  divergence between $ \mathbb{P}^{(l)}$ and $ \mathbb{P}^{(l')} $. By definition of the mutual information and the convexity of the KL divergence, we have
\begin{align*}
I(\L^*; \A) 
= \frac{1}{M} \sum_{l=1}^{M} D\left(\mathbb{P}^{(l)} \Vert \frac{1}{M} \sum_{l'=1}^{M} \mathbb{P}^{(l')} \right) 
\le \frac{1}{M^2} \sum_{l,l'=1}^{M} D\left(\mathbb{P}^{(l)} \Vert \mathbb{P}^{(l')} \right). 
\end{align*}
With slight abuse of notation, we use $ D(q_1\Vert q_2 ) : = q_1 \log \frac{q_1}{q_2} + (1-q_1) \log \frac{1-q_1}{1-q_2}$ to denote the KL divergence between two Bernoulli distributions with parameters $ q_1 $ and $ q_2 $. Direct computation gives
\begin{align*}
D\left(\mathbb{P}^{(l)} \Vert \mathbb{P}^{(l')} \right) 
= (K+1) D\left(\frac{2}{3} \Big \Vert \frac{1}{3} \right) + (K+1) D\left(\frac{1}{3} \Big \Vert \frac{2}{3} \right)
\le K+1
\end{align*}
for all $ l,l' = 1,\ldots, M $, where the inequality above follows from $ \log x \le x-1. $ It follows that $ I(\L^* ; \A)  \le K+1. $

We now apply the Fano's inequality~\cite{cover2012information} to obtain that for any measurable
function $\hat{\L}$ of $\A$,
\[
\mathbb{P}\left(\hat{\L}\neq\L^{*}\right)
\ge 1-\frac{I\left(\L^{*};\A\right)+\log 2}{\log M}
\ge 1-\frac{K+1+\log 2}{\log(n/2)}
\ge\frac{1}{2},
\]
where the probability is with respect to the randomness of $\L^{*}$
and $\S^{*}$, and the last inequality holds when $\frac{\log n}{12K} = \frac{\mu_{0}r \log n}{12n} \ge 1$ and $ n \ge 10 $.
Because the supremum is lower bounded by the average, we obtain 
\[
\sup_{\L^{*}\in\mathcal{L}}\mathbb{P}\left(\hat{\L}\neq\L^{*}\right)\ge\frac{1}{2},
\]
where the probability is with respect to the randomness of $\S^{*}$.

\bibliographystyle{plain}
\bibliography{incopt}

\begin{thebibliography}{10}

\bibitem{agarwal2012decomposition}
Alekh Agarwal, Sahand Negahban, and Martin~J Wainwright.
\newblock Noisy matrix decomposition via convex relaxation: Optimal rates in
  high dimensions.
\newblock {\em The Annals of Statistics}, 40(2):1171--1197, 2012.

\bibitem{alon2007testing}
Noga Alon, Alexandr Andoni, Tali Kaufman, Kevin Matulef, Ronitt Rubinfeld, and
  Ning Xie.
\newblock Testing k-wise and almost k-wise independence.
\newblock In {\em Proceedings of the thirty-ninth annual ACM symposium on
  Theory of computing}, pages 496--505. ACM, 2007.

\bibitem{alon1998hiddenClique}
Noga Alon, Michael Krivelevich, and Benny Sudakov.
\newblock Finding a large hidden clique in a random graph.
\newblock {\em Random Structures and Algorithms}, 13(3-4):457--466, 1998.

\bibitem{applebaum2010cryptographyClique}
Benny Applebaum, Boaz Barak, and Avi Wigderson.
\newblock Public-key cryptography from different assumptions.
\newblock In {\em Proceedings of the 42nd ACM Symposium on Theory of
  Computing}, pages 171--180. ACM, 2010.

\bibitem{berthet2013colt_sparsePCA}
Quenti Berthet and Philippe Rigollet.
\newblock Complexity theoretic lower bounds for sparse principal component
  detection.
\newblock {\em Journal of Machine Learning Research: Workshop and Conference
  Proceedings}, 30:1046--1066, 2013.

\bibitem{berthet2012sparsePCA}
Quentin Berthet and Philippe Rigollet.
\newblock Optimal detection of sparse principal components in high dimension.
\newblock {\em Annals of Statistics}, 41(1):1780--1815, 2013.

\bibitem{bhatia1987perturbation}
Rajendra Bhatia.
\newblock {\em Perturbation Bounds for Matrix Eigenvalues}.
\newblock Longman, Harlow, 1987.

\bibitem{cai2010singular}
Jian-Feng Cai, Emmanuel~J. Cand{\`e}s, and Zuowei Shen.
\newblock A singular value thresholding algorithm for matrix completion.
\newblock {\em SIAM Journal on Optimization}, 20(4):1956--1982, 2010.

\bibitem{candes2009robustPCA}
Emmanuel~J. Cand{\`e}s, Xiaodong Li, Yi~Ma, and John Wright.
\newblock Robust principal component analysis?
\newblock {\em Journal of the ACM}, 58(3):11, 2011.

\bibitem{candes2009exact}
Emmanuel~J. Candes and Benjamin Recht.
\newblock {Exact matrix completion via convex optimization}.
\newblock {\em Foundations of Computational Mathematics}, 9(6):717--772, 2009.

\bibitem{candes2010NearOptimalMC}
Emmanuel~J. Candes and Terence Tao.
\newblock {The power of convex relaxation: Near-optimal matrix completion}.
\newblock {\em IEEE Transactions on Information Theory}, 56(5):2053--2080,
  2010.

\bibitem{chandrasekaran2011siam}
Venkat Chandrasekaran, Sujay Sanghavi, Pablo Parrilo, and Alan Willsky.
\newblock Rank-sparsity incoherence for matrix decomposition.
\newblock {\em SIAM {Journal} on {Optimization}}, 21(2):572--596, 2011.

\bibitem{chen2013incoherence_arxiv}
Yudong Chen.
\newblock Incoherence-optimal matrix completion.
\newblock {\em arXiv preprint arXiv:1310.0154}, 2013.

\bibitem{chen2013coherent}
Yudong {Chen}, Srinadh {Bhojanapalli}, Sujay {Sanghavi}, and Rachel {Ward}.
\newblock {Coherent Matrix Completion}.
\newblock {\em Proceedings of the International Conference in Machine
  Learning}, October 2014.

\bibitem{chen2011LSarxiv}
Yudong Chen, Ali Jalali, Sujay Sanghavi, and Constantine Caramanis.
\newblock Low-rank matrix recovery from errors and erasures.
\newblock {\em IEEE Transactions on Information Theory}, 59(7):4324--4337,
  2013.

\bibitem{chen2013robustSpectralMCicml}
Yuxin Chen and Yuejie Chi.
\newblock Spectral compressed sensing via structured matrix completion.
\newblock In {\em Proceedings of the International Conference on Machine
  Learning}, 2013.

\bibitem{chen2013robustSpectralMC}
Yuxin Chen and Yuejie Chi.
\newblock Robust spectral compressed sensing via structured matrix completion.
\newblock {\em IEEE Transactions on Information Theory}, 60(10):6576--6601,
  2014.

\bibitem{cover2012information}
Thomas~M. Cover and Joy~A. Thomas.
\newblock {\em Elements of information theory}.
\newblock Wiley, second edition, 2006.

\bibitem{Feldman2012statAlg}
Vitaly Feldman, Elena Grigorescu, Lev Reyzin, Santosh Vempala, and Ying Xiao.
\newblock Statistical algorithms and a lower bound for detecting planted
  cliques.
\newblock In {\em Proceedings of the 45th Annual ACM Symposium on Theory of
  Computing}, pages 655--664. ACM, 2013.

\bibitem{gross2009anybasis}
David Gross.
\newblock Recovering low-rank matrices from few coefficients in any basis.
\newblock {\em IEEE Transactions on Information Theory}, 57(3):1548--1566,
  2011.

\bibitem{gross2010noreplacement}
David Gross and Vincent Nesme.
\newblock {Note on sampling without replacing from a finite collection of
  matrices}.
\newblock {\em Arxiv preprint arXiv:1001.2738}, 2010.

\bibitem{Hazan2011Nash}
Elad Hazan and Robert Krauthgamer.
\newblock How hard is it to approximate the best nash equilibrium?
\newblock {\em SIAM Journal on Computing}, 40(1):79--91, 2011.

\bibitem{Hsu2010RobustDecomposition}
Daniel Hsu, Sham~M Kakade, and Tong Zhang.
\newblock Robust matrix decomposition with sparse corruptions.
\newblock {\em IEEE Transactions on Information Theory}, 57(11):7221--7234,
  2011.

\bibitem{jain2013altMin}
Prateek Jain, Praneeth Netrapalli, and Sujay Sanghavi.
\newblock Low-rank matrix completion using alternating minimization.
\newblock In {\em Proceedings of the 45th Annual ACM Symposium on Theory of
  Computing}, pages 665--674. ACM, 2013.

\bibitem{Juel00cliqueCrypto}
Ari Juels and Marcus Peinado.
\newblock Hiding cliques for cryptographic security.
\newblock {\em Designs, Codes and Cryptography}, 20(3):269--280, 2000.

\bibitem{keshavan2009matrixafew}
Raghunandan~H Keshavan, Andrea Montanari, and Sewoong Oh.
\newblock Matrix completion from a few entries.
\newblock {\em IEEE Transactions on Information Theory}, 56(6):2980--2998,
  2010.

\bibitem{koiran2012rip}
Pascal Koiran and Anastasios Zouzias.
\newblock Hidden cliques and the certification of the restricted isometry
  property.
\newblock {\em arXiv preprint arXiv:1211.0665}, 2012.

\bibitem{li2013constantCorruption}
Xiaodong Li.
\newblock Compressed sensing and matrix completion with constant proportion of
  corruptions.
\newblock {\em Constructive Approximation}, 37(1):73--99, 2013.

\bibitem{Lin2009_ALM}
Zhouchen Lin, Minming Chen, Leqin Wu, and Yi~Ma.
\newblock {The Augmented Lagrange Multiplier Method for Exact Recovery of
  Corrupted Low-Rank Matrices}.
\newblock {\em UIUC Technical Report UILU-ENG-09-2215}, 2009.

\bibitem{ma2013submatrix}
Zongming Ma and Yihong Wu.
\newblock Computational barriers in minimax submatrix detection.
\newblock {\em arXiv preprint arXiv:1309.5914}, 2013.

\bibitem{mcsherry2001spectralpartitioning}
Frank McSherry.
\newblock Spectral partitioning of random graphs.
\newblock In {\em Proceedings of 42nd IEEE Symposium on Foundations of Computer
  Science}, pages 529--537, 2001.

\bibitem{Recht2009SimplerMC}
Benjamin Recht.
\newblock A simpler approach to matrix completion.
\newblock {\em Journal of Machine Learning Research}, 12:3413--3430, 2011.

\bibitem{rossman2010clique}
Benjamin Rossman.
\newblock {\em Average-case complexity of detecting cliques}.
\newblock PhD thesis, Massachusetts Institute of Technology, 2010.

\bibitem{tropp2010matrixmtg}
Joel~A. Tropp.
\newblock User-friendly tail bounds for sums of random matrices.
\newblock {\em Foundations of Computational Mathematics}, 12(4):389--434, 2012.

\bibitem{wu2013rating}
Jiaming Xu, Rui Wu, Kai Zhu, Bruce Hajek, R.~Srikant, and Lei Ying.
\newblock Jointly clustering rows and columns of binary matrices: Algorithms
  and trade-offs.
\newblock In {\em The 2014 ACM International Conference on Measurement and
  Modeling of Computer Systems}, SIGMETRICS '14, pages 29--41, 2014.

\bibitem{xu2013sideinfo}
Miao Xu, Rong Jin, and Zhi-Hua Zhou.
\newblock Speedup matrix completion with side information: Application to
  multi-label learning.
\newblock In {\em Advances in Neural Information Processing Systems}, pages
  2301--2309, 2013.

\bibitem{yi2013assisted}
Jinfeng Yi, Lijun Zhang, Rong Jin, Qi~Qian, and Anil Jain.
\newblock Semi-supervised clustering by input pattern assisted pairwise
  similarity matrix completion.
\newblock In {\em Proceedings of The 30th International Conference on Machine
  Learning}, pages 1400--1408, 2013.

\end{thebibliography}

\end{document}